\documentclass{jocg}

\usepackage{fancyhdr}
\newcommand{\cleanjocgheaders}{%
	\fancyhf{}%
	 \fancyfoot[C]{\thepage}%
	\renewcommand{\headrulewidth}{0pt}
	\renewcommand{\footrulewidth}{0pt}%
}
\cleanjocgheaders
\fancypagestyle{plain}{\cleanjocgheaders}
\AtBeginDocument{%
	\pagestyle{fancy}%
	\cleanjocgheaders%
}
\makeatletter
\renewenvironment{abstract}
{%
	\par\bigskip
	\hrule
	\vskip 1ex
	\noindent\textsc{Abstract.}\ %
}{%
	\par\vskip 2ex
	\hrule
	\bigskip
}
\makeatother

\usepackage{amsmath,amsfonts}

\usepackage[mathlines]{lineno}

\title{%
	\MakeUppercase{Sharp approximate Carathéodory theorem
		\newline and application  to iterated Delaunay refinement
	}%
}

\author{%
	Raphaël~Tinarrage%
	\thanks{\affil{Institute of Science and Technology Austria}, \email{raphael.tinarrage@ist.ac.at}}%
	\orcid{0000-0002-1404-1095}
}

\usepackage{amsmath,amssymb,amsthm}
\usepackage{thmtools}
\usepackage{thm-restate}
\theoremstyle{plain}
\newtheorem{theorem}{Theorem}
\declaretheorem[name=Proposition]{proposition}
\declaretheorem[name=Lemma]{lemma}
\declaretheorem[name=Corollary]{corollary}
\newtheorem{remark}{Remark}

\usepackage{float}
\usepackage{longtable}
\usepackage{cleveref}
\usepackage{subcaption}
\usepackage{microtype}
\usepackage{placeins} 
\usepackage{bm} 
\usepackage{array} 

\usepackage{tocloft}

\newcommand{\A}{\mathcal A}
\newcommand{\E}{\mathbb E}
\newcommand{\N}{\mathbb N}
\newcommand{\PP}{\mathbb P}
\newcommand{\R}{\mathbb R}
\newcommand{\geomreal}[1] { |#1| } 
\newcommand{\Del}[1] { \mathrm{Del}(#1) } 
\renewcommand{\d} { \mathrm{d} } 

\DeclareMathOperator{\sub}{sub}
\DeclareMathOperator{\conv}{conv}
\DeclareMathOperator{\circum}{circ}
\DeclareMathOperator{\maxcirc}{max\,circ}
\DeclareMathOperator{\maxdiam}{max\,diam}
\DeclareMathOperator{\diam}{diam}
\DeclareMathOperator{\covrad}{cov\,rad}
\DeclareMathOperator{\diag}{diag}
\DeclareMathOperator{\tr}{tr}
\DeclareMathOperator{\im}{im}
\DeclareMathOperator{\dist}{dist}
\DeclareMathOperator{\aff}{aff}
\DeclareMathOperator{\vol}{vol}
\DeclareMathOperator{\Span}{span}
\DeclareMathOperator{\Sym}{Sym}

\begin{document}

\maketitle

\begin{abstract}
	We analyze the decrease of simplex diameters under iterated refinement of spherical Delaunay complexes.
	Unlike in ordinary subdivision, the refined Delaunay complex need not be a subdivision of the previous one, so mesh contraction is not automatic. 
	We derive explicit contraction bounds for several families of Steiner points, including Delaunay analogues of barycentric and edgewise subdivision. 
	The proof reduces the problem to sharp covering estimates for Euclidean simplices.
	These estimates are obtained through a strengthening of Maurey's empirical method via pivotal sampling and a dimension-dependent version of the approximate Carathéodory theorem.
	Theoretical results and numerical experiments show that Delaunay refinements achieve stronger contraction than their subdivision counterparts.
\end{abstract}

\vspace{.5cm}
\setcounter{tocdepth}{1}
\tableofcontents
\newpage

\section{Introduction}

\subsection{Motivations}
\label{subsec:motivations}

\emph{Subdivision} is a fundamental operation in computational geometry and topology.
In geometry, a mesh is refined in order to represent a domain more accurately and to improve the performance of numerical methods \cite{rivara1984mesh,maubach1995local,plaza2000local,Frey_2008,Brenner_2008,cheng2013delaunay}.
In topology, subdivision refines a triangulation without changing its underlying space, allowing one to relate different triangulations through common subdivisions, to compute topological invariants, and to approximate continuous maps by simplicial maps \cite{Kaczynski_2004,Zomorodian_2005,Kozlov_2008,Edelsbrunner_2009,Ellis_2019,Dey_2022}.

The last point is our main motivation.
Let $K,L$ be finite simplicial complexes, and let $f\colon \geomreal{K}\to\geomreal{L}$ be a continuous map between their geometric realizations.
\emph{Simplicial approximation} produces a simplicial map $g\colon K'\to L$, defined on a subdivision $K'$ of $K$, such that the realization $\geomreal{g}\colon\geomreal{K'}\to \geomreal{L}$ is homotopic to $f$.
Thus, simplicial approximation is a way of ``triangulating a map'': it replaces a continuous map by a finite combinatorial object.

The \emph{simplicial approximation theorem} says that such an approximation exists once $K$ has been subdivided finely enough, i.e., once the edge lengths are sufficiently small; see \cite[Theorem~2C.1]{hatcher_algebraic_2002}.  
Thus, we need a refinement procedure that produces triangulations with arbitrarily small simplices.  
The standard choice is \emph{barycentric subdivision}.
However, it is often inefficient in practice because the number of vertices grows rapidly under iteration.

The purpose of this paper is to study an alternative refinement procedure based on \emph{Delaunay complexes}.
Starting from a point set, we insert Steiner points and recompute the Delaunay complex.
This procedure is closer in spirit to mesh generation than to ordinary subdivision.
But it raises a difficulty: recomputing the Delaunay complex alters the previous simplices.
In particular, the simplex diameters need not decrease at each refinement step.
We prove that several natural Delaunay refinements do shrink after sufficiently many steps (\Cref{cor:iterated_refinement}).
More precisely, we study four families of Steiner points: the \emph{minicenters of the facets}, the \emph{barycenters of the facets}, the \emph{equal-weight barycenters of all faces} (as in barycentric subdivision), and the \emph{$k$-fold barycenters of the facets} for $k\geq2$ (as in edgewise subdivision).

We restrict our attention to a particular setting: refinements of \emph{Euclidean} and \emph{spherical} Delaunay complexes. 
In particular, our results do not give a general refinement procedure for arbitrary abstract simplicial complexes.
Still, this is the setting needed in our companion paper, where we use simplicial approximation to obtain triangulations of manifolds \cite{tinarrage:LIPIcs.SoCG.2026.93}.

The proofs are based on sharp Euclidean covering estimates for a simplex.
For each refinement scheme, we ask how far a point inside the simplex can be from the old vertices together with the newly inserted Steiner points.
We prove the best possible dimension-dependent bounds (\Cref{thm:steiner_covering}).
For the equal-weight barycenters, we employ \emph{pivotal sampling} to strengthen the argument underlying \emph{Maurey's empirical method} (\Cref{lem:pivotal_sampling}).
For the $k$-fold barycenters, we prove a sharp version of the \emph{approximate Carathéodory theorem} (\Cref{thm:steiner_covering_kfold}).

These results may help guide refinement strategies in computational settings such as mesh generation, where both contraction rate and mesh quality matter.
The final section illustrates this experimentally: by recomputing the Delaunay complex, the refinement can counteract the accumulation of poorly shaped simplices under iteration.

\subsection{Related work}

\subsubsection{Classical subdivision methods}
\label{subsec:classical_subdivision_methods}

The classical way to refine a simplicial complex is \emph{barycentric subdivision}.
It works in every dimension and gives a uniform decrease of the mesh size.
If $\sigma$ is a $d$-simplex and $\sub(\sigma)$ denotes its barycentric subdivision, then the maximal diameter of its simplices satisfies
\begin{equation}
	\label{eq:shrinking_classical_barycentric}
	\maxdiam(\sub(\sigma)) \leq \frac{d}{d+1} \diam(\sigma).
\end{equation}
Thus, after $n$ iterations of barycentric subdivision, the simplicial complex $\sub^n(\sigma)$ satisfies
\[
	\maxdiam(\sub^n(\sigma)) \leq \bigg(\frac{d}{d+1}\bigg)^n \diam(\sigma).
\]
In particular, all edge lengths can be made arbitrarily small.
The drawback is combinatorial: $\sub(\sigma)$ is a simplicial complex with $2^{d+1}-1$ vertices---one barycenter for each face of $\sigma$---and $(d+1)!$ facets.
Thus repeated barycentric subdivision quickly becomes expensive.

A more economical alternative is the \emph{Coxeter--Freudenthal--Kuhn subdivision} of simplices, also called \emph{edgewise subdivision} \cite{freudenthal1942simplizialzerlegungen,kuhn1960combinatorial,edelsbrunner2000edgewise,bey2000simplicial,goncalves2006simples}.
It depends on a parameter $k\geq2$ and creates a simplicial complex, still denoted $\sub(\sigma)$, with $\binom{d+k}{k}$ vertices.
More precisely, $k$-edgewise subdivision introduces all the $k$-fold barycenters of $\sigma=\{v_0,\dots,v_d\}$, by which we mean the equal-weight barycenters of all $k$-tuples $(v_{i_1},\dots,v_{i_k})$, repetitions allowed.
The simplices of $\sub(\sigma)$ are obtained from the Freudenthal--Kuhn triangulation of the unit cube, equivalently from the Coxeter hyperplane arrangement; see \cite[Appendix~A.1]{boissonnat_et_al:LIPIcs.SoCG.2021.17} for details.
This popular construction appears in several computational settings \cite{goncalves2007H2,weiss2011simplex,ChoudharyKachanovichWintraecken2020,BoissonnatKachanovichWintraeckenCFKCompact,boissonnat_et_al:LIPIcs.SoCG.2021.17,boissonnat2023tracing,brunck2023iterated,BoissonnatEtAlSimplicialSubdivisionCurved,manin2024algorithmic}.

For instance, $2$-edgewise subdivision introduces the midpoints of the edges of $\sigma$.
It subdivides a $2$-simplex into four homothetic triangles, and a $3$-simplex into eight (not all homothetic) tetrahedra; see \Cref{fig:edgewise_subdivision}.
In general, $k$-edgewise subdivision satisfies the estimate
\begin{equation}
	\label{eq:shrinking_classical_edgewise}
	\maxdiam(\sub(\sigma)) \leq \min\bigg\{1, \frac{\lfloor\frac{d+1}{2}\rfloor}{k}\bigg\} \diam(\sigma).
\end{equation}
This bound is sharp.
We have not found it in the literature and prove it in \Cref{subsec:additional_proofs_shrinking_classical_edgewise}.

When $\lfloor\frac{d+1}{2}\rfloor/k \geq 1$, $k$-edgewise subdivision does not guarantee contraction by an absolute constant smaller than $1$.
This obstruction already appears for tetrahedra with $k=2$: inserting the edge midpoints creates a central octahedron, which must be triangulated by choosing a diagonal.
The chosen diagonal can have length arbitrarily close to $\diam(\sigma)$.

A standard way to control the resulting diameters is to split the central octahedron along its shortest diagonal, as in \emph{red refinement} and in the related shortest-interior-edge partition \cite{zhang1995successive,kroger2008stability,korotov2022degenerating}.
This yields the contraction factor $1/\sqrt{2}$.
Related refinement schemes were studied in \cite{ong1994uniform,liu1996quality}.
However, in higher dimensions, the central cell is a hypersimplex rather than an octahedron, and there is no single diagonal to choose.
The literature instead focuses on the congruence classes of simplices produced by the subdivision \cite{bey2000simplicial,korotov2014red}.

In this article, we study a different refinement strategy based on Delaunay complexes.
The resulting contraction bounds, collected in \Cref{table:constant_alpha}, are significantly sharper.
For instance, the Delaunay analogue of $2$-edgewise subdivision attains the factor $\sqrt{(d-1)/(2d)}$ for $d\geq2$, improving on the barycentric bound $d/(d+1)$, on the $2$-edgewise bound $\lfloor\frac{d+1}{2}\rfloor/2$ and, in dimension $d=3$, on the red refinement bound $1/\sqrt{2}$.

\begin{figure}[!htbp]
	\centering
	\begin{subfigure}[t]{0.28\linewidth}
		\centering
		\includegraphics[width=.9\linewidth]{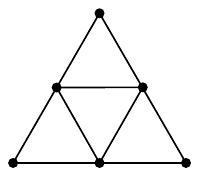}
		\subcaption{A triangle is subdivided into four homothetic triangles.}
	\end{subfigure}
	\hfill
	\begin{subfigure}[t]{0.69\linewidth}
		\centering
		\includegraphics[width=.99\linewidth]{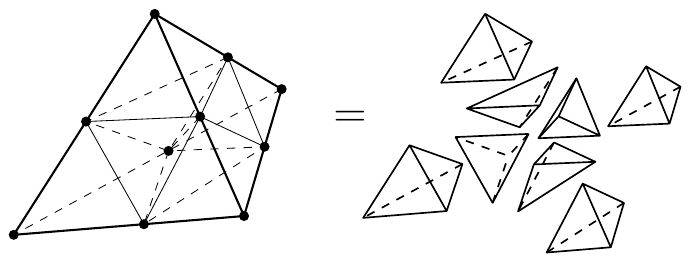}
		\subcaption{A tetrahedron is subdivided into eight tetrahedra: four corner tetrahedra and four tetrahedra triangulating the central octahedron.}
	\end{subfigure}
	\caption{
		$2$-Edgewise subdivision of a simplex uses the original vertices and the edge midpoints. 
		In dimensions $3$ and higher, the central hypersimplex must still be triangulated. 
		For a tetrahedron, this amounts to choosing a diagonal of the central octahedron.
		}
	\label{fig:edgewise_subdivision}
\end{figure}

\subsubsection{Delaunay refinement}

A central goal in mesh generation is to produce triangulations with well-shaped (non-flat) simplices, since this typically improves the performance of subsequent algorithms.
This is one reason why we work with \emph{Delaunay complexes}. 
Indeed, Delaunay triangulations enjoy several optimality properties: in $\R^2$ they maximize the minimum angle over all triangulations of a fixed point set \cite{sibson1978locally}; in $\R^d$ they minimize the maximal miniradius (enclosing-ball radius) of the simplices \cite{rajan1991optimality,Rajan1994}; and they also minimize a certain weighted sum of edge lengths \cite{musin1997properties}.

Delaunay triangulations come with a natural \emph{refinement} scheme: insert new points, called \emph{Steiner points}, and recompute the Delaunay triangulation. 
This idea is at the core of many mesh generation algorithms; see
\cite{cheng2013delaunay} for a modern treatment and \cite{shewchuk2002delaunay} for a detailed analysis in the planar case. 
The classical algorithms of Chew and Ruppert \cite{chew1989guaranteed,chew1993guaranteed,ruppert1995delaunay} start from an input domain and iteratively eliminate bad elements by inserting their circumcenters.
A large part of the literature refines this basic point-insertion strategy \cite{Edelsbrunner_2001,Edelsbrunner_2002,shewchuk2002delaunay,miller2005when,ungor2004offcenters,erten2009quality}. 

In three dimensions and higher, Delaunay refinement is complicated by the appearance of \emph{slivers}, which are poorly shaped tetrahedra despite having reasonable edge lengths.
This issue must be addressed in tetrahedral meshing algorithms \cite{cheng2000sliver,shewchuk1998tetrahedral,cheng2010piecewise}.
Delaunay refinement has also been extended to surfaces and domains with curved boundaries via \emph{restricted Delaunay triangulations} \cite{Boissonnat_2005,Boissonnat_2006,oudot2010meshing,rineau2007generic}, and implemented in \texttt{CGAL} \cite{jamin2015cgalmesh,cgal:eb-25b}.

Our objective is different from this classical mesh-generation setting.
Rather than refining only bad elements, we use Delaunay refinement as a global
``subdivision'' scheme to produce arbitrarily fine complexes.
We compare several natural choices of Steiner points and derive explicit contraction
factors for each of them.
The analysis reduces to sharp covering estimates for simplices, with different
Steiner points requiring different techniques.
In the $k$-fold edgewise case, this leads to a sharp form of the approximate
Carathéodory theorem.

\subsubsection{Approximate Carathéodory theorem}
\label{subsubsec:approximate_caratheodory}

\emph{Carathéodory's theorem} says that every point in the convex hull of a subset
$X\subset\R^d$ can be written as a convex combination of at most $d+1$ points of $X$.  
The \emph{approximate Carathéodory theorem} gives an approximation by barycenters.
If $X=\{x_1,\dots,x_N\}$ lies in the Euclidean unit ball $B^d\subset\R^d$ and if
$y\in \conv(X)$, then for every integer $k\geq 1$ there are indices $i_1,\dots,i_k\in\{1,\dots,N\}$ with repetitions allowed such that
\begin{equation}
	\label{eq:classical_approximate_caratheodory}
	\left\|
	y-\frac1k\sum_{\ell=1}^k x_{i_\ell}
	\right\|
	\leq \frac1{\sqrt{k}}.
\end{equation}
This estimate is usually proved via \emph{Maurey's empirical method}: one samples $k$ independent points $Y_1,\dots,Y_k\in X$ with expectation $y$, and then uses the
variance bound
\begin{equation}
	\label{eq:maurey_variance_bound}
	\E\left\|
	y-\frac1k\sum_{i=1}^k Y_i
	\right\|^2
	\leq \frac1k.
\end{equation}
We refer to Pisier~\cite{Pisier1981}, Carl~\cite{Carl1985}, and Carl--Pajor~\cite{CarlPajor1988} for early uses of this method, and to
Barman~\cite{Barman2018} for its algorithmic consequences. 
A modern exposition is given in \cite{Vershynin2026}.

The approximate Carathéodory theorem is used to replace convex combinations by sparse and equal-weight combinations.
It has applications to convex geometry, coreset construction, sparse approximation, and optimization \cite{Ivanov2021b,BlumHarPeledRaichel2016,Barman2018,KerdreuxColinDAspremont2023}. Related bounds and constructive proofs have been obtained using sampling without replacement \cite{KerdreuxColinDAspremont2023}, discrepancy theory \cite{ReisRothvoss2022}, greedy derandomization \cite{Ivanov2021}, mirror descent \cite{MirrokniLemeVladuWong2017}, and Frank--Wolfe methods \cite{CombettesPokutta2023}.

These works mainly target dimension-free estimates. 
In that sense, the bound $1/\sqrt{k}$ in \Cref{eq:classical_approximate_caratheodory} is the best possible. 
However, our goal is to obtain the optimal \textit{dimension-dependent} constant.
We prove in \Cref{thm:steiner_covering_kfold} that $1/\sqrt{k}$ can
be replaced by
\[
	R_{d,k}=\sqrt{\frac{s(d+1-s)}{dk^2}},
	\qquad
	s=\min\left\{k,\left\lfloor\frac{d+1}{2}\right\rfloor\right\}.
\]
This constant is optimal, with equality attained by regular simplices for $k\geq2$.
The proof combines an optimization argument with a delicate case analysis.
As an illustration, the parameter $k=2$, which corresponds to $2$-edgewise refinement, gives, for $d\geq2$,
\[
	R_{d,2}=\sqrt{\frac{d-1}{2d}} < \frac1{\sqrt{2}}.
\]

As a further consequence, the theorem settles some cases of a conjecture of
Bárány and Füredi on cylinder coverings; see \Cref{subsec:link_with_conjecture}.

One unsuccessful approach to proving this bound was to refine Maurey's probabilistic method by using different sampling distributions. 
A promising candidate was \emph{pivotal sampling}, introduced by Deville and Tillé~\cite{deville1998unequal}. 
Although this approach did not yield the result, it does give the sharp bound needed for the barycentric Delaunay refinement (\Cref{prop:steiner_covering_barycentric}). 
The proof is based on \Cref{lem:pivotal_sampling}, which follows from a variance estimate of Chauvet and Ruiz-Gazen~\cite{ChauvetRuizGazen2017} and can be viewed as a strengthening of \Cref{eq:maurey_variance_bound}.

\subsection{Overview}

We begin in \Cref{sec:delaunay} by introducing the objects of study and stating \Cref{thm:steiner_covering}, our main simplex-wise estimate.
It gives sharp covering bounds for a simplex after inserting the prescribed Steiner points. 
In \Cref{thm:shrinkingrefinements}, we transfer these Euclidean estimates to spherical Delaunay complexes.
This yields \Cref{cor:iterated_refinement}: under a mild density assumption, the covering radius decays geometrically, and consequently the maximal simplex diameter tends to zero.

The proof of \Cref{thm:steiner_covering} is split between \Cref{sec:sharp_bounds_centroid_minicenter_barycentric,sec:sharp_bounds_kfold}.
The first of these sections treats three types of Steiner points: minicenters of facets (\Cref{prop:steiner_covering_minicenter}), barycenters of facets (\Cref{prop:steiner_covering_centroid}), and equal-weight barycenters of faces (\Cref{prop:steiner_covering_barycentric}). 
The second section treats the family of $k$-fold edgewise refinements.
The estimate is formulated as an approximate Carathéodory theorem in \Cref{thm:steiner_covering_kfold}.

We conclude with experiments in \Cref{sec:experiments}, where we compare the different refinement schemes in terms of edge length, circumradius, and number of vertices. 
The experiments suggest that no single method is uniformly optimal: the best choice depends on the dimension.

Finally, \Cref{sec:notation} summarizes the notation used in the paper, and \Cref{sec:additional_proofs} contains additional proofs, including the edgewise subdivision bound stated in \Cref{eq:shrinking_classical_edgewise}.

\section{Successive refinements of spherical Delaunay triangulations}
\label{sec:delaunay}

We begin by defining the spherical Delaunay refinement schemes studied in this paper and establishing their contraction properties. 
The argument reduces the spherical problem to Euclidean covering estimates for simplices, which are proved in the following sections.

\subsection{Euclidean and spherical Delaunay triangulations}
\label{subsec:delaunay_triangulations}

Let $X \subset \R^d$ be finite.
A subset of $d+1$ points has the \emph{empty circumsphere property} if its circumscribing open ball is empty of points of $X$.
These subsets form the facets of an abstract simplicial complex $\Del{X}$, called the \emph{Delaunay complex}.
Under the genericity assumption that no $d+2$ points lie on the same sphere, $\Del{X}$ is naturally embedded in $\R^d$~\cite{boissonnat2018geometric}.

These definitions extend to the spherical setting: given a subset $X$ of the unit sphere $S^d\subset\R^{d+1}$, the empty circumsphere property is understood with respect to the geodesic distance on $S^d$; the resulting complex is called the \emph{spherical Delaunay complex} and is still denoted $\Del{X}$.
If no $d+2$ points of $X$ lie on the same geodesic sphere, then it is naturally embedded in $\R^{d+1}$.
All point sets considered below are assumed to satisfy this condition.

A natural map $\geomreal{\Del{X}}\to S^d$ is given by the scaling $x\mapsto x/\|x\|$, which is a homeomorphism provided that the origin lies in the interior of the convex hull of $X$.
A finite set $X$ satisfying this assumption will be referred to as \emph{admissible}, and $\Del{X}$ as an \emph{admissible triangulation}.
The map $x\mapsto x/\|x\|$ will be called the \emph{radial projection}.

It is well known that the spherical Delaunay complex coincides with the boundary of the convex hull of its vertices, thus reducing the computation of $\Del{X}$ to $\conv(X)$ \cite{b-gtfga-80}.
In our experiments, we used \texttt{Qhull}, a popular software package for computing convex hulls \cite{barber1996quickhull}.

\subsection{Global Delaunay refinements}
\label{subsec:global_delaunay_refinements}

Given a spherical Delaunay triangulation $\Del{X_0}$ with vertex set $X_0\subset S^d$, one obtains a new triangulation by choosing additional points $Y_0\subset S^d$ and building the Delaunay complex on $X_1 = X_0 \cup Y_0$.
The new points are called \emph{Steiner points}, and this procedure is known as a \emph{Delaunay refinement}.
This construction can be iterated, yielding a sequence of Steiner points $Y_0$, $Y_1$, $Y_2$, $\dots$ and complexes $\Del{X_0}$, $\Del{X_1}$, $\Del{X_2}$, $\dots$

Whether the complexes become progressively finer depends on the choice of Steiner points. 
We therefore consider several families of Delaunay refinements, illustrated in \Cref{fig:delaunay_refinements}.
\begin{description}
	\item[Minicenter refinement:]
		We first consider a variant inspired by classical Delaunay refinement. For each facet (i.e., maximal simplex) of $\Del{X_i}$, we insert its \textit{minicenter}, that is, the center of its minimum enclosing ball.
		We let $Y_i$ be the set of all such points.
		We favor minicenters over circumcenters because they always lie inside the simplex that defines them, which better reflects the local nature of classical subdivision.
	
	\item[Centroid refinement:]
		We may instead insert the \textit{barycenters} of the facets of $\Del{X_i}$, standing as a simpler alternative to minicenter refinement.
		Explicitly, we insert
		\[
			Y_i= \left\{ \frac{1}{d+1}\sum_{\ell=0}^d v_\ell \;\middle|\; \sigma=[v_0,\dots,v_d]\in\Del{X_i}\right\}.
		\]
	
	\item[Barycentric refinement:]
		Inspired by barycentric subdivision, we let the Steiner points be the equal-weight barycenters of all positive-dimensional simplices of $\Del{X_i}$.
		Explicitly,
		\[
			Y_i= \left\{ \frac{1}{k+1}\sum_{\ell=0}^k v_\ell \;\middle|\; \sigma=[v_0,\dots,v_k]\in\Del{X_i},\; 1\leq k\leq d\right\}.
		\]
	
	\item[$k$-Edgewise refinement:]
		Finally, in analogy with edgewise subdivision, we consider inserting the $k$-fold barycenters of the facets of $\Del{X_i}$, where $k\geq2$ is a parameter.
		For a facet $\sigma=[v_0,\dots,v_d]$, these are the equal-weight barycenters of all $k$-tuples of vertices of $\sigma$ with repetitions allowed.
		Explicitly, the Steiner points are
		\[
			Y_i= \left\{ \frac{1}{k}\sum_{\ell=1}^k v_{i_\ell} \;\middle|\; \sigma=[v_0,\dots,v_d]\in\Del{X_i},\ i_1,\dots,i_k\in\{0,\dots,d\}\right\} \setminus X_i.
		\]
\end{description}

In all cases, the Steiner points (minicenters or barycenters) are computed in the ambient space $\R^{d+1}$ and projected onto the sphere.
For minicenters and edge midpoints, this agrees with computing them intrinsically on the sphere; see Fiedler’s book \cite[Section 5.4]{fiedler2011matrices}.

\begin{figure}[!htbp]
	\begin{subfigure}[t]{1\linewidth}
		\centering
		\begin{minipage}{0.19\linewidth}
			\centering
			\includegraphics[width=.99\linewidth]{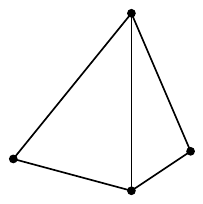}
			Initial complex
		\end{minipage}
		\begin{minipage}{0.19\linewidth}
			\centering
			\includegraphics[width=.99\linewidth]{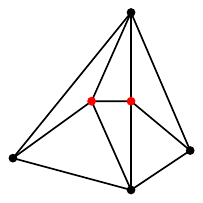}
			Minicenter
		\end{minipage}
		\begin{minipage}{0.19\linewidth}
			\centering
			\includegraphics[width=.99\linewidth]{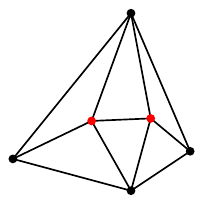}
			Centroid
		\end{minipage}
		\begin{minipage}{0.19\linewidth}
			\centering
			\includegraphics[width=.99\linewidth]{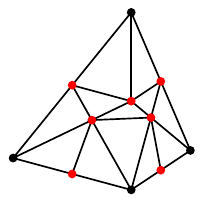}
			Barycentric
		\end{minipage}
		\begin{minipage}{0.19\linewidth}
			\centering
			\includegraphics[width=.99\linewidth]{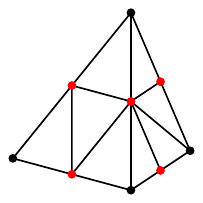}
			$2$-Edgewise
		\end{minipage}
		\subcaption{%
			Local picture: A Delaunay complex on a set of four points in $\R^2$ and the four proposed refinements.
			}
		\label{subfig:delaunay_refinements:local}
	\end{subfigure}
	\begin{subfigure}[t]{1\linewidth}
		\centering
		\begin{minipage}{0.19\linewidth}
			\centering
			\includegraphics[width=.99\linewidth]{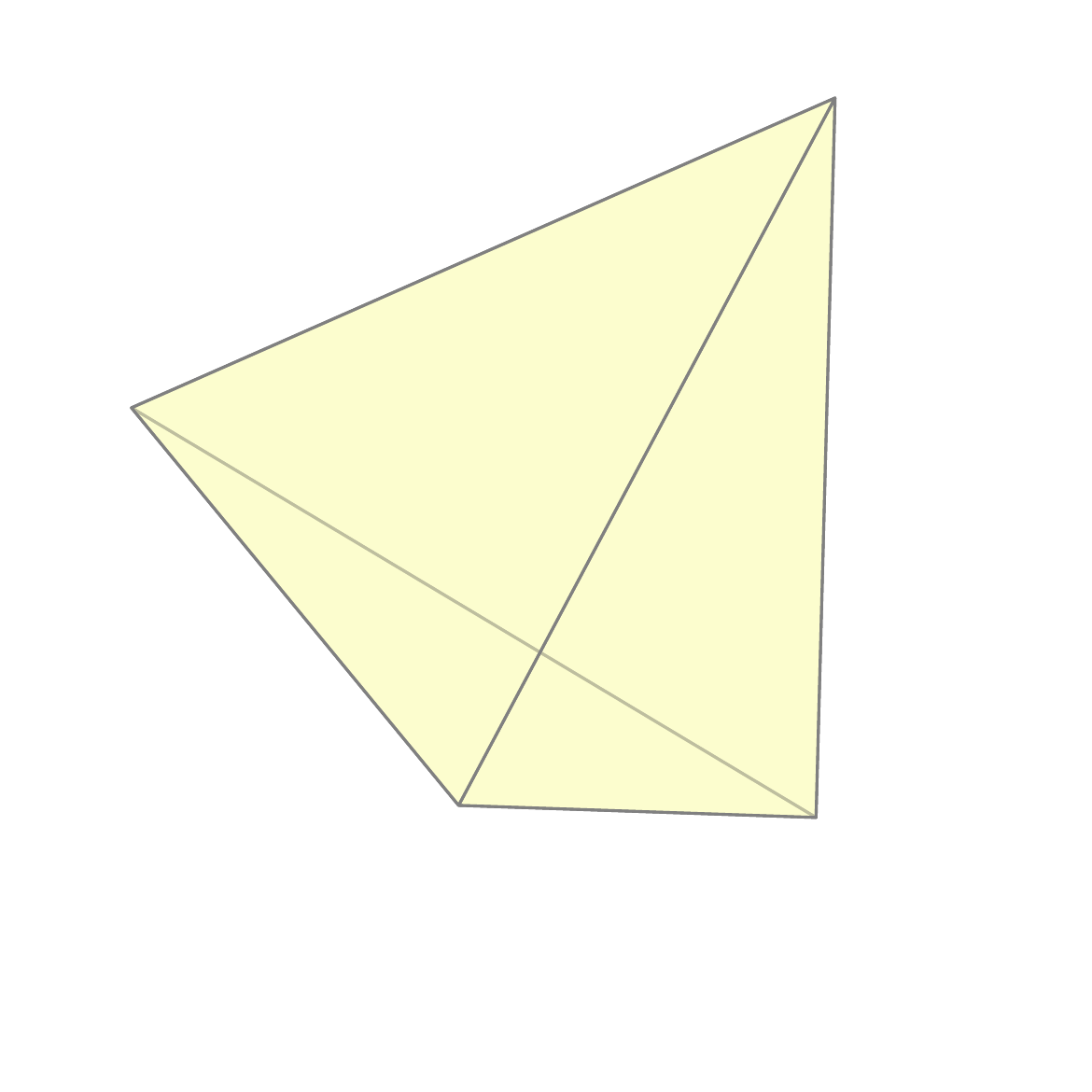}
			Initial complex
		\end{minipage}
		\begin{minipage}{0.19\linewidth}
			\centering
			\includegraphics[width=.99\linewidth]{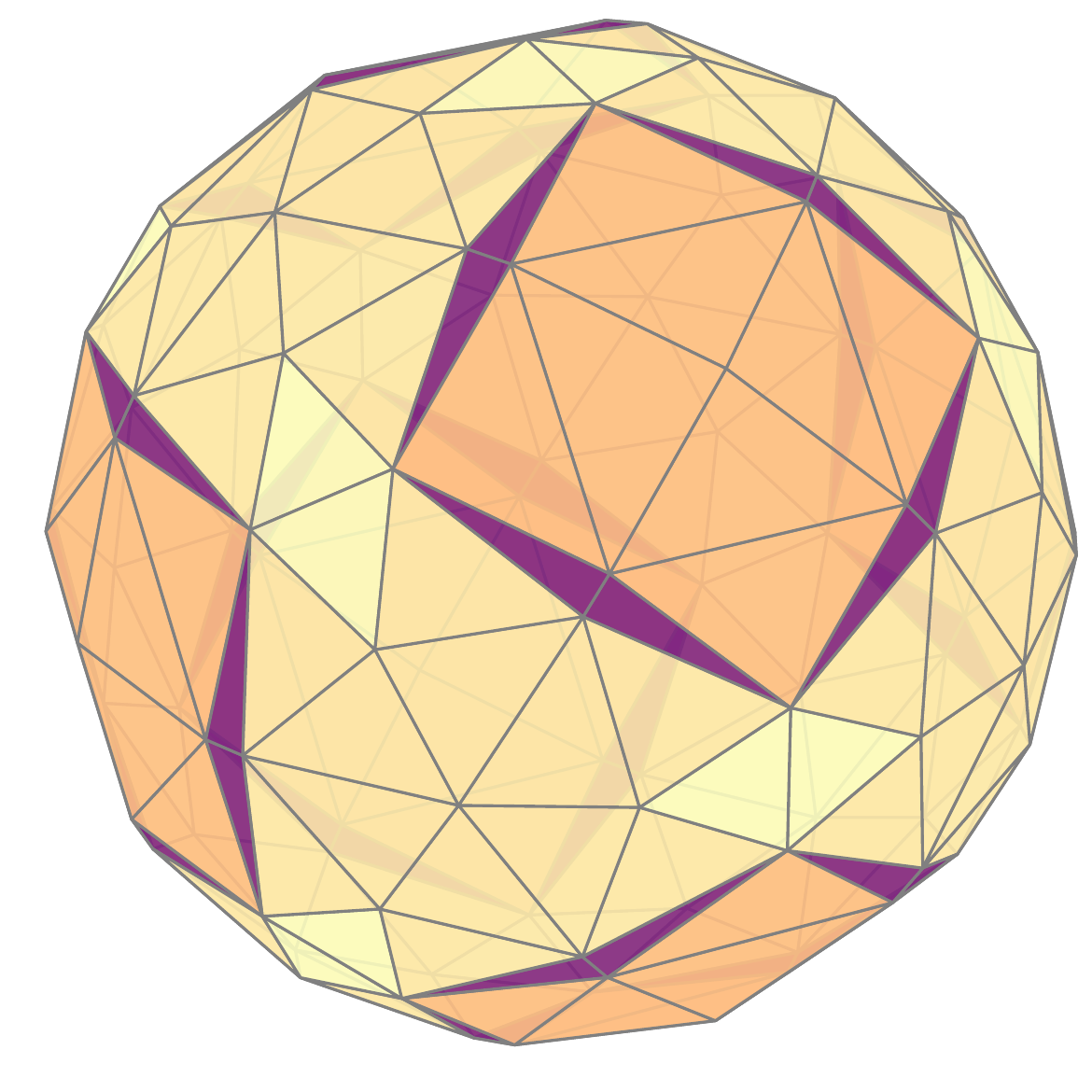}
			Minicenter
		\end{minipage}
		\begin{minipage}{0.19\linewidth}
			\centering
			\includegraphics[width=.99\linewidth]{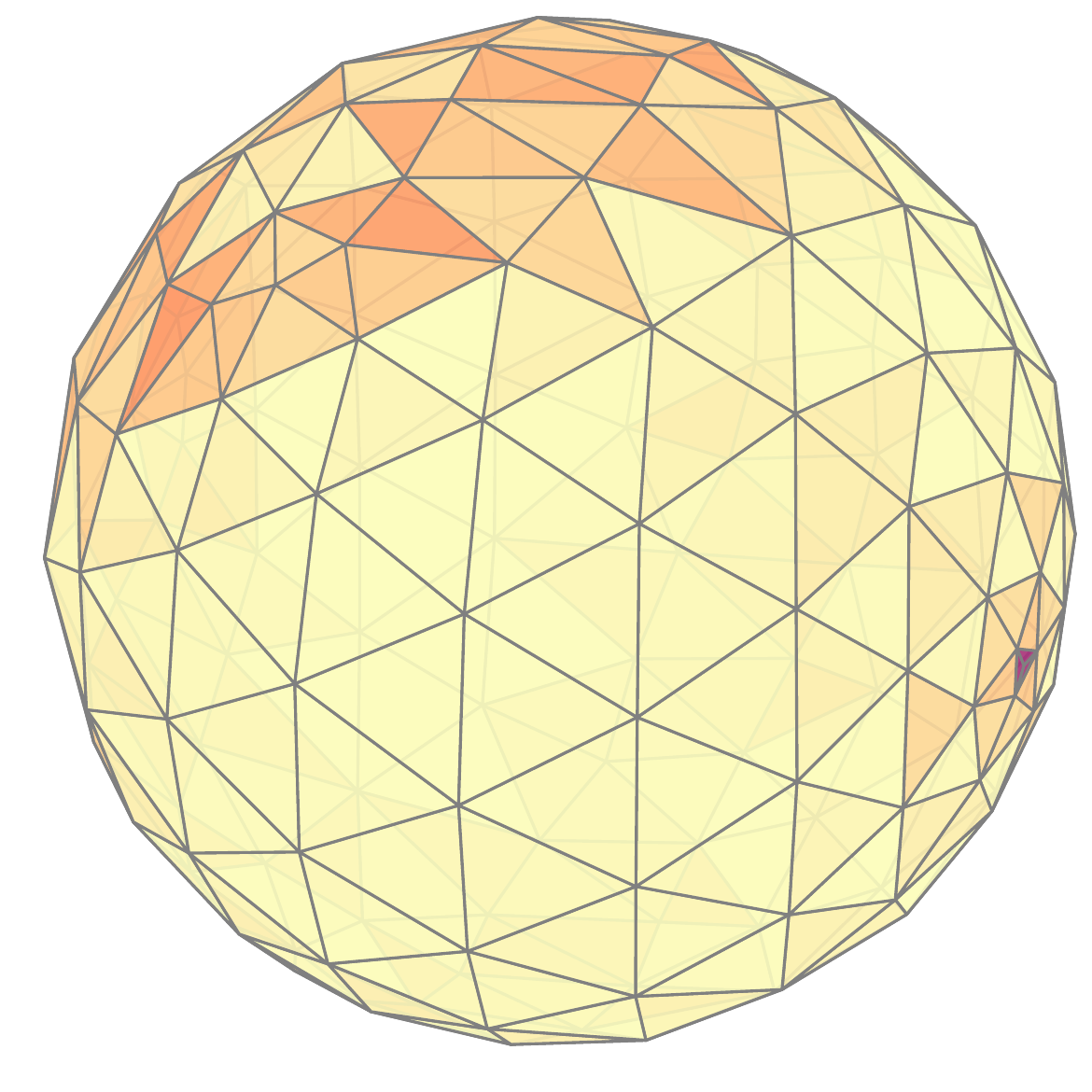}
			Centroid
		\end{minipage}
		\begin{minipage}{0.19\linewidth}
			\centering
			\includegraphics[width=.99\linewidth]{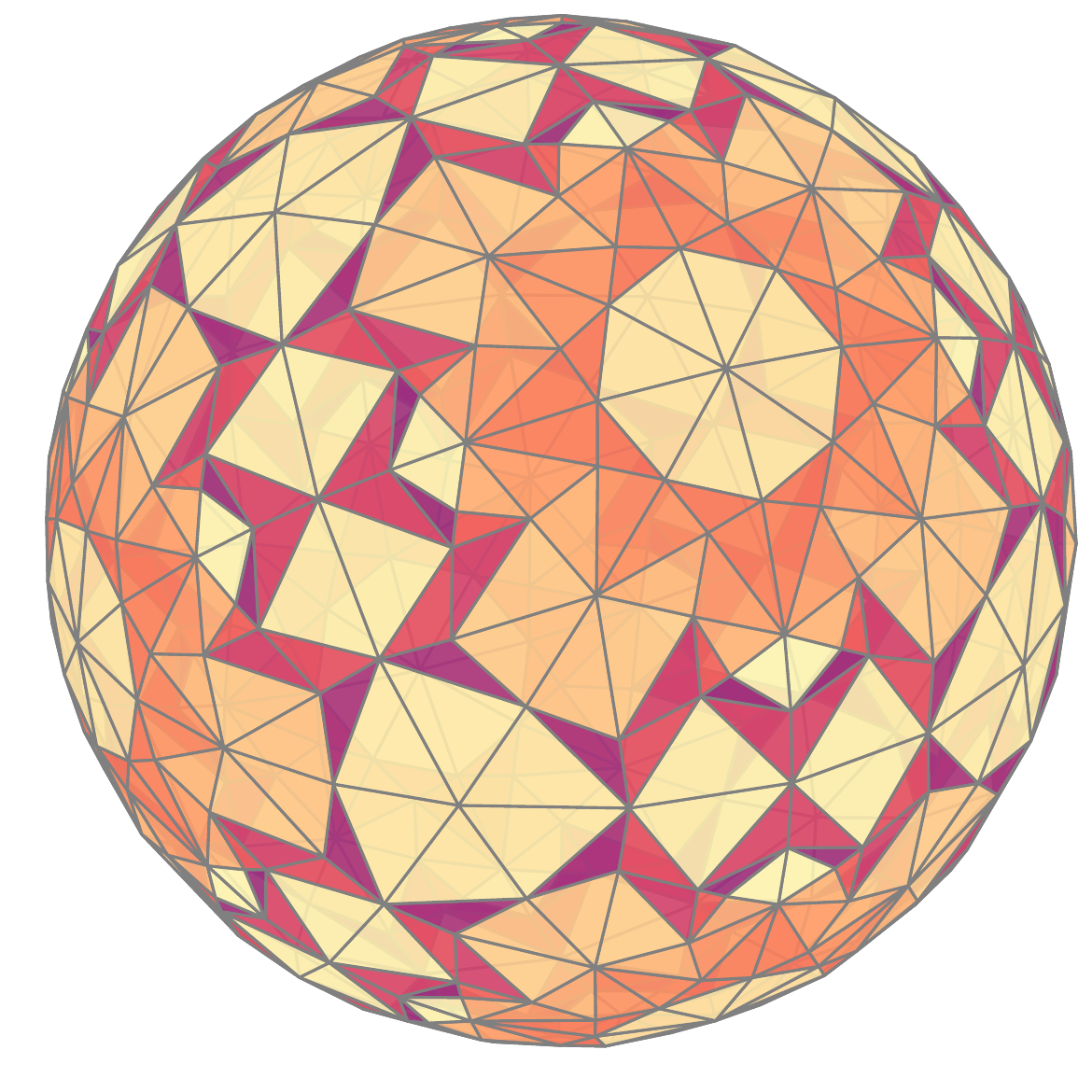}
			Barycentric
		\end{minipage}
		\begin{minipage}{0.19\linewidth}
			\centering
			\includegraphics[width=.99\linewidth]{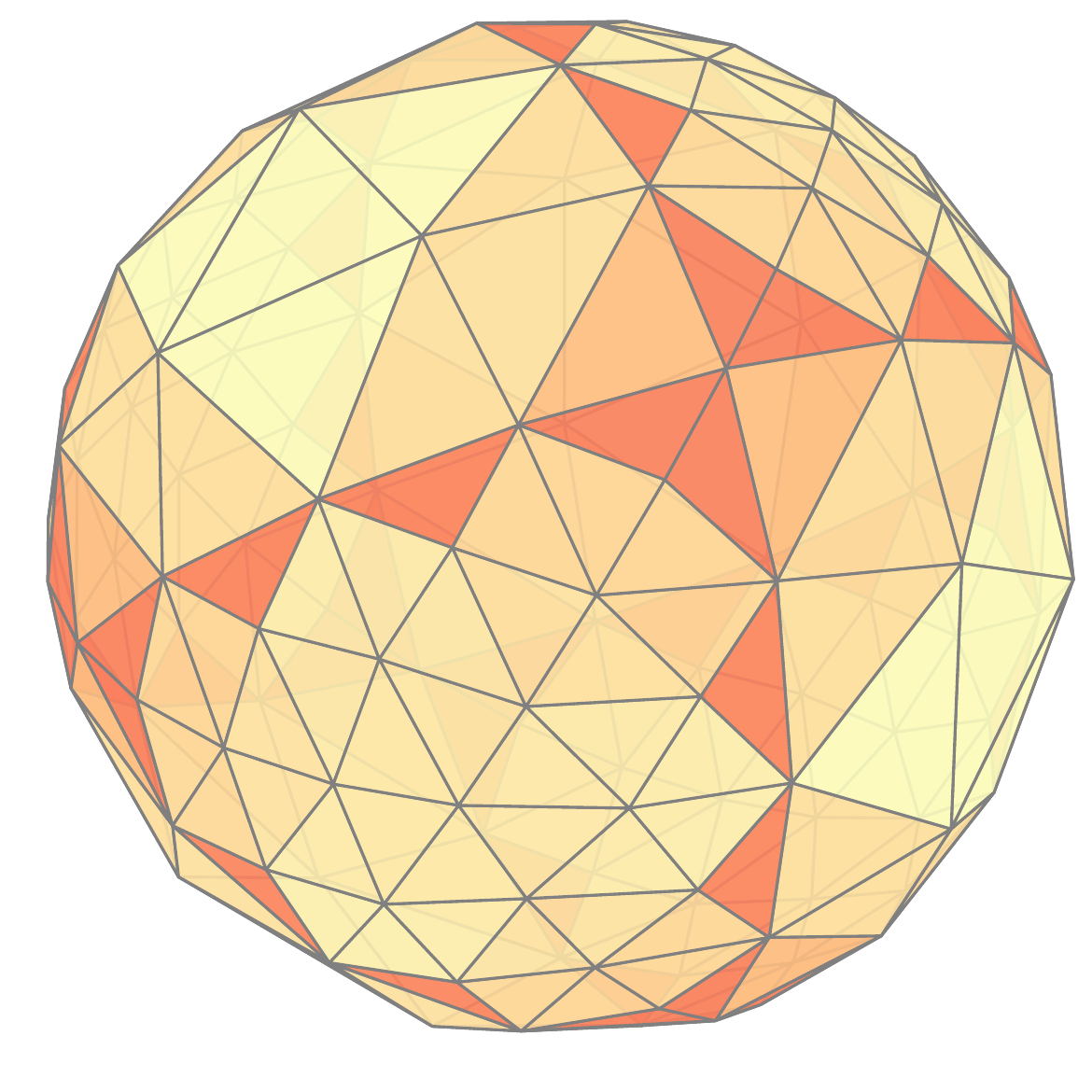}
			$2$-Edgewise
		\end{minipage}
		\subcaption{%
			Global picture: Spherical Delaunay complexes after three or four refinements of an initial triangulation of $S^2$ (the boundary of the standard simplex).
			The colors indicate the simplex ratio inradius/circumradius.
		}
	\end{subfigure}
	\caption{
		We employ Delaunay refinement as a ``subdivision'' scheme for triangulations of $S^d$.
	}
	\label{fig:delaunay_refinements}
\end{figure}

\subsection{Contraction under refinement}
\label{subsec:shrinking_property}

Note that, strictly speaking, $\Del{X_{i+1}}$ is not a subdivision of $\Del{X_i}$: when realized on the sphere, a simplex of $\Del{X_{i+1}}$ need not be contained in a single simplex of $\Del{X_{i}}$.
This can be seen in \Cref{subfig:delaunay_refinements:local}, where only the minicenter and edgewise refinements happen to be subdivisions.
As a consequence, it is not immediate that simplices become smaller under refinement.
In particular, adding a point to a Delaunay triangulation can increase the maximal edge length, as seen in \Cref{fig:delaunay_length_increase} (drawn in the plane for clarity).
Instead, we show that the \emph{maximal spherical circumradius} of $\Del{X_i}$, denoted $\maxcirc(\Del{X_i})$, decreases monotonically to zero.
Since the maximal diameter of the simplices of $\Del{X_i}$ is at most twice its maximal circumradius, it follows that the maximal diameter also tends to zero.

\begin{figure}[!htbp]
	\centering
	\includegraphics[width=.55\linewidth]{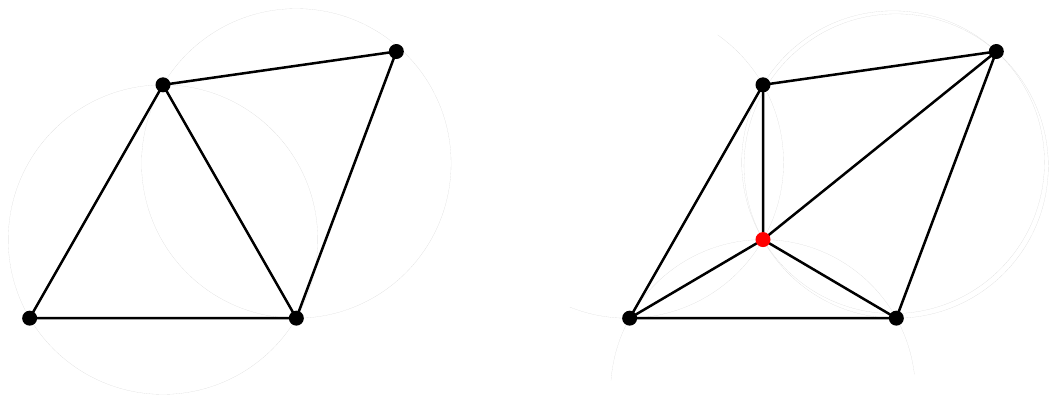}
	\caption{Adding a point to a Delaunay complex may increase the maximal edge length.
	}
	\label{fig:delaunay_length_increase}
\end{figure}

A convenient quantity for studying a Delaunay complex on a subset $X\subset S^d$ is its \textit{covering radius} (also known as sampling radius) \cite{boissonnat2014delaunay,boissonnat2018geometric}.
On the sphere, it is defined by
\begin{equation}
	\label{eq:def_covering_radius}
	\covrad(X) = \sup_{y \in S^d} \inf_{x \in X \vphantom{S^d}} \d(x,y),	
\end{equation}
where $\d(x,y)$ is the geodesic (great-circle) distance on the sphere.
We note that the set $X$ is admissible as long as $\covrad(X)<\pi/2$ (i.e., $X$ is not contained in a closed hemisphere).

The following standard observation relates this quantity to the maximal circumradius of the spherical Delaunay complex.

\begin{lemma}
	\label{lem:equality_covering_circumradius}
	For every admissible subset $X\subset S^d$, $\maxcirc(\Del{X}) = \covrad(X)$.
\end{lemma}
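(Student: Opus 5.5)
We prove the two inequalities $\maxcirc(\Del{X}) \leq \covrad(X)$ and $\covrad(X) \leq \maxcirc(\Del{X})$ separately. The two ingredients are the empty circumsphere property of the facets of $\Del{X}$ and the fact, guaranteed by admissibility, that the radial projection of $\geomreal{\Del{X}}$ covers all of $S^d$.

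For the first inequality, take a facet $\sigma$ of $\Del{X}$ and let $c\in S^d$ be its spherical circumcenter, with circumradius $r$. By the empty circumsphere property, the open geodesic ball $B(c,r)$ contains no point of $X$. Hence $\inf_{x\in X}\d(x,c)\geq r$. This is in fact an equality, since the vertices of $\sigma$ lie at distance exactly $r$ from $c$. It follows that $r\leq\covrad(X)$, and taking the maximum over facets gives $\maxcirc(\Del{X})\leq\covrad(X)$.

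One point needs care here: which of the two antipodal caps bounded by the circumsphere counts as the circumball. For an admissible set, each facet of $\Del{X}$ is a facet of $\conv(X)$, and the origin lies strictly inside $\conv(X)$. The facet hyperplane therefore cuts off a cap of angular radius $<\pi/2$ on the side away from the origin, and this cap is free of points of $X$. That cap is the circumball we use, and its radius is the spherical circumradius.

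For the reverse inequality, fix $y\in S^d$. Since $X$ is admissible, the radial projection is a homeomorphism from $\geomreal{\Del{X}}$ onto $S^d$. So there is a facet $\sigma$ whose projected realization contains $y$; equivalently, $y$ lies in the cone over $\sigma$, on the far side of its facet hyperplane $H$ from the origin. Let $C$ be the closed cap cut from $S^d$ by $H$. Its boundary circle is the circumsphere of $\sigma$, and its center $c$ is the unit normal to $H$. The spherical cell $S_\sigma$, namely the projection of $\sigma$, is contained in $C$.

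We now want $\min_{v\in\sigma}\d(y,v)\leq r$, where $r$ is the circumradius of $\sigma$. The key observation is that the vertices of $\sigma$ all lie on $\partial C$ and $y$ lies in the geodesic convex hull of these vertices inside the cap. Consider the Voronoi-type covering of $S_\sigma$ by the balls $B(v_i,r)$. It suffices to show that the point of $S_\sigma$ farthest from all vertices is at distance at most $r$. In Euclidean terms, the function $y\mapsto\min_i\|y-v_i\|$ on $S_\sigma$ is at most $\|y-v_i\|$ for every $i$. Writing $y$ through its projection to $\aff(\sigma)$, we use that $\langle y,c\rangle\geq\langle v_i,c\rangle$ for all $i$, together with a convexity argument: the maximum of the distance to the nearest vertex over points of the cap lying in the cone over $\sigma$ is attained at $c$ when $c$ belongs to that cone. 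In general, the bound follows by applying the Euclidean statement that every point of a simplex is within its circumradius of some vertex to the projection onto $\aff(\sigma)$, and then comparing chordal distances. If this comparison proves awkward, we fall back on a cleaner route. Let $y$ be a point realizing $\covrad(X)$ and set $R=\covrad(X)$. The open ball $B(y,R)$ is empty of points of $X$, while its boundary contains at least $d+1$ points of $X$ not lying on a common hemisphere boundary through $y$, since otherwise $y$ could be moved to increase its distance to $X$. These $d+1$ points have an empty circumball of radius $R$, so they form a Delaunay facet, whose circumradius is therefore $R$.

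The main obstacle is justifying the local-maximum claim in this last step: that at a maximizer $y$, the nearest points of $X$ span a simplex containing $y$ in its cone. We handle it with a first-order argument. If the nearest points all lay in an open hemisphere centered at some tangent direction $u$ at $y$, then moving $y$ in direction $-u$ would strictly increase the distance to every nearest point, and by finiteness of $X$ also to all other points for small steps, contradicting maximality. Hence $y$ lies in the spherical convex hull of its nearest points. Carathéodory's theorem then extracts $d+1$ of them, which under the genericity assumption span a facet of $\Del{X}$ with circumradius exactly $R$.
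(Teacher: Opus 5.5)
Your first inequality, $\maxcirc(\Del{X})\leq\covrad(X)$, is correct. The spherical circumcenter of a Delaunay facet lies at distance exactly its circumradius from $X$, and your remark about taking the cap on the far side of the facet hyperplane settles the only ambiguity.

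The reverse inequality is not established by either of your routes. The first route stops at a sketch. The claim that the distance to the nearest vertex is maximized at $c$ is asserted, not proved. The ``comparison of chordal distances'' is never carried out, and if the projection is orthogonal onto $\aff(\sigma)$, the projected point can even leave $\conv(\sigma)$.

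In the fallback route, the step you call the main obstacle is handled only to first order, and first order does not suffice. Ruling out a tangent direction $u$ with $\langle u,x\rangle>0$ for all nearest points $x\in N$ only gives $y\in\operatorname{cone}(N)$. It does not give that $N$ affinely spans a hyperplane, which is what you need for $d+1$ of its points to form a facet. Carathéodory yields \emph{at most} $d+1$ points, and genericity bounds $|N|$ from above, not from below. For example, two nearest points with $x_1+x_2=2\cos(R)\,y$ put $y$ in their cone and pass your test, yet span no facet. Such a $y$ is excluded only by a move along a unit vector $w\perp y$ orthogonal to $\Span(N)$, where $\langle \cos(s)y+\sin(s)w,x\rangle=\cos(s)\cos(R)<\cos(R)$ for all $x\in N$. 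Your parenthetical ``otherwise $y$ could be moved'' points at this, but that is exactly the part that needs an argument.

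The paper treats the lemma as a standard observation without a separate proof. The inequality you are missing, $\covrad(X)\leq\maxcirc(\Del{X})$, is proved inside the proof of \Cref{lem:circumradius_diameter_density}, by a direct argument that needs no maximizer. Suppose $y\in S^d$ lies in the cone over a facet $\sigma$ with spherical circumcenter $c$ and circumradius $r$. Write $ty=\sum_i\lambda_iv_i$ with $t>0$ and $\lambda\in\Delta_d$. Then $\sum_i\lambda_i\langle y,v_i\rangle=\langle ty,y\rangle\geq\langle ty,c\rangle=\cos(r)$, so $\d(y,v_i)\leq r$ for some $i$.

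Your first route was one step from this. Take the radial projection $p=ty\in\conv(\sigma)$ and use the Euclidean bound $\|p-v_i\|\leq\sin(r)$ for some $i$ (the Euclidean circumradius). This gives $\langle y,v_i\rangle\geq(t^2+\cos^2(r))/(2t)\geq\cos(r)$.
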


Thus, to prove that the maximal circumradius decreases under refinement, it suffices to control the covering radius. 
We proceed in two steps.
First, we prove a Euclidean estimate on each simplex before radial projection. 
The proof is technical and is split between \Cref{sec:sharp_bounds_centroid_minicenter_barycentric,sec:sharp_bounds_kfold}.
In \Cref{sec:sharp_bounds_kfold}, we also explain how it relates to the approximate Carathéodory theorem.

\begin{theorem}[Covering theorem]
	\label{thm:steiner_covering}
	Consider a simplex $\sigma=\{v_0,\dots,v_{d}\}\subset\R^{d}$ contained in the closed unit ball, where $d\geq2$.
	Let $Y_\mathrm{Euc}\subset\R^d$ denote the Euclidean Steiner points associated with $\sigma$ (the minicenter, the centroid, the equal-weight barycenters, or the $k$-fold barycenters) before projection onto the sphere.
	For every $y\in\conv(\sigma)$, there exists $x\in \sigma\cup Y_\mathrm{Euc}$ such that
	\[
	\|x-y\| \leq \alpha,
	\]
	where $\alpha$ depends on the chosen refinement, as given in \Cref{table:constant_alpha}.
	These constants are optimal.
	More generally, if $\sigma$ lies in a ball of radius $\delta\geq0$, then the right-hand side is multiplied by $\delta$.
\end{theorem}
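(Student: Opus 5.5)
By the homogeneity of the statement I would first reduce to $\delta=1$: translating and rescaling $\sigma$ by $1/\delta$ maps vertices, minicenters and all barycenters to the corresponding objects and scales distances by $1/\delta$. I would then treat the four families separately, as the paper's overview announces: \Cref{prop:steiner_covering_minicenter}, \Cref{prop:steiner_covering_centroid}, \Cref{prop:steiner_covering_barycentric} and \Cref{thm:steiner_covering_kfold}. Each case has two halves: an upper bound valid for every simplex in the unit ball, and a matching example showing optimality. In every case I expect the extremal example to be the regular simplex inscribed in the unit sphere. The candidate bad point $y$ is the barycenter of a face of suitable size, or the point of $\conv(\sigma)$ that is farthest from all inserted points, chosen by symmetry. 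Computing its distances in the regular simplex is routine, because there $\langle v_i,v_j\rangle=-1/d$ for $i\neq j$.

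For the upper bounds, the common tool is a second-moment (Maurey-type) argument. Write $y=\sum_i\lambda_i v_i$ and let $X$ be a random point of $\sigma\cup Y_\mathrm{Euc}$ with $\E X=y$. Then
\[
\min_{x}\|x-y\|^2\le \E\|X-y\|^2=\E\|X\|^2-\|y\|^2\le 1-\|y\|^2 .
\]
The art lies in choosing the distribution of $X$ so that either $\E\|X-y\|^2$ is small or $\|y\|$ is forced to be large.
\begin{itemize}
\item For the minicenter, I would use that the minicenter $c$ is the center of the minimal enclosing ball, of radius $r\le1$. Then I would combine $\|y-c\|$ with the vertex bound $\sum_i\lambda_i\|v_i-y\|^2\le r^2-\|y-c\|^2$. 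Optimizing over $\|y-c\|$ gives a bound of the form $r/\sqrt2$, refined by the constraint that $y$ lies in the simplex.
\item For the centroid, I would mix the vertex distribution with the point mass at the centroid and optimize the mixing weight as a function of $\lambda$.
\item For the equal-weight barycenters, I would use pivotal sampling (\Cref{lem:pivotal_sampling}). Sampling a subset $S$ of vertices with inclusion probabilities tied to $\lambda$ makes the face barycenter of $S$ an unbiased estimator of $y$. The Chauvet--Ruiz-Gazen variance estimate then beats the independent-sampling bound by exactly the amount needed to match the regular-simplex example.
\end{itemize}

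The $k$-fold case is the main obstacle. Maurey's method only gives $1/\sqrt k$, whereas the claimed constant is
\[
R_{d,k}=\sqrt{\frac{s(d+1-s)}{dk^2}},\qquad s=\min\Bigl\{k,\Bigl\lfloor\frac{d+1}{2}\Bigr\rfloor\Bigr\}.
\]
My plan is to view the problem as a finite optimization. First, the worst configuration of vertices can be taken on the unit sphere. Then, for fixed $y$, the squared distance to the $k$-fold barycenter with multiplicities $m$ is a quadratic in the Gram matrix of $\sigma$. I would choose the rounding $m$ of $k\lambda$ carefully: round down, then distribute the remaining units to the coordinates with largest fractional parts. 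This reduces the error to a quadratic form in the fractional residuals, with $s$ coordinates rounded up and $d+1-s$ rounded down. This form should be maximized at equal residuals, which produces the factor $s(d+1-s)$.

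The delicate part is showing that the maximum over Gram matrices of simplices in the unit ball is attained by the regular one. I would attempt this via a Lagrangian/averaging argument, symmetrizing over permutations, while a case analysis on the number of rounded-up coordinates handles the regime $k<\lfloor (d+1)/2\rfloor$.
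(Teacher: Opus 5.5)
Your reduction to $\delta=1$ and your split into the four families match the paper. Your minicenter upper bound is exactly the paper's argument: $\|y-w\|^2+\sum_i\lambda_i\|y-v_i\|^2\le1$, and there is nothing to refine. The remaining parts have genuine gaps.

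(i) \emph{Sharpness.} The regular simplex is \emph{not} extremal for the minicenter and centroid refinements. Inscribed in $S^{d-1}$, its minicenter and centroid are both $0$, and every point of $\conv(\sigma)$ lies within $\sqrt{d/(2(d+1))}$ of $\sigma\cup\{0\}$. For $d=2$ this is $1/\sqrt3$, below both $2/3$ and $1/\sqrt2$. You need the configurations of \Cref{rem:extremal_configuration_minicenter,rem:extremal_configuration_centroid}: $\{-e_1,e_1,e_2,\dots,e_d\}$ with $y=(e_1+e_2)/2$ for the minicenter, and degenerate simplices for the centroid.

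(ii) \emph{Centroid upper bound.} A distribution with mean $y$ cannot do the job. If $y$ lies on a facet, no distribution on $\sigma\cup\{c\}$ with mean $y$ charges $c$. The vertex bound $\sqrt{1-\|y\|^2}$ can then equal $1$: take $v_0=e_1$, $v_1=-e_1$, $v_2=e_2$ and $y=0$, while $c=e_2/3$ is at distance $1/3$. What is missing is an independent estimate of $\|y-c\|$. For $d\ge3$ and $\lambda_0\le1/2$, the paper writes $y-c=\beta(y-z)$ with $\beta=\frac{d-1}{d+1}$ and $z$ of barycentric coordinates $\frac{1-2\lambda_i}{d-1}$. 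It balances $\beta(1+\|y\|)$ against $\sqrt{1-\|y\|^2}$ at $\|y\|=\frac{2d}{d^2+1}$, and needs a separate construction when $\lambda_0>1/2$.

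(iii) \emph{Barycentric upper bound.} Pivotal sampling needs $r\lambda_i\le1$, so with $r=4$ it gives $1/2$ only when $\max_i\lambda_i\le1/4$. Your plan is silent on the regimes $\lambda_0\ge1/2$ and $1/4<\lambda_0<1/2$, which carry most of the work. In the first, $v_0$ and the midpoints $(v_0+v_i)/2$ lie in a ball of radius $1/2$. In the second, edge midpoints are played against triple barycenters in two subcases. Also, \Cref{lem:pivotal_sampling} does not improve on Maurey's bound $\frac1r(1-\|y\|^2)$; it only guarantees distinct vertices.

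\emph{The $k$-fold case.} Your rounding rule (round down, then add a unit to the $r$ largest fractional parts) ignores the geometry and fails. Take $d=k=2$, so $R_{2,2}=1/2$, with $v_0=-e_1$, $v_1$ on the circle near $-e_1$, $v_2=e_1$, and $\lambda=(0.34,0.331,0.329)$. Your rule picks $(v_0+v_1)/2\approx-e_1$, at distance $\approx0.658$ from $y\approx-0.342e_1$, whereas $(v_0+v_2)/2$ is within $0.35$. So the worst case of your fixed-pattern quadratic form exceeds $R_{d,k}^2$. For the regular simplex your rule is in fact optimal, so the maximizer of your form is not regular. Also, the number of rounded-up coordinates is $r=\sum_i b_i$, which varies with $y$. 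The parameter $s$ enters only through $r(d+1-r)\le s(d+1-s)$.

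The missing idea is an \emph{adaptive} choice of the $r$-subset. \Cref{lem:covering_without_repetitions} shows that any point with coordinates at most $1/r$ lies within $\sqrt{(d+1-r)/(rd)}$ of \emph{some} barycenter of $r$ distinct vertices. Symmetrization cannot prove this lemma. The map $G\mapsto\Phi(G,\lambda)$ is concave, so averaging over permutations helps only when $\lambda$ is the centroid, whereas the residual point ranges over all of $\{\mu\in\Delta_d:\mu_i\le1/r\}$. The paper instead proceeds as follows:
\begin{itemize}
\item it takes a joint maximizer $(V,\lambda)$ on the sphere (\Cref{lem:sphere_to_ball}) and disposes of boundary cases by induction on $d$;
\item it characterizes admissible Gram variations ($\diag H=0$, $\tau_G^\top H\tau_G=0$) and applies Gordan's lemma to get weights on the active subsets with prescribed first and second moments;
\item it derives $\lambda_i(\frac1r-\lambda_i)=\Phi^*\tau_i(1-\tau_i)$ and $t_it_j\ge\Phi^*$, and reaches a contradiction by a convexity and summation argument.
\end{itemize}
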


\begin{proof}
	The four cases follow, respectively, from \Cref{prop:steiner_covering_minicenter,prop:steiner_covering_centroid,prop:steiner_covering_barycentric} and \Cref{thm:steiner_covering_kfold}.
\end{proof}

\begin{table}[H]
	\centering
	\renewcommand{\arraystretch}{1.5}
	\setlength{\tabcolsep}{4pt}
	\begin{tabular}{|c|c|c|c|c|}
		\hline
		Refinement & Minicenter & Centroid & Barycentric & $k$-Edgewise \\ \hline
		$\displaystyle \alpha$
		\rule[-3.5em]{0pt}{7.5em}
		& 
		$\displaystyle \frac{1}{\sqrt{2}}$
		&
		\renewcommand{\arraystretch}{2.5}
		$\left\{
		\begin{array}{ll}
			\displaystyle\frac23 & \text{if } d=2,\\[.1mm]
			\displaystyle\frac{d^2-1}{d^2+1} & \text{if } d\geq3
		\end{array}
		\right.$
		& 
		$\displaystyle \frac{1}{2}$
		&
		\renewcommand{\arraystretch}{2.5}
		\begin{tabular}{c}
			$\displaystyle \sqrt{\frac{s(d+1-s)}{dk^2}}$, \\[.1mm]
			$\displaystyle s=\min\left\{k,\left\lfloor\frac{d+1}{2}\right\rfloor\right\}$
		\end{tabular}
		\\\hline
	\end{tabular}
	\caption{Optimal Euclidean covering constant $\alpha$ for the refinement schemes considered in \Cref{thm:steiner_covering}.}
	\label{table:constant_alpha}
\end{table}

We then apply this estimate facet by facet to $\Del{X}$, while controlling the distortion caused by radial projection. 
This gives the following global spherical estimate:

\begin{theorem}[Contraction theorem, proof p.~\pageref{proof:shrinkingrefinements}]
	\label{thm:shrinkingrefinements}
	Let $X\subset S^d$ be finite, and let $Y$ denote the (spherical) Steiner points associated with $\Del{X}$.
	Assume that $\covrad(X)\leq \pi/3$. 
	Then
	\[
	\covrad(X\cup Y)\leq \tilde\alpha\, \covrad(X),
	\]
	where $\tilde\alpha = \alpha/\cos(\covrad(X))$ and $\alpha$ depends on the chosen refinement, as given in \Cref{table:constant_alpha}.
\end{theorem}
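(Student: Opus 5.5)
We reduce the spherical statement to \Cref{thm:steiner_covering}, applied facet by facet, and then account for the distortion caused by radial projection. Write $r=\covrad(X)\leq\pi/3$. By \Cref{lem:equality_covering_circumradius}, every facet $\sigma$ of $\Del{X}$ has spherical circumradius at most $r$. So there is a point $c_\sigma\in S^d$ with $\d(c_\sigma,v)\leq r$ for all vertices $v$ of $\sigma$; we take $c_\sigma$ to be the spherical circumcenter. Fix $y\in S^d$. Since $X$ is admissible ($r<\pi/2$), the radial projection is a homeomorphism $\geomreal{\Del{X}}\to S^d$. Hence $y$ is the radial projection of some point $z\in\conv(\sigma)$ for some facet $\sigma$. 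It then suffices to find $x\in X\cup Y$ with $\d(x,y)\leq \alpha r/\cos r$.

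The main step is to work in the tangent hyperplane $H$ to $S^d$ at $c=c_\sigma$, rather than in the affine hull of $\sigma$. Let $\pi_H$ denote the central (gnomonic) projection from the origin onto $H$. The vertices of $\sigma$ lie in the spherical cap of radius $r$ around $c$. Their gnomonic images therefore lie in the Euclidean ball of $H$ centered at $c$ with radius $\tan r$. Gnomonic projection sends great-circle arcs to segments, and it commutes with radial projection up to positive scaling. So the ray through $z$ meets $H$ at a point $z'\in\conv(\pi_H(\sigma))$. The Steiner points, however, are defined as Euclidean barycenters in $\R^{d+1}$ followed by projection. A barycenter $\sum\lambda_i v_i$ is sent to $\sum\lambda_i\langle v_i,c\rangle \pi_H(v_i)/\sum\lambda_i\langle v_i,c\rangle$, which is a reweighted barycenter. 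To handle this, I would first try to apply \Cref{thm:steiner_covering} directly to the simplex $\sigma$ in its own affine hull. The vertices lie on the sphere of $\R^{d+1}$ cut by that hyperplane, with Euclidean circumradius $\rho=\sin(\text{circumradius})\leq\sin r$. Note that $\sigma$ is contained in its circumball, which has radius at most $\sin r$. The theorem then gives a Steiner point or vertex $x_{\mathrm{Euc}}$ with $\|x_{\mathrm{Euc}}-z\|\leq\alpha\sin r$; for the minicenter the minimal enclosing ball radius is even smaller. Both $z$ and $x_{\mathrm{Euc}}$ lie in the affine hull of $\sigma$, which is at distance $\cos(\text{circumradius})\geq\cos r$ from the origin.

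The remaining step is the projection estimate. The radial projection $p\mapsto p/\|p\|$, restricted to the hyperplane at distance $h\geq\cos r$ from the origin, increases distances by at most a factor $1/h$ in the chordal sense. More precisely, the angle between two points $p,q$ in that hyperplane satisfies $\angle(p,q)\leq$ (arc length of the geodesic image) $\leq \|p-q\|/\min\|{\cdot}\|$ along the segment, since the derivative of radial projection has norm $1/\|p\|\leq 1/h$. This yields $\d(x,y)\leq \alpha\sin r/\cos r$. Since $\sin r\leq r$, we get $\d(x,y)\leq \alpha r/\cos r=\tilde\alpha\,r$, as required. Taking the supremum over $y$ gives the claim. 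I expect the delicate point to be justifying that the segment $[z,x_{\mathrm{Euc}}]$ stays at distance at least $\cos r$ from the origin. This holds because the segment lies in the affine hull at that distance. One must also check that the assumption $r\leq\pi/3$ guarantees the Delaunay facet containing $y$ has nonnegative height, i.e.\ the hull is well defined, and that the bound is meaningful ($\tilde\alpha<1$ needs $\cos r>\alpha$, which is only used downstream).
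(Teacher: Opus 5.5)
Your proof is correct. Its skeleton matches the paper's: you apply \Cref{thm:steiner_covering} in the affine hull of each facet, where the circumradius is $\sin\Delta\leq\sin r$. You then use that radial projection has derivative of norm at most $1/\|p\|\leq 1/\cos\Delta$ on that hyperplane.

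The difference is in how you pass to geodesic distance. The paper's \Cref{lem:spherical_distortion_distances} records only the \emph{chordal} bound $\alpha\tan r$ between the projected points. It then converts chords to arcs using the convexity estimate of \Cref{lem:spherical_convexity}. That conversion is the only place where the hypothesis $\covrad(X)\leq\pi/3$ is used, to guarantee $c=\alpha/\cos r\leq 2$.

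You instead integrate the derivative bound along the segment $[z,x_{\mathrm{Euc}}]$. Since the geodesic distance is at most the length of the radial image of that segment, you get $\d(x,y)\leq\alpha\tan r$ directly, and then $\alpha\tan r\leq\alpha r/\cos r$ from $\sin r\leq r$. This route is shorter and bypasses \Cref{lem:spherical_convexity}. It also gives a slightly sharper intermediate bound, since the paper's chain only yields $2\arcsin(\alpha\tan r/2)\geq\alpha\tan r$ before relaxing. Finally, it needs only $\covrad(X)<\pi/2$, so your argument never uses the $\pi/3$ assumption. The paper's route, in exchange, packages the projection step as a self-contained Lipschitz statement about chordal distances.

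The gnomonic-projection digression and the closing worry about ``nonnegative height'' play no role in your argument and can be removed.
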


\begin{remark}
	The factor $1/\cos(\covrad(X))$ accounts for the distortion of lengths caused by radial projection onto the sphere.
	It tends to $1$ as the covering radius tends to zero.
	In the Euclidean setting, no such distortion occurs, and the same estimate holds with $\tilde\alpha=\alpha$.
\end{remark}

Iterating this estimate directly yields our main result for Delaunay refinements.

\begin{corollary}[Iterated Delaunay refinement]
	\label{cor:iterated_refinement}
	Let the initial sample $X_0\subset S^d$ be sufficiently dense so that $\covrad(X_0)\leq\pi/3$.
	Then the $n^\mathrm{th}$ iteration of Delaunay refinement satisfies
	\[
		\covrad(X_n) \leq \tilde\alpha^n \covrad(X_0),
	\]
	where $\tilde\alpha = \alpha/\cos(\covrad(X_0))$ and $\alpha$ depends on the chosen refinement, as given in \Cref{table:constant_alpha}.
	In particular, if $\tilde\alpha < 1$, then the maximal diameter of simplices of $\Del{X_n}$ tends to zero.
\end{corollary}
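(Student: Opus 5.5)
The plan is induction on $n$, using \Cref{thm:shrinkingrefinements} at each step and keeping track of the hypothesis $\covrad(X_i)\leq\pi/3$. The statement is trivial when $n=0$. For the inductive step, suppose $\covrad(X_i)\leq \tilde\alpha^i\covrad(X_0)$. We also suppose, as part of the induction, that $\covrad(X_i)\leq\covrad(X_0)\leq\pi/3$. Then \Cref{thm:shrinkingrefinements} applies to $X_i$, and $X_{i+1}=X_i\cup Y_i$ satisfies
\[
\covrad(X_{i+1})\leq \frac{\alpha}{\cos(\covrad(X_i))}\covrad(X_i).
\]
The function $r\mapsto 1/\cos r$ is increasing on $[0,\pi/2)$. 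Hence $\covrad(X_i)\leq\covrad(X_0)$ gives $\alpha/\cos(\covrad(X_i))\leq\tilde\alpha$, and therefore $\covrad(X_{i+1})\leq\tilde\alpha\,\covrad(X_i)\leq\tilde\alpha^{i+1}\covrad(X_0)$.

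The only delicate point is preserving the hypothesis $\covrad(X_{i+1})\leq\covrad(X_0)$. This does not depend on $\tilde\alpha<1$, because covering radius is monotone under adding points: $X_i\subset X_{i+1}$ gives $\covrad(X_{i+1})\leq\covrad(X_i)$ directly from \Cref{eq:def_covering_radius}, whatever the value of $\tilde\alpha$. For this to hold, the Steiner points must be added to the vertex set and not replace it, and this is how $X_{i+1}=X_i\cup Y_i$ is defined. This also shows that every $X_i$ is admissible, since $\covrad(X_i)\leq\pi/3<\pi/2$.

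For the final claim, assume $\tilde\alpha<1$. Then $\covrad(X_n)\to0$. By \Cref{lem:equality_covering_circumradius}, $\maxcirc(\Del{X_n})=\covrad(X_n)\to 0$. Every simplex lies in the geodesic ball given by its circumscribing sphere, so its spherical diameter is at most twice its circumradius. Hence $\maxdiam(\Del{X_n})\leq 2\covrad(X_n)\to0$. Euclidean diameters of the embedded simplices are bounded above by geodesic ones, so they tend to zero as well.

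I expect no real obstacle here. The one step to get right is the monotonicity $\covrad(X_{i+1})\leq\covrad(X_i)$: it keeps the hypothesis $\covrad\leq\pi/3$ of \Cref{thm:shrinkingrefinements} valid at every step, and it lets the uniform constant $\tilde\alpha$ be defined from $X_0$ alone.
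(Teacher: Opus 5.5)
Your proof is correct and follows the paper's route, which simply iterates \Cref{thm:shrinkingrefinements} and then uses $\maxdiam\leq 2\maxcirc=2\covrad$. Your induction makes explicit what the paper leaves implicit: the monotonicity $\covrad(X_{i+1})\leq\covrad(X_i)$ from $X_i\subset X_{i+1}$ keeps the hypothesis $\covrad(X_i)\leq\pi/3$ valid at every step and bounds each factor $\alpha/\cos(\covrad(X_i))$ by the uniform constant $\tilde\alpha$.
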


\begin{remark}
	This result highlights a notable distinction between Delaunay refinement and standard subdivision.
	Under Delaunay barycentric refinement, the maximal diameter decreases asymptotically by a factor $1/2$, whereas the best general bound for barycentric subdivision is $d/(d+1)$; see \Cref{eq:shrinking_classical_barycentric}.
	Centroid refinement achieves the same factor in dimension $d=2$, while introducing only one new vertex per facet.
	Similarly, the Delaunay analogue of $2$-edgewise subdivision has contraction factor $\sqrt{(d-1)/(2d)}$, improving on the classical edgewise bound in \Cref{eq:shrinking_classical_edgewise} and, in dimension $3$, on the red refinement bound $1/\sqrt{2}$.
\end{remark}

\subsection{Proof of the contraction theorem}

The proof uses two further auxiliary lemmas, stated after the proof.

\begin{proof}[Proof of \Cref{thm:shrinkingrefinements}]
	\label{proof:shrinkingrefinements}
	To prove that $X\cup Y$ is sufficiently dense, we proceed facet by facet in $\Del{X}$.
	We begin by estimating the covering radius in the Euclidean facets (i.e., in the convex hulls of the vertices) using \Cref{thm:steiner_covering}.
	We then transfer this estimate to the sphere by means of \Cref{lem:spherical_distortion_distances} and use \Cref{lem:spherical_convexity} to express the bound in terms of geodesic distance.
	
	Given a facet $\sigma\in\Del{X}$, let $Y_\mathrm{Euc}^\sigma$ denote the Euclidean Steiner points associated with $\sigma$, before projection onto the sphere.
	Also let $\delta$ denote the Euclidean circumradius of $\sigma$, computed in the $d$-dimensional affine subspace it spans.
	By \Cref{thm:steiner_covering}, for every point $y$ in the convex hull of $\sigma$, there exists a point $x\in \sigma\cup Y_\mathrm{Euc}^\sigma$ such that
	\[
	\|x-y\| \leq \alpha\, \delta.
	\]
	The Euclidean circumradius $\delta$ is related to the spherical circumradius $\Delta$ by
	\[
	\delta = \sin(\Delta).
	\]
	Consequently, \Cref{lem:spherical_distortion_distances}, stated below, yields
	\[
	\left\|\frac{x}{\|x\|}-\frac{y}{\|y\|} \right\|
	\leq
	\frac{1}{\cos(\Delta)} \|x-y\|
	\leq
	\frac{1}{\cos(\Delta)} \alpha\sin(\Delta).
	\]
	By \Cref{lem:equality_covering_circumradius}, the spherical circumradius of $\sigma$ is at most the covering radius of $X$:
	\[
	\Delta \leq \covrad(X).
	\]
	Since the function $t\mapsto\sin(t)/\cos(t)$ is increasing on $[0,\pi/3]$, we deduce that
	\[
	\left\|\frac{x}{\|x\|}-\frac{y}{\|y\|} \right\|
	\leq
	\frac{1}{\cos(\covrad(X))} \alpha\sin(\covrad(X)).
	\]
	We turn to the spherical Steiner points $Y$.
	These are the projection of the Euclidean points:
	\[
		Y = \left\{\frac{x}{\|x\|} ~\bigg|~ x\in Y_\mathrm{Euc}^\sigma \text{ for some facet } \sigma\in\Del{X} \right\}.
	\]
	Since every point of $S^d$ lies in the radial projection of some facet of $\Del{X}$, we can apply the inequality above facet by facet.
	We get that for all $\tilde y\in S^d$, there exists $\tilde x\in X\cup Y$ such that
	\[
	\left\| \tilde x-\tilde y\right\|
	\leq
	\frac{\alpha}{\cos(\covrad(X))} \sin(\covrad(X)).
	\]
	To conclude, we apply \Cref{lem:spherical_convexity} with $c=\alpha/\cos(\covrad(X))$ and $t=\sin(\covrad(X))$: the geodesic distance is bounded above by
	\[
	\d(\tilde x,\tilde y)
	\leq
	\frac{\alpha}{\cos(\covrad(X))} \covrad(X).
	\]
	The lemma holds provided that the following two conditions are satisfied: 
	\[
	\frac{\alpha}{\cos(\covrad(X))}\leq2,
	\qquad
	\sin(\covrad(X))\leq1.
	\] 
	The first condition is true since $\alpha\leq 1$ by \Cref{table:constant_alpha} and $\covrad(X)\leq \pi/3$ by assumption.
	The second condition is immediate.
\end{proof}

\begin{lemma}
	\label{lem:spherical_distortion_distances}
	For $\sigma=\{v_0,\dots,v_d\}\subset S^d$, denote by $\Delta$ its spherical circumradius and by $\conv(\sigma)\subset\R^{d+1}$ the Euclidean simplex it spans.
	Assume $\Delta<\pi/2$.
	For every $x,y\in\conv(\sigma)$,
	\[
	\left\|\frac{x}{\|x\|}-\frac{y}{\|y\|} \right\|
	\leq
	\frac{1}{\cos(\Delta)} \|x-y\|.
	\]
\end{lemma}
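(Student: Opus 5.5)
The plan is to reduce everything to a lower bound on $\|x\|$ and $\|y\|$, and then use the elementary inequality for normalizing vectors. Let $c\in\R^{d+1}$ be the center of the spherical circumball of $\sigma$. Then $\sigma$ lies on the cap $\{z\in S^d : \langle z,c\rangle \geq \cos\Delta\}$. Since every vertex satisfies $\langle v_i,c\rangle=\cos\Delta$ (all vertices lie on the circumsphere), linearity gives $\langle z,c\rangle=\cos\Delta$ for every $z\in\conv(\sigma)$. By Cauchy--Schwarz, $\|z\|\geq\cos\Delta>0$ for all $z\in\conv(\sigma)$. In other words, $\conv(\sigma)$ lies in the affine hyperplane $H=\{\langle z,c\rangle=\cos\Delta\}$, whose distance to the origin is $\cos\Delta$.

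The second step is a sharp version of the Lipschitz estimate for $z\mapsto z/\|z\|$ restricted to $H$. The crude bound $\|x/\|x\|-y/\|y\|\|\leq 2\|x-y\|/\max(\|x\|,\|y\|)$ loses a factor $2$, so I would instead compute the differential. For $z\in H$ and a tangent direction $h$ with $\langle h,c\rangle=0$, the derivative of $z/\|z\|$ is
\[
\frac{1}{\|z\|}\Big(h-\frac{\langle z,h\rangle}{\|z\|^2}z\Big),
\]
which is the projection of $h$ orthogonal to $z$, divided by $\|z\|$. Its norm is at most $\|h\|/\|z\|\leq\|h\|/\cos\Delta$.

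Then I integrate along the segment $[x,y]\subset\conv(\sigma)\subset H$, which lies in $H$ by convexity. This gives that the length of the projected curve is at most $\|x-y\|/\cos\Delta$. The chord between the projected endpoints is at most this length, which yields the claim.

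The only subtle point is avoiding the factor-$2$ loss. The differential argument handles it cleanly, so I expect no real obstacle beyond verifying that $\langle z,c\rangle=\cos\Delta$ on the whole simplex. That identity holds because the circumsphere of $\sigma$ on $S^d$ is the intersection of $S^d$ with the hyperplane $H$ through all the vertices.

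As an alternative that avoids calculus, one can use the identity
\[
\Big\|\frac{x}{\|x\|}-\frac{y}{\|y\|}\Big\|^2=\frac{\|x-y\|^2-(\|x\|-\|y\|)^2}{\|x\|\|y\|}.
\]
Bounding the numerator by $\|x-y\|^2$ and the denominator from below by $\cos^2\Delta$ gives the result immediately. I would present this version as the main proof and keep the differential argument as intuition.
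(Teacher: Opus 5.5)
Your proposal is correct. The first step matches the paper exactly: all vertices satisfy $\langle v_i,c\rangle=\cos\Delta$, so $\conv(\sigma)$ lies in the hyperplane $H$, and $\|z\|\geq\cos\Delta$ by Cauchy--Schwarz.

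Your differential argument, which you keep as ``intuition'', is precisely the paper's second step. The paper notes that the radial projection $\pi$ has $\|D\pi_p\|_{\mathrm{op}}=1/\|p\|$, and concludes that $\pi$ is $1/\cos(\Delta)$-Lipschitz on the convex set $H$.

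The identity you choose as your main proof takes a genuinely more elementary route:
\[
\Big\|\frac{x}{\|x\|}-\frac{y}{\|y\|}\Big\|^2=\frac{\|x-y\|^2-(\|x\|-\|y\|)^2}{\|x\|\,\|y\|}.
\]
It is correct, since both sides equal $2-2\langle x,y\rangle/(\|x\|\,\|y\|)$. This route has three advantages:
\begin{itemize}
\item it avoids differentials and integration along the segment;
\item it uses the lower bound $\|z\|\geq\cos\Delta$ only at the two endpoints, not along $[x,y]$, so convexity of $H$ plays no role;
\item it gives the slightly sharper factor $1/\sqrt{\|x\|\,\|y\|}$.
\end{itemize}
The paper's version instead shows directly where $1/\cos\Delta$ comes from: the local stretching of the radial projection. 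This matches how the factor is interpreted in the remark following the contraction theorem.
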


\begin{proof}
	Let $u\in S^d$ be the spherical circumcenter of $\sigma$.
	The great-circle distance from $u$ to every vertex $v_i$ is $\Delta$.
	In terms of angles, it means that 
	\[
	\langle u, v_i \rangle = \cos(\Delta).
	\]
	In particular, the vertices all lie in the hyperplane
	\[
	H = \{v \in \R^{d+1} \mid \langle u, v \rangle = \cos(\Delta)\}.
	\]
	Consequently, $\conv(\sigma)\subset H$.
	By Cauchy–Schwarz, it follows that every $p\in \conv(\sigma)$ satisfies
	\[
	\|p\|\geq \cos(\Delta).
	\]	
	Next, let $\pi\colon \R^{d+1}\setminus\{0\} \to S^d$ be the projection onto the sphere.
	Its operator norm at $p$ is
	\[
	\|D \pi_p\|_\mathrm{op} = \frac{1}{\|p\|}.
	\]
	We deduce that $\pi$ is $1/\cos(\Delta)$-Lipschitz on $H$, and the result follows.
\end{proof}

\begin{lemma}\label{lem:spherical_convexity}
	Let $h\colon t \mapsto 2\arcsin(t/2)$ be the map that converts Euclidean distance into spherical distance.
	For all $c \in [0,2]$ and $t\in[0,1]$, $h(ct)\leq c\arcsin(t)$.
\end{lemma}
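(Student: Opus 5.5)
We need to prove that for $c\in[0,2]$ and $t\in[0,1]$ we have $2\arcsin(ct/2)\leq c\arcsin(t)$. Note first that $ct/2\in[0,1]$, so the left-hand side is well defined. The plan is to fix $t$ and use a convexity argument in the variable $c$.

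Fix $t\in[0,1]$ and set
\[
\varphi(c)=2\arcsin(ct/2)
\]
for $c\in[0,2]$. The derivative is $\varphi'(c)=t/\sqrt{1-c^2t^2/4}$, which is increasing in $c$. Hence $\varphi$ is convex on $[0,2]$. The right-hand side $\psi(c)=c\arcsin(t)$ is linear in $c$.

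A convex function lies below the chord joining its endpoint values. It therefore suffices to check the inequality at $c=0$ and $c=2$.

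\begin{itemize}
\item At $c=0$, both sides vanish.
\item At $c=2$, the left-hand side is $2\arcsin(t)$ and the right-hand side is $2\arcsin(t)$, so equality holds.
\end{itemize}

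Since $\varphi$ is convex and agrees with the linear function $\psi$ at both endpoints of $[0,2]$, we get $\varphi(c)\leq\psi(c)$ for all $c\in[0,2]$. This is exactly the claimed inequality.

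The main step is the convexity of $\varphi$ in $c$, which follows directly from $\arcsin$ being convex on $[0,1]$ composed with a linear map. Everything else is checking the endpoints. The hypothesis $t\le 1$ is used only to make $\arcsin(t)$ and the argument $ct/2$ lie in the domain of $\arcsin$.
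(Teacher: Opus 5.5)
Your proof is correct and is essentially the paper's argument: the paper invokes convexity of $h$ on $[0,2t]$ and writes $ct=(1-\frac{c}{2})\cdot 0+\frac{c}{2}\cdot 2t$, which is exactly your chord comparison for $\varphi(c)=h(ct)$ between the endpoints $c=0$ and $c=2$. The only nuance is that $\varphi'$ blows up at $c=2$ when $t=1$; this is harmless, since convexity on the open interval extends to the closed one by continuity.
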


\begin{proof}
	This follows directly from the convexity of $h$ on the interval $[0, 2t]$:
	\begin{align*}
		h(ct) &= h\left(\left(1-\frac{c}{2}\right)\cdot0 + \left(\frac{c}{2}\right)\cdot 2t\right)
		\leq \left(1-\frac{c}{2}\right)\cdot h(0) + \left(\frac{c}{2}\right)\cdot h\left(2t\right),
	\end{align*}
	and the right-hand side equals $c\arcsin(t)$.
\end{proof}

\section{Covering bounds for minicenter, centroid, and barycentric refinements}
\label{sec:sharp_bounds_centroid_minicenter_barycentric}

In this section, we prove \Cref{thm:steiner_covering} for the minicenter, centroid, and barycentric refinements.
We show that, after adding the Steiner points to a simplex $\sigma\subset B^d$, every point of $\conv(\sigma)$ lies at distance at most $\alpha$ from either a vertex of $\sigma$ or a Steiner point.
Equivalently, the covering radius of the old and new points relative to $\conv(\sigma)$ is at most $\alpha$.

The proofs of \Cref{prop:steiner_covering_minicenter,prop:steiner_covering_centroid,prop:steiner_covering_barycentric} follow a similar mechanism: we study different cases, based on the values of the largest barycentric coordinates.
For the barycentric refinement, we also use a probabilistic argument, stated in \Cref{lem:pivotal_sampling}.
The $k$-fold edgewise refinement requires a more involved optimization argument and is treated in \Cref{sec:sharp_bounds_kfold}.
Throughout the proofs, we assume $d\geq2$; the one-dimensional case (simplices on the real line) is immediate.

In each case, the constant $\alpha$ appearing in \Cref{thm:steiner_covering} is sharp.
The extremal configurations for the centroid, minicenter, and barycentric refinements are described in \Cref{rem:extremal_configuration_minicenter,rem:extremal_configuration_centroid,rem:extremal_configuration_barycenter}.
In particular, the regular simplex is not always extremal.

\subsection{Covering bound for minicenter refinement}

\begin{proposition}
	\label{prop:steiner_covering_minicenter}
	Consider a simplex in the unit ball $\sigma=\{v_0,\dots,v_{d}\}\subset B^d\subset\R^{d}$ and let $w$ be its minicenter. 
	For every $y\in\conv(\sigma)$, there exists $x\in \sigma\cup \{w\}$ such that $\|x-y\| \leq 1/\sqrt{2}$.
\end{proposition}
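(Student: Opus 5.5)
The plan is to compare $y$ with the minicenter $w$ and with the vertices simultaneously, using the variance identity for convex combinations. Let $r$ be the radius of the minimum enclosing ball of $\sigma$, centered at $w$. Since $\sigma\subset B^d$, the unit ball is an enclosing ball, so $r\leq 1$. It therefore suffices to prove the scale-invariant statement
\[
\min\Big\{\|y-w\|^2,\ \min_{j}\|y-v_j\|^2\Big\}\leq \frac{r^2}{2}
\qquad\text{for all } y\in\conv(\sigma).
\]
This also gives the final sentence of \Cref{thm:steiner_covering}: if $\sigma$ lies in a ball of radius $\delta$, then $r\leq\delta$.

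The key step is to fix $y$ and write it as a convex combination $y=\sum_j \mu_j v_j$ with $\mu_j\geq0$ and $\sum_j\mu_j=1$. The parallel-axis (bias--variance) identity, applied with reference point $w$, gives
\[
\sum_{j=0}^d \mu_j\|y-v_j\|^2 \;=\; \sum_{j=0}^d \mu_j \|v_j-w\|^2 \;-\; \|y-w\|^2 .
\]
To see this, expand $\|v_j-w\|^2=\|v_j-y\|^2+\|y-w\|^2+2\langle v_j-y,\,y-w\rangle$ and note that the cross terms vanish after averaging, because $\sum_j\mu_j(v_j-y)=0$. Every vertex lies in the enclosing ball, so $\|v_j-w\|\leq r$, and the identity yields
\[
\min_j \|y-v_j\|^2 \;\leq\; \sum_j\mu_j\|y-v_j\|^2 \;\leq\; r^2-\|y-w\|^2 .
\]

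We then finish with a two-case split. If $\|y-w\|^2\leq r^2/2$, take $x=w$. Otherwise, the right-hand side above is less than $r^2/2$, so some vertex $v_j$ satisfies $\|y-v_j\|<r/\sqrt2$, and we take $x=v_j$. In both cases $\|x-y\|\leq r/\sqrt2\leq 1/\sqrt2$.

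I do not expect a real obstacle in the covering bound itself. The argument uses only that $w$ is the center of some ball of radius at most $1$ containing $\sigma$; no finer property of the minicenter is needed for the upper bound. The only point to check carefully is the cross-term cancellation in the identity, which relies on $y$ being exactly the $\mu$-weighted mean of the vertices.

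The real difficulty is optimality of the constant, which is not part of this proposition and would be handled separately. For that I would look for a configuration where both quantities equal $r^2/2$ at once. I would try a right-angled configuration: an isoceles right triangle with $r=1$ and $w$ the hypotenuse midpoint. There I would search for a point $y\in\conv(\sigma)$ with $\|y-w\|=1/\sqrt2$ that is also at distance $1/\sqrt2$ from the nearest vertex, i.e.\ a point where the inequality chain becomes tight. Higher-dimensional extremal examples would be obtained from lifts of this triangle.
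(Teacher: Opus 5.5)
Your proof is correct and is essentially the paper's argument. The paper uses the same expansion, $\sum_i\lambda_i\|y-v_i\|^2=\sum_i\lambda_i\|v_i-w\|^2-\|y-w\|^2\le 1-\|y-w\|^2$, and concludes in one line that the smallest of the distances $t$ satisfies $2t^2\le 1$, which is your two-case split; your use of the minimum enclosing radius $r\le 1$ is only the scale-invariant form, and your isosceles right triangle guess matches the paper's extremal example $\{-e_1,e_1,e_2\}$ with $y=(e_1+e_2)/2$.
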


\begin{proof}
	Denote $r_*=\|y-w\|$ and $r_i = \|y-v_i\|$ for all $i\in\{0,\dots,d\}$.
	We show that at least one value among $r_*,r_0,\dots,r_d$ is less than or equal to $1/\sqrt{2}$.
	
	By assumption, the ball of radius $1$ centered at the origin contains $\sigma$. Hence, by minimality of the minicenter, every vertex satisfies $\|v_i-w\|\leq 1$.
	
	Write $y=\sum_{i=0}^d \lambda_i v_i$ in barycentric coordinates.
	Using the decomposition $y-v_i=(y-w)-(v_i-w)$, we get
	\begin{align*}
		\sum_{i=0}^d \lambda_i \|y-v_i\|^2
		&=
		\sum_{i=0}^d \lambda_i
		\bigg(
		\|y-w\|^2+\|v_i-w\|^2
		-2\langle y-w,v_i-w\rangle\bigg) \\
		&=
		r_*^2
		+\sum_{i=0}^d \lambda_i \|v_i-w\|^2
		-2\bigg\langle y-w,\sum_{i=0}^d \lambda_i(v_i-w)\bigg\rangle .
	\end{align*}
	Since
	\[
	\sum_{i=0}^d \lambda_i(v_i-w)=y-w,
	\]
	it follows that
	\[
	\sum_{i=0}^d \lambda_i \|y-v_i\|^2
	=
	\sum_{i=0}^d \lambda_i \|v_i-w\|^2
	-r_*^2
	\leq
	1-r_*^2.
	\]
	Equivalently,
	\[
	r_*^2+\sum_{i=0}^{d}\lambda_ir_i^2 \leq 1.
	\]
	Let $t=\min\{r_*,r_0,\dots,r_d\}$.
	From the above inequality we deduce that $2 t^2 \leq 1$, as claimed.
\end{proof}

\begin{remark}
	\label{rem:extremal_configuration_minicenter}
	In dimension $d\geq2$, the constant $1/\sqrt{2}$ of \Cref{prop:steiner_covering_minicenter} is sharp, and is not attained by the regular simplex.
	Indeed, let $(e_1,\dots,e_d)$ be the standard basis of $\R^d$, and let 
	\[
	\sigma=\{-e_1,e_1,e_2,\dots,e_d\}\subset \R^d.
	\] 
	The pair $\{-e_1,e_1\}$ forces the minicenter to be the origin.
	Now, the point $y=(e_1+e_2)/2$ is at distance $1/\sqrt{2}$ from $e_1$, $e_2$, and the origin, and is farther away from the other vertices.
	
	For comparison, the regular simplex on the unit sphere is not extremal. 
	Its minicenter is $0$, and the maximum of the distance function to $\sigma\cup\{0\}$ is $\sqrt{d/(2(d+1))}$.
	This value is attained on the edges, for instance at the point $y=((d+2)v_0+dv_1)/(2(d+1))$.
\end{remark}

\subsection{Covering bound for centroid refinement}

\begin{proposition}
	\label{prop:steiner_covering_centroid}
	Consider a simplex in the unit ball $\sigma=\{v_0,\dots,v_{d}\}\subset B^d\subset\R^{d}$ and let $c$ be its centroid. 
	For every $y\in\conv(\sigma)$, there exists $x\in \sigma\cup \{c\}$ such that $\|x-y\| \leq \alpha_d$, where
	\[
	\alpha_d=
	\left\{
	\begin{array}{ll}
		\frac23 &\text{ if }  d=2,\\[2mm]
		\frac{d^2-1}{d^2+1} &\text{ if } d\geq3.
	\end{array}
	\right.
	\]
\end{proposition}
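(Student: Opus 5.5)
The plan is to follow the same mechanism as in \Cref{prop:steiner_covering_minicenter}: find a convex combination of the squared distances $r_c^2=\|y-c\|^2$ and $r_i^2=\|y-v_i\|^2$ that is bounded by $\alpha_d^2$. The minimum of these distances is then at most $\alpha_d$. The basic tool is the parallel-axis identity. For any probability weights $\mu_0,\dots,\mu_d$ with mean $m=\sum_i\mu_i v_i$, and any point $p$,
\[
\sum_{i=0}^d \mu_i\|p-v_i\|^2=\|p-m\|^2+\sum_{i=0}^d\mu_i\|v_i-m\|^2\leq \|p-m\|^2+1-\|m\|^2 ,
\]
where the inequality uses $\|v_i\|\leq1$. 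Taking $p=y$ and $\mu=\lambda$, the barycentric coordinates of $y$, gives $\sum_i\lambda_i r_i^2\leq 1-\|y\|^2$. Taking $p=y$ and $\mu$ uniform gives $\frac1{d+1}\sum_i r_i^2 = r_c^2+\frac1{d+1}\sum_i\|v_i-c\|^2$, which links $r_c$ to the vertex distances.

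Next I would work in Gram-matrix coordinates. For fixed $\lambda$, each squared distance is a quadratic form in the barycentric vector:
\[
r_i^2=(\lambda-e_i)^{\top}G(\lambda-e_i),\qquad r_c^2=(\lambda-u)^{\top}G(\lambda-u),
\]
where $u=\frac1{d+1}\mathbf 1$, $G=(\langle v_i,v_j\rangle)$ is positive semidefinite, and $G_{ii}\leq1$. Maximizing $\min\{r_c^2,r_0^2,\dots,r_d^2\}$ over admissible $G$ is a semidefinite program. Its dual certificate is exactly a family of weights $(\nu,\mu_0,\dots,\mu_d)$ for which $\nu r_c^2+\sum_i\mu_i r_i^2\leq\alpha_d^2$ follows from $G\succeq0$ and $G_{ii}\leq1$. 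So I would guess the weights from the extremal configurations and verify the inequality directly. By symmetry, the certificate should only depend on $\lambda$ through a case split on the largest coordinate $\lambda_{\max}=\lambda_0$.

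If $\lambda_0$ is large, $y$ is close to $v_0$. Here I would bound $r_0^2$ by combining $\|y-v_0\|=\|\sum_{i\ne0}\lambda_i(v_i-v_0)\|$ with the identity above, which uses weight essentially on $r_0$ alone. If $\lambda_0$ is small, all coordinates are close to $1/(d+1)$, and the certificate puts weight on $r_c$ together with the $r_i$ weighted by $\lambda_i$. In the middle regime, the weights interpolate. Throughout, I expect the worst cases to be degenerate configurations: vertices clustered into two antipodal groups, or some vertices coinciding. Consistent with \Cref{rem:extremal_configuration_centroid}, the regular simplex should not be extremal. Computing the optimum over such one-parameter families of clusterings should produce $\frac23$ for $d=2$ and $\frac{d^2-1}{d^2+1}$ for $d\geq3$. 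The best clustering presumably changes type between $d=2$ and $d=3$, which explains the two formulas.

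The main obstacle is the last step. I need to find the right weights explicitly, as functions of $\lambda$ and $d$, and verify that the resulting quadratic inequality in $\lambda$ holds on the whole simplex of barycentric coordinates. The difficulty is the switch of regimes where the minimizing distance changes from a vertex to the centroid. I would first solve the reduced problem in which $y$ lies on the line through $c$ and a vertex, or through two vertex clusters, to pin down the weights and the thresholds on $\lambda_0$. I would then try to show that a general $\lambda$ can be reduced to this symmetric case by averaging, using concavity of the dual bound in $\lambda$.
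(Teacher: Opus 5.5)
\textbf{There is a genuine gap.} Your proposal is a search strategy, not a proof. Everything that actually bounds a distance (the explicit weights, the thresholds on $\lambda_0$, and the verification) is deferred to ``the last step''. Duality tells you that a certificate exists for each fixed $\lambda$ only \emph{because} the statement is true, so it cannot replace exhibiting one.

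Moreover, the concrete devices you sketch do not work as stated.
\begin{itemize}
\item \emph{Averaging to a symmetric $\lambda$.} This requires the worst-case value $\Phi(\lambda)=\max_G\min\{r_c^2,r_0^2,\dots,r_d^2\}$ to be concave in $\lambda$. But $\Phi$ vanishes at $\lambda=\frac1{d+1}(1,\dots,1)$, which is the average of the permutations of any $\lambda$, while $\Phi(\lambda)>0$ at every other non-vertex $\lambda$. A symmetric concave function cannot behave this way. For weights independent of $\lambda$, your dual bound is a maximum over $G$ of convex quadratics in $\lambda$, hence convex, which points the other way.
\item \emph{Weight on $r_0$ alone for large $\lambda_0$.} This works only when $2(1-\lambda_0)\le\alpha_d$. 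Writing $z=\sum_{i\geq1}\frac{\lambda_i}{1-\lambda_0}v_i$, one has $\|y-v_0\|=(1-\lambda_0)\|z-v_0\|$, which can be close to $2(1-\lambda_0)$. So for $\frac12<\lambda_0<1-\frac{\alpha_d}{2}$ the centroid must enter, and this is exactly where the delicate work lies.
\item \emph{The uniform-weight identity.} It only yields $r_c^2\le\frac1{d+1}\sum_i r_i^2$, which does not trade $r_c$ against $\|y\|$.
\end{itemize}

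\textbf{The missing idea} is an identity writing $y-c$ as a contracted difference of two points of controlled norm.

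\emph{Case $\max_i\lambda_i\le\frac12$.} The point $q=\sum_i\frac{1-2\lambda_i}{d-1}v_i$ lies in $\conv(\sigma)$, and $y-c=\beta(y-q)$ with $\beta=\frac{d-1}{d+1}$. Hence $\|y-c\|\le\beta(1+\|y\|)$. Combine this with your bound $\sum_i\lambda_i r_i^2\le 1-\|y\|^2$ and split at $\|y\|=t=\frac{2d}{d^2+1}$, where $\beta(1+t)=\sqrt{1-t^2}=\alpha_d$. In your language, this is the certificate $t\,r_c^2+(1-t)\sum_i\lambda_i r_i^2\le 1-t^2=\alpha_d^2$. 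It follows from $\|y-q\|^2\le(1+\frac1t)\|y\|^2+(1+t)\|q\|^2$ and the identity $\beta^2(1+t)=1-t$.

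\emph{Case $\lambda_0>\frac12$.} The paper uses a second construction. With $\theta_i=\frac{\lambda_i}{1-\lambda_0}$ and $z$ as above, set $w=\sum_{i\geq1}\frac{1-\theta_i}{d-1}v_i$ and $x=Av_0+(1-A)z$ with $A=\frac{(d+1)\lambda_0-1}{d-1}$. These satisfy $y-c=\beta(x-w)$ and $\|w\|\le1$. For $\lambda_0\le\frac{d^2-d+1}{d^2+1}$ one has $A(1-A)\ge(1-\lambda_0)^2$, hence $\|x\|^2\le 1-\|y-v_0\|^2$. So the same certificate holds with $r_0^2$ in place of $\sum_i\lambda_i r_i^2$. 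For larger $\lambda_0$, simply $\|y-v_0\|\le 2(1-\lambda_0)<t\le\alpha_d$.

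\emph{Case $d=2$.} This is separate and elementary: compare $\|y-v_0\|\le 2(1-\lambda_0)$ with $\|y-c\|\le\sum_i|\lambda_i-\frac13|$ at the threshold $\lambda_0=\frac23$.

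Without constructions of this kind, your plan does not yet prove any bound.
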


\begin{proof}
	We distinguish the cases $d=2$ and $d\geq3$.
	\paragraph{Case 1: $d=2$.}
	In this case, $\sigma = \{v_0,v_1,v_2\}$ is a triangle.
	Denote its centroid by~$c$.
	We show that the distance from any point $y\in\conv(\sigma)$ to $\{c,v_0,v_1,v_2\}$ is at most $2/3$.
	
	Let $(\lambda_i)_{i=0}^2$ be the barycentric coordinates of $y$.
	Without loss of generality, assume that the coordinates are ordered: $\lambda_0\geq\lambda_1\geq\lambda_2$. 
	We consider whether $\lambda_0\geq2/3$ or $\lambda_0<2/3$.
	
	If $\lambda_0\geq2/3$, then $\lambda_1+\lambda_2\leq1/3$, and $y$ is close to a vertex:
	\begin{align*}		
		\|y-v_0\| = \left\|\sum_{i=0}^2 \lambda_iv_i -v_0  \right\|
		&\leq \left\|\left(\lambda_0-1\right)v_0\right\| + \|\lambda_1v_1\|+\|\lambda_2v_2\|\\
		&\leq \frac13 + \frac13 = \frac23.
	\end{align*}
	On the other hand, if $\lambda_0<2/3$, then $y$ must be close to $c$.
	Indeed, 
	\begin{equation*}		
		\|y-c\| = \left\|\sum_{i=0}^2 \lambda_iv_i -\sum_{i=0}^2 \frac13 v_i\right\|
		\leq \sum_{i=0}^2\left\|\left(\lambda_i-\frac13\right)v_i\right\|
		\leq \sum_{i=0}^2\left|\lambda_i-\frac13\right|.
	\end{equation*}
	Note that $\lambda_0\geq1/3$ and $\lambda_2\leq1/3$ since $\lambda_0\geq\lambda_1\geq\lambda_2$.
	If $\lambda_1\leq 1/3$, then
	\begin{equation*}
		\sum_{i=0}^2
		\left|\lambda_i-\frac13\right|
		=
		\left(\lambda_0-\frac13\right)
		+\left(\frac13-\lambda_1\right)
		+\left(\frac13-\lambda_2\right)
		=
		2\left(\lambda_0-\frac13\right)
		<\frac23.
	\end{equation*}
	If instead $\lambda_1>1/3$, then
	\begin{equation*}
		\sum_{i=0}^2
		\left|\lambda_i-\frac13\right|
		=
		\left(\lambda_0-\frac13\right)
		+\left(\lambda_1-\frac13\right)
		+\left(\frac13-\lambda_2\right)
		=
		2\left(\frac13-\lambda_2\right)
		\leq\frac23.
	\end{equation*}
	In either case, $\|y-c\|\leq2/3$.
	Thus, $y$ is always within distance $2/3$ of a vertex or $c$.

	\paragraph{Case 2: $d\geq3$.}
	We now prove the result in higher dimensions.
	Define the quantities
	\[
	\alpha_d=\frac{d^2-1}{d^2+1},\qquad	
	\beta=\frac{d-1}{d+1},
	\qquad
	t=\frac{2d}{d^2+1}.
	\]
	Note that $\alpha_d^2+t^2=1$ and $\alpha_d=\beta(1+t)$.
	We still denote by $(\lambda_i)_{i=0}^d$ the barycentric coordinates of $y$, which we assume are given in non-increasing order.
	We distinguish two cases: $\lambda_0\leq1/2$ (Subcase 2a) and $\lambda_0>1/2$ (Subcase 2b).
	In the first case, $y$ must be close to a vertex $v_i$ or to the centroid $c$; in the second case, it must be close to $v_0$ or to $c$.

	\subparagraph{Subcase 2a: $\lambda_0\leq1/2$.}
	
	First, we show that
	\begin{equation}
		\label{eq:proof_centroid_case2a_1}
		\|y-c\|
		\leq
		\beta(1+\|y\|).
	\end{equation}
	To prove this inequality, define the new barycentric coordinates
	\[
		\mu_i = \frac{1-2\lambda_i}{d-1},\quad i\in\{0,\dots,d\}.
	\]
	They are nonnegative and sum to 1.
	Consider the point $z=\sum_{i=0}^{d}\mu_i v_i$. It satisfies
	\[
		z=\frac{d+1}{d-1}c-\frac{2}{d-1}y,\qquad
		y-c = \frac{d-1}{d+1}y - \frac{d-1}{d+1}z.
	\]
	We deduce \Cref{eq:proof_centroid_case2a_1}:
	\begin{equation*}
		\|y-c\|
		=
		\beta\|y-z\|
		\leq
		\beta(1+\|y\|).
	\end{equation*}
	
	On the other hand, the identity
	\[
	\sum_{i=0}^d\lambda_i\|v_i-y\|^2
	=
	\sum_{i=0}^d\lambda_i\|v_i\|^2-\|y\|^2
	\leq
	1-\|y\|^2
	\]
	shows that 
	\begin{equation}
		\label{eq:proof_centroid_case2a_2}
		\min_{0\leq i\leq d}\|v_i-y\|
		\leq
		\sqrt{1-\|y\|^2}.
	\end{equation}
	
	To conclude, observe that if $\|y\|\leq t$, then \Cref{eq:proof_centroid_case2a_1} yields
	\[
		\|y-c\|
		\leq
		\beta(1+t)
		=
		\alpha_d,
	\]
	and if $\|y\|\geq t$, then \Cref{eq:proof_centroid_case2a_2} yields
		\[
		\min_{0\leq i\leq d}\|v_i-y\|
		\leq
		\sqrt{1-t^2}
		=
		\alpha_d.
	\]
	This shows the result.
	
	\subparagraph{Subcase 2b: $\lambda_0>1/2$.}

	If $\|y-v_0\|\leq\alpha_d$, $y$ is already close to a vertex and there is nothing to prove. 
	Suppose therefore that $\|y-v_0\|>\alpha_d$.
	In this case, $y$ must be close to the centroid $c$.
	We will define two points $w,x\in B^d$ such that
	\begin{equation}
		\label{eq:proof_centroid_case2b_1}
		\|y-c\|
		=
		\beta\|w-x\|,\qquad
		\|w\| \leq 1, \qquad	
		\|x\| \leq t.
	\end{equation}
	This will imply that
	\begin{equation}
		\label{eq:proof_centroid_case2b_2}
		\|y-c\|
		<
		\beta(1+t)
		=
		\alpha_d,
	\end{equation}
	which yields the result.
	
	First, define the new barycentric coordinates
	\[
		\theta_i=\frac{\lambda_i}{1-\lambda_0},
		\quad i\in\{1,\dots,d\}.
	\]
	They are nonnegative and sum to 1.
	Consider the point $z=\sum_{i=1}^{d}\theta_i v_i$.
	It satisfies
	\begin{equation}
		\label{eq:proof_centroid_case2b_4}
		y=\lambda_0 v_0+(1-\lambda_0)z.
	\end{equation}	
	Since $z\in B^d$, $\|v_0-z\|\leq2$.
	Combined with the assumption $\|y-v_0\|>\alpha_d$, this implies
	\[
		\alpha_d
		<
		\|y-v_0\|
		=
		(1-\lambda_0)\|z-v_0\|
		\leq
		2(1-\lambda_0).
	\]
	Therefore,
	\[
	\lambda_0<1-\frac{\alpha_d}{2}
	=
	\frac{d^2+3}{2(d^2+1)}.
	\]
	Since $d\geq3$, a computation shows that
	\begin{equation}
		\label{eq:proof_centroid_case2b_3}
		\lambda_0
		<
		\frac{d^2+3}{2(d^2+1)}
		\leq
		\frac{d^2-d+1}{d^2+1}.
	\end{equation}	
	
	Next, we define the point
	\[
	w=
	\sum_{i=1}^d\frac{1-\theta_i}{d-1}v_i.
	\]
	Again, the coefficients are nonnegative and sum to $1$, so $w\in B^d$.
	Moreover, it satisfies
	\[
	z+(d-1)w=\sum_{i=1}^d v_i.
	\]
	
	Last, define the point
	\[
		x=Av_0+(1-A)z,\qquad
		A=\frac{(d+1)\lambda_0-1}{d-1}.
	\]
	It satisfies the first equality of \Cref{eq:proof_centroid_case2b_1}:
	\[
	y-c=\beta(x-w).
	\]
	Moreover, a direct computation shows that
	\begin{equation}
		\label{eq:proof_centroid_case2b_5}
		A(1-A)-(1-\lambda_0)^2
		=
		\frac{(2\lambda_0-1)
			\bigl(d^2-d+1-(d^2+1)\lambda_0\bigr)}
		{(d-1)^2}>0,
	\end{equation}	
	where the positivity is ensured by \Cref{eq:proof_centroid_case2b_3}.
	In particular, $A(1-A)>(1-\lambda_0)^2$, $A\in[0,1]$, and hence $x\in B^d$.
	
	To conclude, we use the identity
	\[
		\|Au+(1-A)v\|^2
		=
		A\|u\|^2+(1-A)\|v\|^2 - A(1-A)\|u-v\|^2,
	\]
	valid for all vectors $u,v\in\R^d$ and scalars $A\in[0,1]$.
	Applied to $x=Av_0+(1-A)z$, it yields
	\begin{align*}
		\|x\|^2
		&=
		A\|v_0\|^2+(1-A)\|z\|^2
		-A(1-A)\|v_0-z\|^2
		\tag*{(by the identity)}
		\\
		&\leq
		1-A(1-A)\|v_0-z\|^2
		\tag*{(since $v_0,z\in B^d$)}
		\\
		&<
		1-(1-\lambda_0)^2\|v_0-z\|^2
		\tag*{(by \Cref{eq:proof_centroid_case2b_5})}
		\\
		&=
		1-\|y-v_0\|^2
		\tag*{(by \Cref{eq:proof_centroid_case2b_4})}
		\\
		&<
		1-\alpha_d^2
		\tag*{(since $\|y-v_0\|>\alpha_d$ by assumption)}
		\\
		&=
		t^2.
		\tag*{(by definition of $t$)}
	\end{align*}
	We have obtained the inequality $\|x\|<t$ of \Cref{eq:proof_centroid_case2b_1}.
	We deduce \Cref{eq:proof_centroid_case2b_2}: $y$ is within distance $\alpha_d$ of the centroid.
	This last step completes the proof.
\end{proof}

\begin{remark}
	\label{rem:extremal_configuration_centroid}
	The constant $\alpha_d$ in \Cref{prop:steiner_covering_centroid} is sharp.
	It is attained in degenerate configurations.
	For instance, in dimension $d=2$, one can take the vertices $v_0=e_1$ and $v_1=v_2=-e_1$. Their centroid is $-e_1/3$.
	The point $y=e_1/3$ is at distance $2/3$ from $v_0$ and the centroid:
	\[
		\|y-e_1\| = \|y-c\| = \frac{2}{3}.
	\]
	In dimension $d\geq3$, we can take $v_0=\alpha_d e_1+te_2$, $v_1=-\alpha_d e_1+te_2$, and $v_2=\dots=v_d=-e_2$ with $t=2d/(d^2+1)$ as in the proof.
	It satisfies $\alpha_d^2+t^2=1$.
	The centroid of $v_0,\dots,v_d$ is 
	\[
		c = \frac{2t-(d-1)}{d+1} e_2
		=\frac{-d^2+2d+1}{d^2+1} e_2.
	\]
	Take the midpoint $y=(v_0+v_1)/2=te_2$. 
	It is at equal distance from $v_0,v_1$ and the centroid:
	\[
		\|y-v_0\| = \|y-v_1\| = \frac{1}{2}\|v_0-v_1\| = \alpha_d,\qquad
		\|y-c\| = \left|\frac{2d}{d^2+1}-\frac{-d^2+2d+1}{d^2+1}\right| = \alpha_d.
	\]

	By contrast, for the regular simplex on the unit sphere, the centroid coincides with its minicenter, hence we can use the computation in \Cref{rem:extremal_configuration_minicenter}: the worst points lie on the edges, and are at distance $\sqrt{d/(2(d+1))}$ from the vertices and the minicenter.
\end{remark}

\subsection{A variance estimate from pivotal sampling}

We aim to prove \Cref{thm:steiner_covering} for barycentric refinement.
The proof requires a technical lemma, which we give in this subsection.
We obtain the main result in the following subsection.

Given a point in $\conv(\sigma)$ whose barycentric coordinates are bounded by $1/r$, where $r\in\N_{>0}$, the lemma selects $r$ \textit{distinct} vertices whose barycenter is close to the point.
It is an application of \emph{pivotal sampling} \cite{deville1998unequal,Chauvet2012} based on a result of Chauvet and Ruiz-Gazen~\cite{ChauvetRuizGazen2017}.

\begin{lemma}
	\label{lem:pivotal_sampling}
	Consider a simplex in the unit ball $\sigma=\{v_0,\dots,v_{d}\}\subset B^d\subset\R^{d}$.
	Let $y\in\conv(\sigma)$ be a point with barycentric coordinates at most $1/r$, where $r$ is an integer in $\{1,\dots,d+1\}$.
	Then there exists a random $r$-element subset $R\subset\{0,\dots,d\}$ such that
	\[
	\E\left[
	\frac1r\sum_{i\in R}v_i
	\right]
	= y,
	\qquad
	\E\left\|
	\frac1r\sum_{i\in R}v_i-y
	\right\|^2
	\leq
	\frac1r\big(1-\|y\|^2\big).
	\]
	In particular, there exists at least one realization of $R$ for which this inequality holds.
\end{lemma}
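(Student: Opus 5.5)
Write $\pi_i = r\lambda_i \in [0,1]$, where $(\lambda_i)$ are the barycentric coordinates of $y$. Then $\sum_i \pi_i = r$, so the $\pi_i$ are admissible inclusion probabilities for a fixed-size-$r$ sampling design on $\{0,\dots,d\}$. I would take $R$ to be the output of Deville--Tillé pivotal sampling with these inclusion probabilities. This design returns exactly $r$ distinct indices with $\PP(i\in R)=\pi_i$, so
\[
\E\Big[\frac1r\sum_{i\in R}v_i\Big]=\frac1r\sum_i \pi_i v_i = y .
\]
This settles the first claim. The final sentence of the lemma is the usual first-moment argument: a random variable cannot always exceed its mean.

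For the variance, set $S=\frac1r\sum_{i\in R}v_i = \frac1r\sum_i \mathbf 1_i v_i$. Then
\[
\E\|S-y\|^2=\frac1{r^2}\sum_{i,j}\big(\pi_{ij}-\pi_i\pi_j\big)\langle v_i,v_j\rangle,
\]
where $\pi_{ij}=\PP(i,j\in R)$ and $\pi_{ii}=\pi_i$. The target is $\frac1r\big(1-\|y\|^2\big)$. Note that $\sum_i\lambda_i\|v_i\|^2-\|y\|^2\le 1-\|y\|^2$. This suggests proving the sharper bound
\[
\E\|S-y\|^2\le \frac1r\Big(\sum_i\lambda_i\|v_i\|^2-\|y\|^2\Big)=\frac1r\sum_i\lambda_i\|v_i-y\|^2 ,
\]
which is $\frac1r$ times the variance of a single draw $Y$ with $\PP(Y=v_i)=\lambda_i$. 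In other words, the fixed-size design should do at least as well as Maurey's $r$ i.i.d.\ draws, which give exactly this quantity. The key step is therefore to show that, coordinatewise, the Horvitz--Thompson-type variance of pivotal sampling is dominated by that of multinomial (with-replacement) sampling.

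This comparison is where I would invoke Chauvet and Ruiz-Gazen~\cite{ChauvetRuizGazen2017}. Their result, as I understand it, bounds the variance of the pivotal-sampling estimator of a total $\sum_i y_i$ (using weights $y_i/\pi_i$) by the with-replacement variance $\sum_i \pi_i\big(y_i/\pi_i - t/r\big)^2$, suitably normalized. I would apply it to each coordinate of $\R^d$ separately, with $y_i = \lambda_i v_i^{(k)}$ so that $y_i/\pi_i = v_i^{(k)}/r$. Summing over coordinates $k$ then gives exactly the displayed bound.

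The main obstacle is matching the normalizations in their statement to this setting, and checking that their inequality holds for every ordering and every vector of values, not just in expectation over a random ordering. If their theorem were only available in a weaker form, I would try a direct alternative: prove $\pi_{ij}\le\pi_i\pi_j$ for $i\ne j$ (negative dependence, which pivotal sampling satisfies). The variance then becomes $\frac1{r^2}\big[\sum_i\pi_i(1-\pi_i)\|v_i\|^2+\sum_{i\ne j}(\pi_{ij}-\pi_i\pi_j)\langle v_i,v_j\rangle\big]$. However, negative dependence alone does not control the cross terms when $\langle v_i,v_j\rangle<0$, so I would still need the finer structure from~\cite{ChauvetRuizGazen2017}.
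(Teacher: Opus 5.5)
Your proposal is correct and matches the paper's proof: set $\pi_i=r\lambda_i$, draw $R$ by ordered pivotal sampling, apply the Chauvet--Ruiz-Gazen comparison with multinomial sampling coordinate by coordinate, and sum to get $\frac1r\sum_i\lambda_i\|v_i-y\|^2\le\frac1r(1-\|y\|^2)$. The paper cites their Theorem~1 in exactly the form you need (a fixed ordering and arbitrary real values $a_0,\dots,a_d$), so the negative-dependence fallback is not needed.
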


\begin{proof}
	Denote by $\lambda_0,\dots,\lambda_d$ the barycentric coordinates of $y$.
	By assumption, they all satisfy $\lambda_i\leq1/r$.
	Define $\pi_i=r\lambda_i$ for all $i\in \{0,\dots,d\}$.
	They satisfy $0\leq \pi_i\leq 1$ and sum to $r$. 
	We consider ordered pivotal sampling with inclusion probabilities $\pi=(\pi_0,\dots,\pi_d)$.
	It produces a \textit{random} $r$-element subset $R\subset\{0,\dots,d\}$ such that
	\[
	\PP(i\in R)=\pi_i, \quad i \in\{0,\dots,d\}.
	\]
	We use the comparison of ordered pivotal sampling with multinomial sampling in \cite[Theorem~1]{ChauvetRuizGazen2017}.
	Let $A_1,\dots,A_r$ be independent random variables with $\PP(A_j=a_i)=\lambda_i$ for all $j\in\{1,\dots,r\}$ and $i\in\{0,\dots,d\}$.
	The theorem says that, for all real numbers $a_0,\dots,a_d$,
	\begin{equation}
		\label{eq:variance_comparison_pivotal}
		\operatorname{Var}\left(\sum_{i\in R}a_i\right)
		\leq
		\operatorname{Var}\left(\sum_{j=1}^r A_j\right).
	\end{equation}
	
	We now apply \Cref{eq:variance_comparison_pivotal} coordinate by coordinate in $\R^d$.
	Let $W_1,\dots,W_r\in\R^d$ be independent random vectors satisfying
	\[
	\PP(W_j=v_i)=\lambda_i.
	\]
	Let $e_1,\dots,e_m$ be an orthonormal basis of the span of the vectors
	$v_0,\dots,v_d$, and set
	\[
	a_i^{(\ell)}=\langle v_i,e_\ell\rangle,
	\quad i\in\{0,\dots,d\},\quad \ell\in\{1,\dots,m\}.
	\]
	For every $\ell\in\{1,\dots,m\}$, the scalar random variables $\langle W_j,e_\ell\rangle$, $1\leq j\leq r$, are independent and take the value $a_i^{(\ell)}$ with probability $\lambda_i$.
	Consequently, applying \Cref{eq:variance_comparison_pivotal} to $a_0^{(\ell)},\dots,a_d^{(\ell)}$ yields
	\[
	\operatorname{Var}\left(\sum_{i\in R}\langle v_i,e_\ell\rangle\right)
	\leq
	\operatorname{Var}\left(\sum_{j=1}^r \big\langle W_j,e_\ell\big\rangle\right).
	\]	
	Summing over $\ell=1,\dots,m$ gives
	\begin{equation*}
		\sum_{\ell=1}^m
		\operatorname{Var}\left(\sum_{i\in R}\langle v_i,e_\ell\rangle\right)
		\leq
		\sum_{\ell=1}^m
		\operatorname{Var}\left(\sum_{j=1}^r\langle W_j,e_\ell\rangle\right).
	\end{equation*}
	This can be rephrased as
	\begin{equation}
		\label{eq:mean_comparison_vector}
		\E\left\|
		\sum_{i\in R}v_i-r y
		\right\|^2
		\leq
		\E\left\|
		\sum_{j=1}^r W_j-r y
		\right\|^2.
	\end{equation}	
	Indeed, the left-hand side satisfies 
	\[
		\E\left\|
		\sum_{i\in R}v_i-r y
		\right\|^2
		=
		\sum_{\ell=1}^m
		\operatorname{Var}\left(\sum_{i\in R}\langle v_i,e_\ell\rangle\right)
		\qquad
		\text{since}
		\qquad
		\E\left[\sum_{i\in R}v_i\right]
		=
		\sum_{i=0}^d \pi_i v_i
		=
		r y,
	\]
	and the right-hand side satisfies 
	\[
		\E\left\|
		\sum_{j=1}^r W_j-r y
		\right\|^2
		=
		\sum_{\ell=1}^m
		\operatorname{Var}\left(\sum_{j=1}^r\langle W_j,e_\ell\rangle\right)
		\qquad
		\text{since}
		\qquad
		\E\left[\sum_{j=1}^r W_j\right]
		=
		r\sum_{i=0}^d \lambda_i v_i
		=
		r y.
	\]
	
	Since the random vectors $W_1,\dots,W_r$ are independent and identically distributed,
	\[
	\E\left\|
	\sum_{j=1}^r W_j-r y
	\right\|^2
	=
	\sum_{j=1}^r
	\E\left\|W_j-y\right\|^2 
	=
	r\left(\sum_{i=0}^d\lambda_i\|v_i\|^2-\|y\|^2\right).
	\]
	As the $v_i$ have norm at most 1, the right-hand side is bounded above by $r(1-\|y\|^2)$.
	Therefore, \Cref{eq:mean_comparison_vector} yields
	\[
	\E\left\|
	\frac1r\sum_{i\in R}v_i-y
	\right\|^2
	\leq
	\frac{1}{r}(1-\|y\|^2).\qedhere
	\]
\end{proof}

\subsection{Covering bound for barycentric refinement}

\begin{proposition}
	\label{prop:steiner_covering_barycentric}
	Consider a simplex in the unit ball $\sigma=\{v_0,\dots,v_{d}\}\subset B^d\subset\R^{d}$.
	For every $y\in\conv(\sigma)$, there exists an integer $k\in\{1,\dots,d+1\}$ and a $k$-element subset $\{v_{i_1},\dots,v_{i_k}\}\subset\sigma$ (without repetitions) whose barycenter $x=\frac{1}{k}\sum_{\ell=1}^k v_{i_\ell}$ satisfies $\|x-y\| \leq 1/2$.
\end{proposition}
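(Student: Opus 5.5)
The plan is to sort the barycentric coordinates of $y=\sum_i\lambda_iv_i$ as $\lambda_0\geq\lambda_1\geq\dots\geq\lambda_d$, and to play two estimates against each other according to the size of $\|y\|$ and of $\lambda_0$.

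\textbf{Step 1 (vertex bound).} Exactly as in Subcase~2a of \Cref{prop:steiner_covering_centroid}, the identity $\sum_i\lambda_i\|v_i-y\|^2=\sum_i\lambda_i\|v_i\|^2-\|y\|^2\leq 1-\|y\|^2$ gives $\min_i\|v_i-y\|\leq\sqrt{1-\|y\|^2}$. So a single vertex (the case $k=1$) suffices whenever $\|y\|^2\geq 3/4$.

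\textbf{Step 2 (distinct-vertex barycenters).} Set $r=\lfloor 1/\lambda_0\rfloor\in\{1,\dots,d+1\}$. Every coordinate satisfies $\lambda_i\leq\lambda_0\leq 1/r$, so \Cref{lem:pivotal_sampling} produces an $r$-element subset $R$ with
\[
\Bigl\|\tfrac1r\sum_{i\in R}v_i-y\Bigr\|^2\leq\tfrac1r\bigl(1-\|y\|^2\bigr).
\]
Since $1-\|y\|^2\leq 1$, this already gives the bound $1/2$ whenever $r\geq4$, that is, whenever $\lambda_0<1/4$ roughly. More generally, combining with Step~1, the remaining bad region is
\[
\|y\|^2<\tfrac34,\qquad r<4\bigl(1-\|y\|^2\bigr),
\]
so that $r\in\{1,2,3\}$ and $\lambda_0>1/4$: one large coordinate and $y$ fairly close to the origin.

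\textbf{Step 3 (large leading coordinate).} In the bad region, I would force $v_0$ into the chosen face. Write
\[
y=\lambda_0v_0+(1-\lambda_0)z,
\]
with $z$ in the opposite facet, as in Subcase~2b. Any barycenter of $\{v_0\}\cup S$ with $|S|=m$ equals $\frac{1}{m+1}v_0+\frac{m}{m+1}z_S$, where $z_S$ is the barycenter of $S$. The natural move is to choose $m$ so that $\frac{1}{m+1}$ is close to $\lambda_0$, and to apply \Cref{lem:pivotal_sampling} to $z$ inside the facet $\{v_1,\dots,v_d\}$. Its coordinates $\lambda_i/(1-\lambda_0)$ are at most $\lambda_0/(1-\lambda_0)$, which is compatible with the choice of $m$. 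The error then splits into two pieces:
\begin{itemize}
\item a mismatch term $\bigl|\lambda_0-\frac1{m+1}\bigr|\,\|v_0-z\|$;
\item a sampling term controlled by $\frac{m}{m+1}\sqrt{(1-\|z\|^2)/m}$.
\end{itemize}
I would then bound $\|v_0-z\|$ and $\|z\|$ in terms of $\|y\|$ using the parallelogram-type identity from the proof of \Cref{prop:steiner_covering_centroid}:
\[
\|y\|^2=\lambda_0\|v_0\|^2+(1-\lambda_0)\|z\|^2-\lambda_0(1-\lambda_0)\|v_0-z\|^2.
\]
Because the sampling is unbiased, the cross term vanishes in expectation. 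This gives one expected squared error, which I would try to bound by $1/4$ by optimizing over $m$ and, where useful, by mixing two consecutive values of $m$ (take $m$ with $\frac1{m+2}\leq\lambda_0\leq\frac1{m+1}$ and randomize between the two sizes).

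\textbf{Main obstacle.} The hardest step is this last case analysis for $r\in\{1,2,3\}$. There the three ingredients (vertex bound, unforced pivotal sampling, forced-$v_0$ sampling) must dovetail exactly at the threshold $1/2$. The constant is sharp, so no slack is available, and the extremal configuration will presumably be degenerate, as in \Cref{rem:extremal_configuration_minicenter,rem:extremal_configuration_centroid}. If the one-parameter mixture fails near the boundary of the bad region, I would first try a convex combination of the Step~1 and Step~3 estimates: take expectations over a randomized choice between the vertex $v_0$ and the forced-$v_0$ barycenters, weighted so that the $\|y\|^2$ terms cancel.
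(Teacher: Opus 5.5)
Your overall architecture matches the paper's: pivotal sampling with $r=4$ once $\lambda_0\le 1/4$; otherwise force $v_0$ and complete it randomly, with unbiasedness killing the cross terms. Your mixture of $v_0$ with the midpoints $(v_0+v_I)/2$ for $\lambda_0\ge 1/2$ is exactly the paper's Case~1, since all these points lie in the ball of radius $1/2$ about $v_0/2$. But the proposal stops where the real work is, and the mechanism you describe breaks in the range $1/3<\lambda_0<1/2$.

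Your claim that the coordinates $\lambda_i/(1-\lambda_0)\le\lambda_0/(1-\lambda_0)$ of $z$ are ``compatible with the choice of $m$'' holds only when $\lambda_0\le 1/(m+1)$, i.e.\ for the smaller of your two sizes. For the larger size, and in particular for forced triples $(v_0+v_J+v_K)/3$ when $\lambda_0>1/3$, \Cref{lem:pivotal_sampling} needs $\lambda_i\le(1-\lambda_0)/2$. This fails when the second coordinate is also large.

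Here is a concrete instance. Take $d=2$, $v_0=e_2$, $v_1=e_1$, $v_2=-e_1$ and $\lambda=(0.34,0.34,0.32)$, so that $y=(0.02,0.34)$, $\|y\|^2=0.116$ and $z\approx(0.03,0)$. Your three available estimates give:
\begin{itemize}
\item $1-\|y\|^2\approx 0.884$ from the vertex bound;
\item $\tfrac12(1-\|y\|^2)\approx 0.442$ from unforced sampling with $r=2$;
\item $(\lambda_0-\tfrac12)^2\|v_0-z\|^2+\tfrac14(1-\|z\|^2)\approx 0.275$ from forced midpoints.
\end{itemize}
All three exceed $1/4$, so no convex combination of them helps. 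Meanwhile $\lambda_1/(1-\lambda_0)\approx 0.515>1/2$ forbids pivotal sampling of $z$ with two elements. Both forced midpoints $(v_0+v_i)/2$ are in fact at distance $>1/2$ from $y$; the good candidate is the triple $(v_0+v_1+v_2)/3$. A dependent mixture of sizes might still be designed, but you give no variance bound for one, and \Cref{lem:pivotal_sampling} does not supply it.

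The missing idea is to force the \emph{two} largest vertices in this regime, as the paper does in its Subcase~3b. Suppose $\lambda_1>(1-\lambda_0)/2$ and set $\nu=1-\lambda_0-\lambda_1$.
\begin{itemize}
\item If $\nu\le 1/4$, the triangle inequality gives $\|y-(v_0+v_1)/2\|\le 2\nu\le 1/2$.
\item If $\nu>1/4$, take the random triple $B=(v_0+v_1+v_I)/3$ with $\PP(I=i)=\lambda_i/\nu$. It satisfies $\E\|B-y\|^2\le \tfrac19+\tfrac19<\tfrac14$, because $|\lambda_0-\tfrac13|+|\lambda_1-\tfrac13|+|\nu-\tfrac13|\le\tfrac13$.
\end{itemize}

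When instead all $\lambda_i/(1-\lambda_0)\le 1/2$, you still must show that one of the two expectations is at most $1/4$: the forced midpoint ($m=1$) or the forced triple ($m=2$). The paper does this with the threshold $\tfrac12-\lambda_0\le\|z\|/(2(1+\|z\|))$ and the crude bound $\|v_0-z\|\le 1+\|z\|$. It needs neither the parallelogram identity nor mixing. Without these two pieces, the case $r=2$ of your bad region remains unproved.
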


The proof is organized according to the size of the largest barycentric coordinate of the point $y=\sum_{i=0}^d \lambda_i v_i$.
When one coordinate is large, $y$ is close to a vertex or an edge midpoint. 
When all coordinates are small, \Cref{lem:pivotal_sampling}, presented in the previous subsection, gives a barycenter of four distinct vertices within the required distance. 
The intermediate regime includes double and triple barycenters, which we analyze through probabilistic arguments.

Thus, although the barycenters of all nonempty subsets of $\sigma$ are allowed, the proof only uses barycenters of at most four vertices. 
This is sharp in a sense explained in \Cref{rem:extremal_configuration_barycenter}.

\begin{proof}
	Write $y=\sum_{i=0}^d \lambda_i v_i$ in barycentric coordinates, and reorder the vertices so that $\lambda_0\geq \lambda_1\geq \cdots\geq \lambda_d$.
	We distinguish three regimes depending on $\lambda_0$.
	
	\paragraph{Case 1: $\lambda_0\geq 1/2$.}
	We show that $y$ is close to $v_0$ or to the midpoint of an edge $\{v_0,v_i\}$.
	Let $\mu_0=2\lambda_0-1$ and $\mu_i=2\lambda_i$ for $i\in\{1,\dots,d\}$. 
	The $(\mu_i)_{i=0}^d$ are nonnegative, sum to $1$, and
	\[
	y=\mu_0 v_0+\sum_{i=1}^d \mu_i\frac{v_0+v_i}{2}.
	\]
	The points $v_0$ and $(v_0+v_i)/2$, $i\in\{1,\dots,d\}$, all lie in the ball centered at $v_0/2$ and of radius $1/2$.
	As in the proof of \Cref{prop:steiner_covering_minicenter}, expanding the norms yields
	\[
	\mu_0\|v_0-y\|^2
	+
	\sum_{i=1}^d \mu_i
	\left\|\frac{v_0+v_i}{2}-y\right\|^2
	\leq
	\frac14-\left\|y-\frac{v_0}{2}\right\|^2
	\leq \frac14.
	\]
	Therefore at least one of these points is at distance at most $1/2$ from $y$.
	
	\paragraph{Case 2: $\lambda_0\leq 1/4$.}
	By \Cref{lem:pivotal_sampling} with $r=4$, there is a subset $T$ of four elements such that
	\[
	\left\|\frac14\sum_{i\in T}v_i-y\right\|^2
	\leq
	\frac14(1-\|y\|^2)
	\leq
	\frac14.
	\]
	Thus the barycenter of $T$ is at distance at most $1/2$ from $y$.
	
	\paragraph{Case 3: $1/4<\lambda_0<1/2$.}
	We define the point
	\[
	w=\sum_{i=1}^d \frac{\lambda_i}{1-\lambda_0} v_i.
	\]
	It satisfies the relation 
	\[
	y=\lambda_0 v_0+(1-\lambda_0)w.
	\]
	We split this case according to whether the new weights $\mu_i=\lambda_i/(1-\lambda_0)$ are all at most $1/2$.
	
	\subparagraph{Subcase 3a: $\forall i\in\{1,\dots,d\},\, \mu_i\leq 1/2$.} 
	We show that $y$ must be close to a barycenter of two or three vertices, depending on the values of $\lambda_0$ and $\|w\|$ (see \Cref{eq:proof_barycenter_gamma_1} below).
	First, we sample a random index $I$ with distribution $\PP(I=i)=\mu_i$ and define
	\[
	E=\frac{v_0+v_I}{2}.
	\]
	Second, we apply \Cref{lem:pivotal_sampling} to $w$ with $r=2$ and coordinates $\mu_1,\dots,\mu_d\leq1/2$.
	We obtain a random pair $R=\{J,K\}\subset\{1,\dots,d\}$ with midpoint 
	\[
	p=\frac{v_J+v_K}{2}
	\]
	such that
	\[
	\E [p]=w,
	\qquad
	\E\|p-w\|^2\leq \frac12(1-\|w\|^2).
	\]
	We build the (random) triple barycenter
	\[
	B=\frac{v_0+v_J+v_K}{3}=\frac13 v_0+\frac23 p.
	\]

	For the edge midpoint $E$, observe that
	\[
	E-y = \frac12(v_I-w)+\left(\frac12-\lambda_0\right)(v_0-w).
	\]
	Since $\E [v_I]=w$, the expectation $\E \langle v_I-w, v_0-w \rangle$ vanishes, hence 
	\begin{align}
		\E\|E-y\|^2
		&=
		\frac14\,\E\|v_I-w\|^2
		+
		\left(\frac12-\lambda_0\right)^2\|v_0-w\|^2 \nonumber \\
		&\leq
		\frac14(1-\|w\|^2)
		+
		\left(\frac12-\lambda_0\right)^2\|v_0-w\|^2 \nonumber 
		\tag*{(since $\E [v_I]=w$)}
		\\
		&\leq
		\frac14(1-\|w\|^2)
		+
		\left(\frac12-\lambda_0\right)^2(1+\|w\|)^2, \label{eq:barycenter_estimate_edge}
	\end{align}
	where we used $\|v_0\|\leq1$ on the last line.
	Similarly, for the triple barycenter, we use
	\[
	B-y = \frac23(p-w)+\left(\frac13-\lambda_0\right)(v_0-w).
	\]
	The relation $\E [p]=w$ implies that $\E \langle p-w, v_0-w \rangle$ is zero, hence
	\begin{align}
		\E\|B-y\|^2
		&=
		\frac49\,\E\|p-w\|^2
		+
		\left(\frac13-\lambda_0\right)^2\|v_0-w\|^2 \nonumber \\
		&\leq
		\frac29(1-\|w\|^2)
		+
		\left(\frac13-\lambda_0\right)^2(1+\|w\|)^2.
		\label{eq:barycenter_estimate_triple}
	\end{align}
	
	We now show that at least one of these two expectations is at most $1/4$.
	If
	\begin{equation}
		\label{eq:proof_barycenter_gamma_1}
		\frac12-\lambda_0\leq \frac{\|w\|}{2(1+\|w\|)},
	\end{equation}
	then \Cref{eq:barycenter_estimate_edge} gives
	\[
	\E\|E-y\|^2
	\leq
	\frac14(1-\|w\|^2)
	+
	\frac{\|w\|^2}{4}
	=
	\frac14.
	\]
	Otherwise,
	\begin{equation}
		\label{eq:proof_barycenter_gamma_2}
		\frac12-\lambda_0> \frac{\|w\|}{2(1+\|w\|)}.	
	\end{equation}
	If $1/4\leq\lambda_0\leq 1/3$, then $|1/3-\lambda_0|\leq1/12$, hence \Cref{eq:barycenter_estimate_triple} gives
	\[
	\E\|B-y\|^2
	\leq
	\frac29(1-\|w\|^2)+\frac1{144}(1+\|w\|)^2
	=
	\frac{33+2\|w\|-31\|w\|^2}{144}
	\leq
	\frac14.
	\]
	Thus we may assume $\lambda_0>1/3$.
	Define
	\[
	\gamma=\lambda_0-\frac13>0.
	\]
	From \Cref{eq:proof_barycenter_gamma_2}, we deduce
	\[
	\gamma
	=
	\frac16-\left(\frac12-\lambda_0\right)
	<
	\frac16-\frac{\|w\|}{2(1+\|w\|)}
	=
	\frac{1-2\|w\|}{6(1+\|w\|)}.
	\]
	Equivalently,
	\[
	\gamma^2(1+\|w\|)^2
	<
	\frac{(1-2\|w\|)^2}{36}.
	\]
	Substituting this inequality in \Cref{eq:barycenter_estimate_triple} yields
	\begin{align*}
		\E\|B-y\|^2
		&<
		\frac29(1-\|w\|^2)
		+
		\frac{(1-2\|w\|)^2}{36}
		=
		\frac{9-4\|w\|-4\|w\|^2}{36}.
	\end{align*}
	This last expression is at most $1/4$.
	Thus either $\E\|E-y\|^2\leq 1/4$ or $\E\|B-y\|^2\leq 1/4$.
	Consequently, at least one realization of $E$ or $B$ is at distance at most $1/2$ from $y$.
	
	\subparagraph{Subcase 3b: $\exists i\in\{1,\dots,d\},\, \mu_i> 1/2$.}
	As in the previous subcase, we show that $y$ must be close to a barycenter of two or three vertices, depending on the value of $1-\lambda_0-\lambda_1$.	
	Since the weights $\mu_i=\lambda_i/(1-\lambda_0)$ are ordered, the assumption becomes $\mu_1>1/2$.
	Thus
	\[
	\lambda_1>\frac{1-\lambda_0}{2}.
	\]
	Define the quantity 
	\[
	\nu=1-\lambda_0-\lambda_1 \geq 0.
	\] 
	Note that $\nu>0$, since $1/2>\lambda_0\geq\lambda_1$ here.
	We distinguish the cases $\nu\leq1/4$ and $\nu>1/4$.

	If $0<\nu\leq1/4$, then $y$ is close to the edge midpoint 
	\[
	E=(v_0+v_1)/2.
	\]
	Indeed, define
	\begin{equation}
		\label{eq:proof_barycentric_z}
		z=\frac1\nu\sum_{i=2}^d \lambda_i v_i.
	\end{equation}
	It satisfies $y=\lambda_0 v_0+\lambda_1 v_1+\nu z$.
	Consequently, one has
	\begin{align*}
		y-E
		&=
		\left(\lambda_0-\frac12\right)v_0
		+
		\left(\lambda_1-\frac12\right)v_1
		+
		\nu z .
	\end{align*}
	Since $\lambda_0,\lambda_1\leq 1/2$, and since
	\[
	\left(\frac12-\lambda_0\right)+\left(\frac12-\lambda_1\right)
	=
	1-\lambda_0-\lambda_1
	=
	\nu,
	\]
	we obtain
	\begin{align*}
		\|y-E\|
		&\leq
		\left(\frac12-\lambda_0\right)
		+
		\left(\frac12-\lambda_1\right)
		+
		\nu
		=
		2\nu
		\leq
		\frac12.
	\end{align*}
	
	It remains to consider $\nu>1/4$. 
	In this case, $\lambda_0$, $\lambda_1$, and $\nu$ lie in $(1/4,1/2)$, since $\lambda_0+\lambda_1+\nu=1$.
	Let $I\in\{2,\dots,d\}$ be a random index with probability
	\[
	\PP(I=i)=\frac{\lambda_i}{\nu},
	\]
	and define the triple barycenter and its mean
	\[
		B=\frac{v_0+v_1+v_I}{3},
		\qquad
		b= \E[B].
	\]
	We show that
	\[
		\E\|B-y\|^2
		<
		\frac14,
	\]
	which will imply the result.
	The expected value of $v_I$ is $z$, defined in \Cref{eq:proof_barycentric_z}.
	We have
	\begin{align}
		\E\|B-b\|^2
		&=
		\frac19\,\E\|v_I-z\|^2
		\tag*{(since $B-b=(v_I-z)/3$)}\nolinenumbers\\
		&\leq
		\frac19(1-\|z\|^2)
		\tag*{(since $\E [v_I]=z$)}\nolinenumbers\\
		&\leq
		\frac19. \label{eq:proof_barycenter_nu_1}
	\end{align} 
	We now estimate $\|y-b\|$.
	Since $y=\lambda_0v_0+\lambda_1v_1+\nu z$, we have
	\[
	y-b
	=
	\left(\lambda_0-\frac13\right)v_0
	+
	\left(\lambda_1-\frac13\right)v_1
	+
	\left(\nu-\frac13\right)z.
	\]
	Moreover, from $\lambda_0,\lambda_1,\nu\in(1/4,1/2)$ and
	$\lambda_0+\lambda_1+\nu=1$ we deduce that 
	\[
	\left|\lambda_0-\frac13\right|
	+
	\left|\lambda_1-\frac13\right|
	+
	\left|\nu-\frac13\right|
	\leq
	\frac13.
	\]
	As $v_0$, $v_1$, and $z$ all have norm at most $1$, it follows that
	\begin{equation}
		\label{eq:proof_barycenter_nu_2}
		\|y-b\|\leq \frac13.
	\end{equation}
	Finally, we combine $\E[B-b]=0$ with \Cref{eq:proof_barycenter_nu_1,eq:proof_barycenter_nu_2} to get
	\begin{align*}
		\E\|B-y\|^2
		&=
		\|b-y\|^2+\E\|B-b\|^2 \\
		&\leq
		\frac19+\frac19
		<
		\frac14.
	\end{align*}
	Consequently, $B$ has a realization within distance $1/2$ of $y$.
	The proof is complete.
\end{proof}

\begin{remark}
	\label{rem:extremal_configuration_barycenter}
	The proof of \Cref{prop:steiner_covering_barycentric} only uses barycenters of at most four vertices. 
	In particular, the proposition remains true if $k$ is restricted to take values in $\{1,2,3,4\}$.
	
	This phenomenon is already visible for the regular simplex. 
	Let $\sigma=\{v_0,\dots,v_d\}\subset S^{d-1}$ be regular, let $Y_{\leq k}$ denote the barycenters of faces of cardinality at most $k$, and put $K=\min\{k,d+1\}$. 
	Then the covering radius of $Y_{\leq k}$ inside $\conv(\sigma)$ is
	\[
	\sqrt{
		\max\left\{
		\frac14,\,
		\frac{d+1-K}{Kd}
		\right\}
	}.
	\]
	The second term is the squared distance from the center $0$ to the closest allowed face barycenters, namely those of cardinality $K$. 
	The term $1/4$ is a persistent obstruction: for every vertex $v_i$, the point $v_i/2$ belongs to $\conv(\sigma)$.
	Moreover, if $b_\tau$ denotes the barycenter of a face $\tau$, then
	\[
	\left\|
	\frac12 v_i-b_\tau
	\right\|^2
	=
	\left\{
	\begin{array}{ll}
		\frac14 &\text{ if }  v_i\in \tau,\\[2mm]
		\frac14+\frac{d+1}{|\tau|d} &\text{ if } v_i\notin \tau.
	\end{array}
	\right.
	\]
	Thus the point $v_i/2$ is always at distance $1/2$ from the allowed barycenters.
\end{remark}

\section{Sharp approximate Carathéodory theorem}
\label{sec:sharp_bounds_kfold}

This section treats the last case of \Cref{thm:steiner_covering}, namely, the $k$-edgewise refinement.
Our main result can be viewed as an enhancement of the classical approximate Carathéodory theorem, recalled in \Cref{subsubsec:approximate_caratheodory}.
It is stated in \Cref{subsec:statement_sharp_bounds_kfold}, and the proof covers \Cref{subsec:optimality_conditions,subsec:estimate_without_repetitions,subsec:proof_covering_kfold}.
\Cref{rem:extremal_configuration_kfold} shows that our result is sharp and is attained by the regular simplex.
As an application, we deduce in \Cref{subsec:link_with_conjecture} some cases of a
conjecture of Bárány and Füredi.

\subsection{Statement and proof overview}
\label{subsec:statement_sharp_bounds_kfold}

\begin{theorem}[proof p.~\pageref{proof:steiner_covering_kfold}]
	\label{thm:steiner_covering_kfold}
	Consider a simplex in the unit ball $\sigma=\{v_0,\dots,v_{d}\}\subset B^d\subset\R^{d}$ and an integer $k\geq2$.
	For every $y\in\conv(\sigma)$, there exists a $k$-tuple $(v_{i_1},\dots,v_{i_k})$ of vertices of $\sigma$ (allowing repetitions) whose barycenter $x=\frac1k\sum_{\ell=1}^k v_{i_\ell}$ satisfies $\|x-y\| \leq R_{d,k}$, where
	\[
	R_{d,k}
	=
	\sqrt{
		\frac{s(d+1-s)}{dk^2}
	},
	\qquad
	s=\min\left\{k,\left\lfloor\frac{d+1}{2}\right\rfloor\right\}.
	\]
\end{theorem}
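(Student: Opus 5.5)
The plan is to reduce the statement to a rounding problem for barycentric coordinates and then bound the rounding error sharply. Write $y=\sum_{i=0}^d\lambda_iv_i$. A $k$-fold barycenter is determined by multiplicities $m_i\in\N$ with $\sum_i m_i=k$, and
\[
x-y=\frac1k\sum_{i=0}^d (m_i-k\lambda_i)\,v_i .
\]

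\textbf{Step 1: reduce to choosing a subset.} Put $k\lambda_i=\lfloor k\lambda_i\rfloor+f_i$ with $f_i\in[0,1)$, and restrict to roundings $m_i=\lfloor k\lambda_i\rfloor+\varepsilon_i$ with $\varepsilon_i\in\{0,1\}$. The integer $r=\sum_i f_i=k-\sum_i\lfloor k\lambda_i\rfloor$ satisfies $r\le k$ and $r\le d$, since each $f_i<1$. The problem becomes: find an $r$-subset $S\subset\{0,\dots,d\}$ with
\[
\Big\|\sum_{i\in S}v_i-\sum_i f_iv_i\Big\|^2\le \frac{r(d+1-r)}{d}.
\]
The function $r\mapsto r(d+1-r)$ is increasing for $r\le (d+1)/2$ and symmetric about $(d+1)/2$. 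Over the admissible range $r\le\min\{k,d\}$, its maximum is therefore attained at $r=s=\min\{k,\lfloor\frac{d+1}{2}\rfloor\}$, which gives exactly $k^2R_{d,k}^2$.

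The differences $x-y$ are invariant under translation, because the coefficients $\varepsilon_i-f_i$ sum to $0$. So only $\|v_i\|\le 1$ is used, and the origin may be placed anywhere convenient.

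\textbf{Step 2: bound the rounding error via the second moment.} I would bound the minimum over $r$-subsets by an expectation $\E\|\sum_i(\varepsilon_i-f_i)v_i\|^2=\tr(C\,G)$. Here $G$ is the Gram matrix of the $v_i$ and $C$ is the covariance matrix of a random $r$-subset with inclusion probabilities $f_i$; its rows sum to $0$.

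The benchmark is the regular simplex with $f_i=r/(d+1)$ and uniform (hypergeometric) sampling. In that case $C=\frac{r(d+1-r)}{d(d+1)}\big(I-\frac{1}{d+1}J\big)$, and $\tr(CG)=\frac{r(d+1-r)}{d}$ exactly. This also exhibits sharpness for $k\ge2$: take $y$ the center of the regular simplex and $r=s$.

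The goal for general $f$ is to construct a fixed-size design whose covariance satisfies $C\preceq \frac{r(d+1-r)}{d(d+1)}\big(I-\frac1{d+1}J\big)$ in the sense relevant to $\tr(CG)$. Since $C\mathbf 1=0$, one has $\tr(CJ)=0$, so $\tr(CG)$ only involves $\sum_i\|v_i\|^2\le d+1$ together with the off-diagonal structure.

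\textbf{Step 3 (main obstacle).} The introduction warns that pivotal sampling alone (\Cref{lem:pivotal_sampling}) does not reach the sharp constant. A single sampling scheme will likely lose when the $f_i$ are very unequal, for instance when some $f_i$ are close to $0$ or $1$.

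My fallback is a deterministic optimization argument. Fix $f$ and suppose, for contradiction, that every $r$-subset $S$ is far from $p=\sum_i f_iv_i$. Maximize $\min_S\|\sum_{i\in S}v_i-p\|^2$ over configurations $v_i\in B^d$, and derive KKT/first-order optimality conditions for an extremal configuration. These conditions should force every $v_i$ with $0<f_i<1$ onto the unit sphere, and force the active subsets to be balanced.

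Then I would do a case analysis on the number of indices with fractional $f_i$. In that analysis I would peel off indices with $f_i$ near $0$ or $1$ by fixing $\varepsilon_i$, which reduces $d$ or $r$ and lets me use monotonicity of $r(d+1-r)/d$. The goal is to show the extremal value is at most $r(d+1-r)/d$, with equality only for the regular simplex and uniform $f$.

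This step, combining the optimality conditions with the case split on the fractional parts, is where I expect the real difficulty of the proof to lie.
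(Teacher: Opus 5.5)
\paragraph{A genuine gap: the central lemma is not proved.} Your Step~1 is exactly the paper's proof of the theorem itself. You round down, set $r=\sum_i f_i\le\min\{k,d\}$, round up on an $r$-subset, and conclude with the monotonicity and symmetry of $r\mapsto r(d+1-r)$. Everything then rests on
\[
\min_{|S|=r}\Big\|\sum_{i\in S}v_i-\sum_{i=0}^d f_iv_i\Big\|^2\le\frac{r(d+1-r)}{d},
\]
which, after dividing by $r^2$, is precisely \Cref{lem:covering_without_repetitions}. Your proposal does not prove it: Steps~2--3 describe hoped-for outcomes rather than arguments.

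Step~2 names no design. Read as a Loewner inequality, its target is unattainable. Take $r=1$ and $f_0=f_1=\tfrac12$ (e.g.\ $d=k=2$, $\lambda=(\tfrac14,\tfrac14,\tfrac12)$). Then any design has $C_{00}=\tfrac14$, whereas $C\preceq\frac{r(d+1-r)}{d(d+1)}\bigl(I-\frac1{d+1}J\bigr)$ forces $C_{00}\le d/(d+1)^2<\tfrac14$. So a second-moment proof would have to exploit $\diag(G)\le 1$ through a diagonal certificate, and you give no way to produce one for general $f$.

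In Step~3, peeling is lossless only for indices exactly at the boundary. This is the paper's Subcases~3a--3b, which use $S_{d-1,r}\le S_{d,r}$ and $\frac{(r-1)^2}{r^2}S_{d-1,r-1}^2<S_{d,r}^2$. For generic $f$ every index is fractional, and that interior case is the whole difficulty; you leave it to unspecified KKT conditions.

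\paragraph{What the interior case needs.} The paper supplies four ingredients that your plan lacks:
\begin{enumerate}
\item \Cref{lem:sphere_to_ball}, a lifting trick showing that spherical configurations dominate. It maps $v_i\mapsto(v_i,\sqrt{1-\|v_i\|^2})\in S^d$ and then rescales the circumsphere of the lift inside its affine hull. This replaces your hope that KKT conditions push the vertices onto the sphere.
\item Moving $\lambda$ along $\ker V\cap e^\perp$, which shows that an interior maximizer is affinely full-dimensional, so the affine dependence vector $\tau_G$ is defined.
\item The characterization of admissible Gram variations on the sphere, $\diag(H)=0$ and $\tau_G^\top H\tau_G=0$ (\Cref{lem:gram_variations}). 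Combined with Gordan's lemma (\Cref{lem:optimality_conditions}), it yields a probability distribution on the active $r$-subsets with mean $\lambda$ and second moment exactly $\Phi^*\bigl(\diag(\tau_G)-\tau_G\tau_G^\top\bigr)$. This is the ``balanced'' design you were after in Step~2, but with the extremal covariance rather than the regular-simplex one.
\item The algebraic endgame. The diagonal entries give $\lambda_i(\frac1r-\lambda_i)=\Phi^*\tau_i(1-\tau_i)$. Nonnegativity of the pair-inclusion weights gives $t_it_j\ge\Phi^*$ for $t_i=\lambda_i/\tau_i$. Then Jensen's inequality for the convex $h(t)=r(t^2-\Phi^*)/(t-r\Phi^*)$ rules out $\Phi^*>S_{d,r}^2$. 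If instead some $t_0<\sqrt{\Phi^*}$, a summation identity gives $d\le r$, a contradiction.
\end{enumerate}

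\paragraph{Minor point on sharpness.} Your sharpness example should take $y$ with all fractional parts $f_i$ equal to $s/(d+1)$, rather than the center. For $k>\lfloor\frac{d+1}{2}\rfloor$ the center can itself be a $k$-fold barycenter (e.g.\ $k=d+1$).
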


\begin{remark}
	\label{rem:extremal_configuration_kfold}
	Explicitly, depending on the size of $k$ and the parity of $d$, the bound is
	\[
		R_{d,k}
		=
		\left\{
		\begin{array}{ll}
			\sqrt{\frac{d+1-k}{dk}} &\text{ if } k\leq \left\lfloor\frac{d+1}{2}\right\rfloor,\\[2mm]
			\frac{d+1}{2k\sqrt{d}} &\text{ if } k> \left\lfloor\frac{d+1}{2}\right\rfloor \text{ and } d \text{ odd},\\[2mm]
			\frac{\sqrt{d+2}}{2k} &\text{ if } k> \left\lfloor\frac{d+1}{2}\right\rfloor \text{ and } d \text{ even}.
		\end{array}
		\right.
	\]
	It is strictly smaller than the bound $1/\sqrt{k}$ given by the approximate Carathéodory theorem for $k\geq2$ (see \Cref{eq:classical_approximate_caratheodory} in \Cref{subsubsec:approximate_caratheodory}).
	It matches the value obtained by Bomze, Gollowitzer, and Yıldırım in \cite[Theorem~9]{Bomze_2013} for the regular simplex, after accounting for the factor $\sqrt{(d+1)/d}$ relating the two formulations.
	Thus the constant is the best possible.
\end{remark}

Our proof involves computations in barycentric coordinates.
The admissible points $x$ in the theorem are the $k$-fold barycenters, whose barycentric coordinates belong to
\begin{equation}
	A_{d,k}
	=
	\left\{
	\alpha\in\R^{d+1} \mid
	\alpha_i=\frac{n_i}{k},\ 
	n_i\in\N,\
	\sum_{i=0}^d n_i=k
	\right\}.
	\label{eq:lambda_d_k}
\end{equation}
To find such a barycenter near a given point $y$, we use the following strategy.
Let $(\lambda_0,\dots,\lambda_d)$ be the barycentric coordinates of $y$.
We round them down to the nearest multiple of $1/k$ via
\[
	\lambda^-_i = \frac{\lfloor k\lambda_i\rfloor}{k},
	\quad
	i\in\{0,\dots,d\}.
\]
After normalization, the coordinates $(\lambda^-_0,\dots,\lambda^-_d)$ define an $\ell$-fold barycenter, with $\ell\leq k$.
Let $r=k-\ell$.
To complete it to a $k$-fold barycenter, we consider the remaining coordinates
\[
	\lambda^+_i = \lambda_i - \lambda^-_i,
	\quad
	i\in\{0,\dots,d\}.
\]
After normalization, the tuple $(\lambda^+_0,\dots,\lambda^+_d)$ defines a \emph{central point} of $\conv(\sigma)$, by which we mean a point whose barycentric coordinates are uniformly small.
We show in \Cref{lem:covering_without_repetitions} that such a point is well approximated by an $r$-fold barycenter (more precisely, by a barycenter without repetitions).
Combining it with the rounded-down part yields the desired point.

Proving \Cref{lem:covering_without_repetitions} is the delicate step.
To show that central points are close to $r$-fold barycenters, we argue by optimizing over all spherical configurations $\sigma\in (S^{d-1})^{d+1}$.
Arrange the vertices in a $d\times(d+1)$ matrix $V=[v_0\ \cdots\ v_d]$, and consider the \emph{Gram matrix}
\[
G=V^\top V
= \big(\langle v_i,v_j\rangle\big)_{i,j = 0}^d.
\]
If $\alpha$ denotes the barycentric coordinates of a point $x\in\conv(\sigma)$, the matrix product reads
\[
V\alpha = x.
\]
In particular, if $y$ is another point with barycentric coordinates $\lambda$, then
\[
\|y-x\|^2
= 
(\lambda-\alpha)^\top G(\lambda-\alpha).
\]
To prove the result, we consider a maximizer $(V,\lambda)$ of
\begin{equation}
	\label{eq:minimization_problem}
	\Phi(V,\lambda)
	=
	\min_{\alpha\in A^*_{d,r}}
	(\lambda-\alpha)^\top G(\lambda-\alpha),
	\qquad
	G=V^\top V,
\end{equation}
where $A^*_{d,r}$ denotes the barycentric coordinates of the $r$-fold barycenters without repetitions:
\begin{equation*}
	A^*_{d,r}
	=
	\left\{
	\alpha\in\R^{d+1} \mid
	\alpha_i=\frac{1}{r}\text{ or } 0,\ 
	\sum_{i=0}^d \alpha_i=1
	\right\}.
\end{equation*}

\Cref{subsec:optimality_conditions} derives optimality conditions for this problem: \Cref{lem:gram_variations} characterizes which matrices are obtained as first-order deformations of $G$, and \Cref{lem:optimality_conditions} applies Gordan's lemma to the active constraints.
These conditions are used in \Cref{subsec:estimate_without_repetitions}
to prove \Cref{lem:covering_without_repetitions}, which shows the estimate for central points.
The full theorem is derived in \Cref{subsec:proof_covering_kfold}.

\subsection{Optimality conditions for the Gram matrix}
\label{subsec:optimality_conditions}

We first derive optimality conditions from \Cref{eq:minimization_problem}.
\Cref{lem:gram_variations} characterizes which first-order variations of the Gram matrix are compatible with vertices on the unit sphere.
Then, in \Cref{lem:optimality_conditions}, we apply Gordan's lemma to the nearest barycenters at a maximizer.
It yields weights on the active barycenters, whose first moment is centered at the maximizing point, and whose second moment depends on the linear dependence among the vertices.

Given unit vectors $v_0,\dots,v_d$ in $\R^d$, we recall the notation
\[
	V=[v_0\ \cdots\ v_d],
	\qquad
	G=V^\top V.
\]
If $V$ is affinely full-dimensional, then $\dim\ker G=1$.
Let $\tau_G$ be the unique vector satisfying
\[
	V\tau_G=0,
	\qquad
	\langle e, \tau_G\rangle=1,
\]
where $e = e_1+\dots+e_{d+1}$ is the sum of the standard basis of $\R^{d+1}$.
Such a vector exists since the relations $\sum_{i=0}^d \tau_iv_i=0$ and $\sum_{i=0}^d \tau_i=0$ would imply an affine dependence of the $(v_i)_{i=0}^d$.

\begin{lemma}
	\label{lem:gram_variations}
	Let $\sigma=\{v_0,\dots,v_{d}\}\subset\R^{d}$ be a subset of unit vectors such that $V$ is affinely full-dimensional.
	A symmetric matrix $H\in\R^{(d+1)\times(d+1)}$ occurs as the derivative of the Gram matrix along a smooth deformation of the vertices on the unit sphere if and only if
	\begin{equation}
		\label{eq:gram_variations}
		\diag (H)=0,
		\qquad
		\tau_G^\top H\tau_G=0.
	\end{equation}
\end{lemma}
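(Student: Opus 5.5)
We prove the two directions separately, working with a smooth curve $t\mapsto V(t)=[v_0(t)\ \cdots\ v_d(t)]$ with $V(0)=V$ and $\|v_i(t)\|=1$. Set $W=\dot V(0)$. Then $H=\dot G(0)=W^\top V+V^\top W$, and the sphere constraint gives $\langle v_i,w_i\rangle=0$ for every column, which is exactly $H_{ii}=2\langle v_i,w_i\rangle=0$. For the second condition we use $V\tau_G=0$:
\[
\tau_G^\top H\tau_G=2\,(V\tau_G)^\top(W\tau_G)=0 .
\]
This proves necessity.

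For sufficiency, let $H$ be symmetric with $\diag(H)=0$ and $\tau_G^\top H\tau_G=0$. We look for $W\in\R^{d\times(d+1)}$ with $W^\top V+V^\top W=H$ and $\langle v_i,w_i\rangle=0$; the latter follows automatically from $\diag(H)=0$. Given such a $W$, the curve $v_i(t)=(v_i+tw_i)/\|v_i+tw_i\|$ is smooth, stays on the sphere, and has derivative $w_i$ at $t=0$ because $w_i\perp v_i$. Hence $\dot G(0)=H$.

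So the problem reduces to a linear-algebra statement. Consider the map
\[
L\colon W\mapsto W^\top V+V^\top W
\]
from $\R^{d\times(d+1)}$ to symmetric $(d+1)\times(d+1)$ matrices; we must show its image is $\{H : \tau_G^\top H\tau_G=0\}$. The inclusion $\subseteq$ was shown above. For the converse we use affine full-dimensionality: $V$ has rank $d$ and its kernel is spanned by $\tau_G$. Extend $\tau_G$ to a basis of $\R^{d+1}$, writing $P=[\tau_G\ Q]$ with $Q$ spanning a complement, so that $VQ$ is an invertible $d\times d$ matrix. In these coordinates $P^\top L(W)P$ has zero $(\tau,\tau)$-entry. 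The remaining blocks are $(VQ)^\top(WQ)+(WQ)^\top(VQ)$ in the $Q\times Q$ block and $(VQ)^\top(W\tau_G)$ in the off-diagonal block.

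\begin{itemize}
\item The off-diagonal block can be made arbitrary by choosing $W\tau_G$, since $(VQ)^\top$ is invertible.
\item The $Q\times Q$ block can be made any symmetric $S$ by taking $WQ=\tfrac12 (VQ)^{-\top}S$.
\end{itemize}

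Since $W\tau_G$ and $WQ$ determine $W$ independently, every $H$ with $\tau_G^\top H\tau_G=0$ lies in the image. A dimension count gives the same conclusion more quickly: the image has codimension at most one, and it is contained in a hyperplane, so equality holds.

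The main subtlety is the sufficiency direction, in particular the surjectivity count and ensuring the constructed curve is genuinely on the sphere with the prescribed derivative. Normalizing $v_i+tw_i$ handles the latter cleanly. If the dimension count seems too slick, the block construction above is the explicit fallback.
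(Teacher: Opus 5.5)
Your proof is correct. Its overall architecture matches the paper's: necessity by differentiating the Gram matrix, and sufficiency by solving $H=V^\top W+W^\top V$ linearly and then integrating along the normalized curve $(v_i+tw_i)/\|v_i+tw_i\|$, exactly as in the paper's Step~3. The key linear-algebra step is done differently, though.

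The paper restricts $\Psi(W)=V^\top W+W^\top V$ to the tangent space $\{W \mid \langle v_i,w_i\rangle=0\}$ and identifies $\ker\Psi=\{AV \mid A\in\mathfrak{so}(d)\}$. It then compares $\dim\Psi(\mathcal W)=\tfrac{d(d+1)}{2}-1$ with $\dim\mathcal H$. That comparison needs a separate check that $\tau_G^\top H\tau_G=0$ is independent of the diagonal constraints, which the paper gets from $\tau_G$ having two nonzero entries.

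You instead drop the tangency constraint on the domain. An explicit block computation in the basis $[\tau_G\ Q]$ shows that the unconstrained map is onto the hyperplane $\{\tau_G^\top H\tau_G=0\}$. You then observe that $\diag(H)=0$ forces $\langle v_i,w_i\rangle=H_{ii}/2=0$ for any preimage. This decoupling is cleaner: it is constructive, needs no kernel identification, and makes the independence check unnecessary.

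Your necessity argument $\tau_G^\top H\tau_G=2(V\tau_G)^\top(W\tau_G)=0$ also avoids differentiating $t\mapsto\tau_{\mathcal G(t)}$, which the paper does in Step~1. The paper uses your identity only later, for the inclusion $\Psi(\mathcal W)\subseteq\mathcal H$.

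One caveat concerns your alternative ``dimension count''. It is not actually quicker, because asserting that the image has codimension at most one amounts to proving $\dim\ker L=d(d-1)/2$. That is precisely the computation $\ker L=\{AV \mid A\in\mathfrak{so}(d)\}$ that the paper carries out. The block construction is the argument that should stand.
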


\begin{proof}
	We first show that every infinitesimal deformation of the vertices on the unit sphere satisfies these two conditions, and then prove the converse.
	
	\paragraph{Step 1: Direct implication.}
	
	Consider a smooth curve $t\mapsto \mathcal{V}(t)$ such that $\mathcal{V}(t)\in (S^{d-1})^{d+1}$ and $\mathcal{V}(0)=V$.
	Consider the associated Gram matrices
	\[
		\mathcal{G}(t) = \mathcal{V}(t)^\top \mathcal{V}(t).
	\]
	In particular, $\mathcal{G}(0) = G$.
	We must show that the derivative $H = \mathcal{G}'(0)$ satisfies \Cref{eq:gram_variations}.
	
	We first observe a relation that will be useful below: if we define $W = \mathcal{V}'(0)$, then
	\begin{equation}
		H=\mathcal{G}'(0)=V^\top W+W^\top V.
		\label{eq:proof_first_variation1}
	\end{equation}
	
	Next, the diagonal entries of $\mathcal{G}(t)$ are all 1. Thus, differentiating at $t=0$ gives
	\begin{equation*}
		H_{i,i} = 0
		\quad \forall i\in\{0,\dots,d\}.
	\end{equation*}
	Moreover, for $t$ small, the matrices satisfy $\mathcal{G}(t)\tau_{\mathcal{G}(t)}=0$.
	Differentiating at $t=0$ gives
	\[
		H\tau_G+G\tau'_G=0.
	\]
	Multiplying on the left by $\tau_G^\top$, and using $\tau_G^\top G=0$, we obtain
	\begin{equation*}
		\tau_G^\top H \tau_G=0.
	\end{equation*}
	This shows \Cref{eq:gram_variations}, as wanted.
	
	\paragraph{Step 2: Characterization of the infinitesimal deformations.}
	
	Conversely, we must show that every symmetric matrix $H$ satisfying \Cref{eq:gram_variations} is obtained as the derivative of the Gram matrix of a smooth deformation of the vertices $v_0,\dots,v_d$.
	We fix such a matrix $H$.
	We first show that there exists a matrix
	$W=[w_0\ \cdots\ w_d]$ such that
	\[
	\langle v_i,w_i\rangle=0
	\quad \forall i\in\{0,\dots,d\},
	\qquad
	H=V^\top W+W^\top V.
	\]
	Motivated by \Cref{eq:proof_first_variation1}, we consider the linear map
	\begin{align*}
		\Psi\colon \mathcal{W} &\longrightarrow \Sym_{d+1}(\R)\\
		W &\longmapsto V^\top W+W^\top V,
	\end{align*}
	where $\Sym_{d+1}(\R)$ is the space of $(d+1)\times(d+1)$ symmetric matrices, and
	\[
	\mathcal{W} = \big\{W=[w_0\ \cdots\ w_d] \in\R^{d\times(d+1)} \mid \langle v_i,w_i\rangle=0\ \forall i\in\{0,\dots,d\}\big\}.
	\]
	We show that $\Psi(\mathcal{W}) = \mathcal{H}$, where
	\begin{align*}
		\mathcal{H} &= \{H \in \operatorname{Sym}_{d+1}(\R) \mid \diag (H)=0,\ \tau_G^\top H\tau_G=0\}.
	\end{align*}
	
	First, observe that for every $W\in\mathcal W$, we have
	\[
		\Psi(W)_{i,i}
		= \langle v_i,w_i\rangle + \langle w_i,v_i\rangle
		= 0,
	\]
	and, since $V\tau_G=0$,
	\[
	\tau_G^\top\Psi(W)\tau_G
	=
	\langle W\tau_G,V\tau_G\rangle+\langle V\tau_G,W\tau_G\rangle
	=
	0.
	\]
	We deduce that $\Psi(\mathcal W)\subseteq\mathcal H$.
	We prove equality by comparing the dimensions.
			
	First, we show that the kernel of $\Psi$ is
	\begin{equation}
		\ker\Psi = \{ A V \mid A \in \mathfrak{so}(d)\},	
		\label{eq:proof_first_variation4}
	\end{equation}
	where $\mathfrak{so}(d)$ denotes the space of $d\times d$ skew-symmetric matrices.
	On the one hand, the inclusion $\ker\Psi \supset \{ A V \mid A \in \mathfrak{so}(d)\}$ is obvious.
	Conversely, if $W\in\ker\Psi$, then
	\begin{equation}
		V^\top W+W^\top V=0.		
		\label{eq:proof_first_variation5}
	\end{equation}
	Multiplying on the right by $\tau_G$ and using $V\tau_G=0$ gives
	\[
		V^\top W\tau_G=0.
	\]
	Since the columns of $V$ span $\R^d$, the map $V^\top\colon\R^d\to\R^{d+1}$ is injective.
	Hence $W\tau_G=0$.

	Observe that the assignment $v_i\mapsto w_i$ extends to a linear map $A:\R^d\to\R^d$.
	Indeed, the only linear relation satisfied by $V$ is $V\tau_G=0$, which is satisfied by $W$.	
	In other words, 
	\[
		W=AV.
	\]	 
	\Cref{eq:proof_first_variation5} then becomes
	\[
		V^\top(A+A^\top)V=0.
	\]
	Since the columns of $V$ span $\R^d$, we deduce that $A+A^\top=0$, thus $A$ is skew-symmetric.
	Consequently, we have that $\ker\Psi \subset \{ A V \mid A \in \mathfrak{so}(d)\}$.
	We deduce \Cref{eq:proof_first_variation4}. 	

	We deduce that the dimension of the kernel is:
	\[
		\dim\ker\Psi = \dim \mathfrak{so}(d) = \frac{d(d-1)}{2}.
	\]
	Moreover, it is clear that $\dim \mathcal{W} = (d+1)(d-1)$.
	Thus the dimension of $\Psi(\mathcal{W})$ is
	\begin{align*}
		\dim\Psi(\mathcal{W})
		&= \dim\mathcal{W} - \dim\ker\Psi\\
		&=(d+1)(d-1)-\frac{d(d-1)}2
		= \frac{d(d+1)}2-1.
	\end{align*}
	
	On the other hand, the space of symmetric $(d+1)\times(d+1)$ matrices with zero diagonal has dimension $d(d+1)/2$.
	The condition $\tau_G^\top H\tau_G=0$ adds an independent linear relation.
	Indeed, $\tau_G$ has at least two nonzero coordinates.
	We deduce that
	\[
		\dim\mathcal H
		=
		\frac{d(d+1)}2-1.
	\]
	This matches $\dim\Psi(\mathcal{W})$.
	Since $\Psi(\mathcal W)\subseteq\mathcal H$, we conclude that the two spaces are equal.
		
	\paragraph{Step 3: Integration into a smooth deformation.}

	Since the fixed matrix $H$ belongs to $\mathcal H$, we deduce that there exists $W=[w_0\ \cdots\ w_d]\in\mathcal W$ such that
	\[
		H=\Psi(W)=V^\top W+W^\top V.
	\]
	We must integrate this infinitesimal deformation into a smooth deformation on the sphere.
	We simply take 
	\[
		\mathcal V(t)
		=
		[v_0(t)\ \cdots\ v_d(t)],
		\qquad
		v_i(t)
		=
		\frac{v_i+t w_i}{\|v_i+t w_i\|},
		\quad
		i \in \{0,\dots,d\}.
	\]
	For $t$ close to $0$, this is a smooth curve in $(S^{d-1})^{d+1}$.
	It satisfies $\mathcal V(0)=V$. 
	Moreover,
	\[
		v_i'(0)
		=
		w_i-\langle v_i,w_i\rangle v_i
		=
		w_i
	\]
	since $W\in\mathcal W$. 
	Thus, $\mathcal V'(0)=W$.
	Besides, if we denote $\mathcal G(t)=\mathcal V(t)^\top\mathcal V(t)$, then
	\[
		\mathcal G'(0)
		= \mathcal{V}(0)^\top \mathcal{V}'(0)+\mathcal{V}'(0)^\top \mathcal{V}(0)
		= V^\top W+W^\top V
		= H.
	\]
	Thus $H$ is the derivative of the Gram matrix along a smooth deformation of the vertices.
\end{proof}

We now derive optimality conditions for our maximization problem.
We formulate the lemma for any finite $A\subset\Delta_d$, where $\Delta_d$ is the set of barycentric coordinates, defined as
\[
	\Delta_d=\left\{\lambda\in\R_{\geq 0}^{d+1} 
	~\big|~
	\sum_{i=0}^d\lambda_i=1\right\}.
\]
However, in the sequel, we will only use it in the proof of \Cref{lem:covering_without_repetitions} with $A=A^*_{d,r}$.

\begin{lemma}
	\label{lem:optimality_conditions}
	Let $A\subset\Delta_d$ be finite.
	Suppose that $(V,\lambda)$ is a global maximizer of
	\[
		\Phi(V,\lambda)
		=
		\min_{\alpha\in A}
		(\lambda-\alpha)^\top G(\lambda-\alpha),
		\qquad
		G=V^\top V,
	\]
	over all spherical configurations $V$ in $(S^{d-1})^{d+1}$ and over an open neighborhood of $\lambda$ in $\Delta_d$.
	Suppose that $V$ is affinely full-dimensional.	
	Define the maximum value and the active points
	\begin{align*}
		\Phi^*&=\Phi(V,\lambda)>0,\\
		A_0 &=\left\{\alpha\in A\mid (\lambda-\alpha)^\top G(\lambda-\alpha)=\Phi^*\right\}.		
	\end{align*}
	Denote by $\tau_G$ the unique vector satisfying 
	\[
	V\tau_G=0,
	\qquad
	\langle e, \tau_G\rangle=1,
	\]
	where $e = e_1+\dots+e_{d+1}$.
	Then there exist nonnegative weights $(w_\alpha)_{\alpha\in A_0}$ such that
	\begin{align}
		\sum_{\alpha\in A_0}w_\alpha &=1, 
		\label{eq:lem_optimality_1}\\	
		\sum_{\alpha\in A_0}w_\alpha (\lambda-\alpha) &=0,
		\label{eq:lem_optimality_2}\\	
		\sum_{\alpha\in A_0}w_\alpha (\lambda-\alpha) (\lambda-\alpha)^\top
		&= \Phi^*\bigl(\diag(\tau_G)-\tau_G \tau_G^\top\bigr). 
		\label{eq:lem_optimality_3}
	\end{align}
	Moreover, $\tau_G$ has nonnegative coordinates.
\end{lemma}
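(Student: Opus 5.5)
We will derive these conditions from first-order optimality of the max–min problem at $(V,\lambda)$, combined with a theorem of the alternative (Gordan's lemma). The function $\Phi$ is a minimum of finitely many smooth functions $f_\alpha(V,\lambda)=(\lambda-\alpha)^\top G(\lambda-\alpha)$. Near the maximizer, only the active indices $\alpha\in A_0$ matter: the others are strictly larger by continuity. Local maximality of $\min_{\alpha\in A_0} f_\alpha$ then implies that there is no admissible direction along which all active $f_\alpha$ strictly increase to first order. The admissible directions are pairs $(H,\mu)$. Here $H$ ranges over the Gram variations characterized in \Cref{lem:gram_variations}, namely symmetric matrices with $\diag(H)=0$ and $\tau_G^\top H\tau_G=0$. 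The vector $\mu$ ranges over tangent directions of $\Delta_d$, namely $\langle e,\mu\rangle=0$; these are all admissible because we maximize over an open neighborhood of $\lambda$ in $\Delta_d$. The derivative of $f_\alpha$ along $(H,\mu)$ is
\[
(\lambda-\alpha)^\top H(\lambda-\alpha)+2(\lambda-\alpha)^\top G\mu .
\]

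Since the admissible set of $(H,\mu)$ is a linear subspace $L$, the absence of a direction making all these linear functionals $\ell_\alpha$ positive is exactly the hypothesis of Gordan's lemma restricted to $L$. It yields nonnegative weights $w_\alpha$, not all zero and normalized so that $\sum w_\alpha=1$, with $\sum_\alpha w_\alpha \ell_\alpha$ vanishing on $L$. We then read off the two components.

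\textbf{The $\mu$-component.} Vanishing gives $G\sum_\alpha w_\alpha(\lambda-\alpha)\in\Span(e)$. Set $u=\sum_\alpha w_\alpha(\lambda-\alpha)$, so that $\langle e,u\rangle=0$. Pairing $Gu=ce$ with $\tau_G$ gives $c=\tau_G^\top G u=0$, so $Gu=0$ and hence $u\in\ker G=\Span(\tau_G)$. Since $\langle e,\tau_G\rangle=1$ while $\langle e,u\rangle=0$, we get $u=0$. This is \Cref{eq:lem_optimality_2}.

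\textbf{The $H$-component.} Set $M=\sum_\alpha w_\alpha(\lambda-\alpha)(\lambda-\alpha)^\top$. Vanishing says $\langle M,H\rangle=0$ for every symmetric $H$ with zero diagonal and $\tau_G^\top H\tau_G=0$. By duality, $M=D+c\,\tau_G\tau_G^\top$ for some diagonal matrix $D$ and scalar $c$. To identify $D$ and $c$, we use two facts.
\begin{itemize}
\item Each vector $\lambda-\alpha$ lies in $e^\perp$, so $Me=0$.
\item We have $\langle G,M\rangle=\sum_\alpha w_\alpha f_\alpha=\Phi^*$.
\end{itemize}
From $Me=0$ we get $De+c\tau_G=0$, so $D=-c\,\diag(\tau_G)$. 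The trace pairing with $G$ then uses $\diag(G)=e$ and $G\tau_G=0$: it gives $-c\langle e,\tau_G\rangle=\Phi^*$, hence $c=-\Phi^*$. This yields \Cref{eq:lem_optimality_3}.

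Finally, $M$ is positive semidefinite. Hence its diagonal entries $\Phi^*(\tau_i-\tau_i^2)$ are nonnegative, which gives $\tau_i\in[0,1]$ since $\Phi^*>0$. In particular, $\tau_G$ has nonnegative coordinates.

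The main obstacle is the rigorous passage from "local maximizer of a min of smooth functions" to "no strictly improving first-order direction" on the constrained set. A direction with all $\ell_\alpha>0$ must be realized by an actual curve. This is exactly what Step 3 of \Cref{lem:gram_variations} provides for $H$; we pair it with the straight line $\lambda+t\mu$, which stays in the neighborhood for small $t$. Along this curve every active $f_\alpha$ strictly increases while the inactive ones remain larger, contradicting maximality. The duality step describing the annihilator of $\{H:\diag H=0,\ \tau^\top H\tau=0\}$ as $\{D+c\,\tau\tau^\top\}$ is standard linear algebra.
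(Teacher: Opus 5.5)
Your proposal is correct and follows the paper's proof essentially step for step. You apply Gordan's lemma to the active linearizations $r_\alpha^\top H r_\alpha+2u^\top G r_\alpha$ over admissible pairs. The specialization $H=0$ gives $Gm_1\in\Span(e)\cap\tau_G^\perp=\{0\}$, hence $m_1=0$. The specialization $u=0$ gives $M_2=D+c\,\tau_G\tau_G^\top$, which is pinned down by $M_2e=0$ and $\tr(GM_2)=\Phi^*$. Positive semidefiniteness of $M_2$ then gives the sign of $\tau_G$. Your explicit realization of an improving direction by the curve from Step~3 of \Cref{lem:gram_variations} together with $\lambda+t\mu$ only spells out what the paper leaves implicit in its maximality argument.
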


\begin{proof}
	We split the proof into four steps, following the four statements in the lemma.
	
	\paragraph*{Step 1: First-order variations.}
	
	Define the objective function associated with $\alpha\in A$ as
	\[
		\Phi_\alpha(G,\lambda)
		=
		(\lambda-\alpha)^\top G(\lambda-\alpha).
	\]
	Besides, let $H$ be an admissible first-order variation of $G$, and let $u$ be a first-order variation of $\lambda$. 
	That is, we consider
	\[
		G(t)=G+tH+o(t),
		\qquad
		\lambda(t)=\lambda+tu+o(t).
	\]
	In addition, we suppose that $\langle e,u\rangle=0$, so that $\lambda(t)$ lies in the affine hyperplane $\sum_{i=0}^d\lambda_i=1$, up to an error term $o(t)$.
	We call $(H,u)$ an \textit{admissible pair}.
	For an active point $\alpha\in A_0$, set
	\[
		r_\alpha=\lambda-\alpha.
	\]
	Then
	\[
		\lambda(t)-\alpha=r_\alpha+tu+o(t).
	\]
	Therefore,
	\begin{align*}		
		\Phi_\alpha(G(t),\lambda(t))
		&=
		(r_\alpha+tu)^\top(G+tH)(r_\alpha+tu)+o(t)  \\
		&=
		r_\alpha^\top G r_\alpha
		+
		t\left(
		r_\alpha^\top H r_\alpha+2u^\top G r_\alpha
		\right)
		+o(t).
	\end{align*}
	Thus, the first-order variation of the active constraint is
	\begin{equation}
		\label{eq:proof_lem_optimality_1}
		L_\alpha(H,u)
		=
		r_\alpha^\top H r_\alpha
		+
		2u^\top G r_\alpha.
	\end{equation}
	
	In preparation for applying Gordan's lemma, let us observe that there is no admissible pair $(H,u)$ such that, for all active $\alpha\in A_0$,
	\[
		L_\alpha(H,u)>0.
	\]
	Indeed, if such a pair existed, then the deformation $(H,u)$ would increase all active objective functions $\Phi_\alpha$, $\alpha\in A_0$.
	Since the inactive constraints are separated from the minimum 
	\[
		\Phi(V,\lambda)
		=
		\min_{\alpha\in A}
		\Phi_\alpha(V,\lambda)
	\]
	by a positive gap, they would remain inactive for small $t$.
	Thus, following the deformation $(H,u)$ would contradict the maximality of $(V,\lambda)$.

	Thus, by Gordan's lemma applied to $(L_\alpha)_{\alpha\in A_0}$ and the vector space of admissible pairs $(H,u)$, there exist nonnegative weights $(w_\alpha)_{\alpha\in A_0}$ such that
	\begin{equation}
		\label{eq:proof_lem_optimality_2}
		\sum_{\alpha\in A_0} w_\alpha L_\alpha(H,u)=0
	\end{equation}
	for every admissible $(H,u)$.  
	After normalization, we may assume that
	\[
	\sum_{\alpha\in A_0}w_\alpha=1,
	\]
	as in \Cref{eq:lem_optimality_1}.	
	Define the first and second moments
	\begin{equation}
		\label{eq:proof_lem_optimality_moments}
		m_1=\sum_{\alpha\in A_0}w_\alpha r_\alpha,
		\qquad
		M_2=\sum_{\alpha\in A_0}w_\alpha r_\alpha r_\alpha^\top.
	\end{equation}
	
	\paragraph*{Step 2: Constraints on the first moment.}

	By the definition of $L_\alpha$ in \Cref{eq:proof_lem_optimality_1}, the linear relation in \Cref{eq:proof_lem_optimality_2}, and the definition of $m_1$ and $M_2$, we obtain
	\begin{equation}
		\label{eq:proof_lem_optimality_3}
		0 = \sum_{\alpha\in A_0} w_\alpha L_\alpha(H,u)
		= \tr(HM_2)+2u^\top G m_1
	\end{equation}
	for every admissible pair $(H,u)$, i.e., such that $\langle e,u\rangle=0$.
	
	Take $H=0$ in \Cref{eq:proof_lem_optimality_3}.
	It gives
	\[
		u^\top G m_1=0.
	\]
	Since this holds for every $u\in e^\perp$, we deduce that $G m_1$ belongs to the span of $e$.

	Moreover, $\ker(G)$ is spanned by $\tau_G$.
	Since $G$ is symmetric, this means that $\im(G)$ is perpendicular to $\tau_G$.
	Together with $Gm_1 \in \Span(e)$, we deduce that
	\[
		Gm_1\in \Span(e)\cap \tau_G^\perp.
	\]	
	Since $\langle e,\tau_G\rangle=1$, the right-hand side is $\{0\}$, hence $Gm_1 = 0$.
	In other words, one has
	\begin{equation}
		\label{eq:proof_lem_optimality_4}
		m_1 \in \ker(G) = \Span(\tau_G).
	\end{equation}
	
	On the other hand, every $r_\alpha=\lambda-\alpha$ in the definition of $m_1$ has coordinates that sum to 0, since $\lambda$ and $\alpha$ do.
	Consequently, the coordinates of $m_1$ sum to 0:
	\[
		\langle e,m_1\rangle
		= \sum_{\alpha\in A_0}w_\alpha \langle e,r_\alpha\rangle
		=0.
	\]
	Combined with \Cref{eq:proof_lem_optimality_4} and $\langle e,\tau_G\rangle=1$, we obtain \Cref{eq:lem_optimality_2}:
	\[
		m_1=0.
	\]
	
	\paragraph*{Step 3: Constraints on the second moment.}
	
	Now, take $u=0$ in \Cref{eq:proof_lem_optimality_3}.
	It reads
	\[
		\tr(HM_2)=0
	\]
	for every infinitesimal variation $H$ of the Gram matrix, i.e., by \Cref{lem:gram_variations}, for matrices in
	\begin{align*}
		\mathcal{H} &= \{H \in\Sym_{d+1}(\R) \mid \diag (H)=0,\ \tau_G^\top H\tau_G=0\}.
	\end{align*}
	The equality above can be rephrased as
	\[
		M_2 \in \mathcal{H}^\perp,
	\]
	where we use the dot product $(A,B)\mapsto\tr(AB)$ on the symmetric $(d+1)\times(d+1)$ matrices.

	Observe that we have
	\[
		\mathcal{H}^\perp = 
		\left\{
		D+\mu\,\tau_G\tau_G^\top
		\mid
		D \text{ diagonal},\ \mu\in\R
		\right\},
	\]
	where we take the orthogonal complement in the space of symmetric matrices.
	Indeed, the orthogonal complement of $\{H\mid \diag(H)=0\}$ coincides with the diagonal matrices.
	Besides, the space $\{H\mid \tau_G^\top H\tau_G=0\}$ has $\tau_G\tau_G^\top$ as a normal direction, since
	\[
	\tau_G^\top H\tau_G
	=
	\tr\bigl((\tau_G\tau_G^\top)H\bigr).
	\]
	Since the orthogonal complement of an intersection is the sum of the
	orthogonal complements, we obtain the expression for $\mathcal{H}^\perp$.

	Consequently, there exist a diagonal matrix $D\in\R^{(d+1)\times(d+1)}$ and a $\mu\in\R$ such that
	\begin{equation}
		\label{eq:proof_lem_optimality_5}
		M_2=D+\mu\,\tau_G\tau_G^\top.
	\end{equation}
	On the other hand, we have observed previously that every $r_\alpha=\lambda-\alpha$ in the definition of $M_2$ is perpendicular to $e$.
	That is, they satisfy $r_\alpha^\top e = 0$. 
	Therefore,
	\[
		M_2 e 
		= \sum_{\alpha\in A_0}w_\alpha r_\alpha r_\alpha^\top e  
		= 0.
	\]
	Substituting $M_2$ with its expression in \Cref{eq:proof_lem_optimality_5}, we get
	\[
		D e+\mu\,\tau_G\langle \tau_G,e\rangle=0.
	\]
	Using $\langle \tau_G,e\rangle=1$, this becomes
	\[
	D e+\mu\,\tau_G=0.
	\]
	Since $D$ is diagonal, this is equivalent to
	\[
	D=-\mu\,\diag(\tau_G).
	\]
	Using \Cref{eq:proof_lem_optimality_5} again, we deduce that
	\begin{equation}
		\label{eq:proof_lem_optimality_6}
		M_2
		=
		-\mu \big(\diag(\tau_G)-\tau_G\tau_G^\top\big).	
	\end{equation}

	To close this step, we identify the scalar $\mu$.
	First, observe that each active point $\alpha\in A_0$ realizes the maximum $\Phi^*$:
	\[
		\Phi^* = r_\alpha^\top G r_\alpha.
	\]
	This still holds for their weighted sum:
	\[
		\Phi^* = \sum_{\alpha\in A_0} w_\alpha r_\alpha^\top G r_\alpha.
	\]
	But the right-hand side coincides with the trace of the product $GM_2$, since
	\[
		\tr(GM_2)
		=
		\tr\left( G \sum_{\alpha\in A_0}w_\alpha r_\alpha r_\alpha^\top\right)
		=
		\sum_{\alpha\in A_0} w_\alpha r_\alpha^\top G r_\alpha.
	\]
	Using $M_2=-\mu \bigl(\diag(\tau_G)-\tau_G\tau_G^\top\bigr)$ in \Cref{eq:proof_lem_optimality_6}, we deduce that
	\begin{equation}
		\label{eq:proof_lem_optimality_7}
		\Phi^* 
		= \tr(GM_2)
		= -\mu \tr \bigg(G\big(\diag(\tau_G)-\tau_G\tau_G^\top\big)\bigg).
	\end{equation}
	Moreover, since $\langle e,\tau_G\rangle=1$ by definition of $\tau_G$, and since the diagonal entries of the Gram matrix $G$ are 1, we have
	\[
		\tr\big(G\diag(\tau_G)\big)=\langle e,\tau_G\rangle=1.
	\]
	On the other hand, it follows from $G\tau_G=0$ that
	\[
		\tr \left(G\tau_G\tau_G^\top\right)
		=
		0.
	\]
	This allows us to identify $\mu$ in \Cref{eq:proof_lem_optimality_7} via
	\[
		\Phi^* = -\mu (1-0) = -\mu.
	\]
	Together with \Cref{eq:proof_lem_optimality_6}, we obtain \Cref{eq:lem_optimality_3} in the statement of the lemma:
	\[
		M_2 = \Phi^*\big(\diag(\tau_G)-\tau_G\tau_G^\top\big).
	\]
	
	\paragraph*{Step 4: Semidefiniteness.}
	
	The matrix $M_2$ is positive semidefinite, since it is a weighted sum of the $r_\alpha r_\alpha^\top$, $\alpha\in A_0$, with nonnegative coefficients; see \Cref{eq:proof_lem_optimality_moments}.

	Since $\Phi^*>0$, we deduce from $M_2 = \Phi^*\big(\diag(\tau_G)-\tau_G\tau_G^\top\big)$ that the matrix
	\[
	\diag(\tau_G)-\tau_G\tau_G^\top
	\]
	is also positive semidefinite.
	Its $(i,i)$-entry is
	\[
		(\tau_G)_i - (\tau_G)_i^2 = (\tau_G)_i(1-(\tau_G)_i).
	\]
	If some coordinate $(\tau_G)_i$ were negative, then the entry $(\tau_G)_i(1-(\tau_G)_i)$ would be negative, which is impossible for a positive semidefinite matrix.
	Thus, $\tau_G$ has nonnegative coordinates.
	This shows the fourth and last statement of the lemma.
\end{proof}

\subsection{Estimate without repetitions for central points}
\label{subsec:estimate_without_repetitions}

We now prove an approximation result for points whose barycentric coordinates
are at most $1/r$, $r\in\N_{>0}$.
The goal is to approximate such a point by the barycenter of $r$ \textit{distinct} vertices.

We point out that the statement shares some similarities with \Cref{lem:pivotal_sampling}.
Whereas \Cref{lem:pivotal_sampling} gives a bound depending on the norm of the point $y$, the estimate below is a worst-case bound depending only on $d$ and $r$.
We justify in \Cref{rem:covering_without_repetitions} that the bound is sharp and is attained by the regular simplex.

\begin{lemma}
	\label{lem:covering_without_repetitions}	
	Consider a simplex in the unit ball $\sigma=\{v_0,\dots,v_{d}\}\subset B^d\subset\R^{d}$.
	Let $y\in\conv(\sigma)$ be a point with barycentric coordinates at most $1/r$, where $r$ is an integer in $\{1,\dots,d+1\}$.
	Then there exists an $r$-element subset $R\subset\{0,\dots,d\}$ such that
	\[
		\left\|
		\frac1r\sum_{i\in R}v_i-y
		\right\|
		\leq
		S_{d,r},
		\qquad
		S_{d,r}
		=
		\sqrt{\frac{d+1-r}{rd}}.
	\]
	More generally, if $\sigma$ lies in a ball of radius $\delta\geq0$, then the right-hand side is multiplied by $\delta$.
\end{lemma}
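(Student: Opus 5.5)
The plan is to follow the variational strategy set up in \Cref{subsec:statement_sharp_bounds_kfold}: maximize $\Phi(V,\lambda)=\min_{\alpha\in A^*_{d,r}}(\lambda-\alpha)^\top G(\lambda-\alpha)$ over configurations and over the polytope $P_r=\{\lambda\in\Delta_d\mid \lambda_i\leq 1/r\}$, and show that the maximum is at most $S_{d,r}^2=(d+1-r)/(rd)$.

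\textbf{Reductions.} The $\delta$-scaling statement follows by translating the ball's center to the origin, which does not change barycenters or distances, and then rescaling. For fixed $\lambda$ and $\alpha$, the quantity $\|V(\lambda-\alpha)\|^2$ is convex in each column $v_i$. Hence the minimum over the finite set $A^*_{d,r}$ is a minimum of convex functions, which is not convex, so this pointwise argument alone does not move vertices to the sphere. I would instead argue directly: take a maximizing vertex inside the open ball, then push it outward radially or along a suitable direction that does not decrease any active distance. If a clean argument fails, I would fall back on the observation that the Gram matrices of configurations in $B^d$ are the PSD matrices of rank at most $d$ with diagonal at most $1$, and that $\Phi$ is monotone under adding PSD diagonal corrections of suitable rank.

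By compactness, a global maximizer $(V,\lambda)$ with $V\in(S^{d-1})^{d+1}$ exists. Degenerate configurations, where $V$ is not affinely full-dimensional, I would treat by induction on $d$ or by a perturbation argument. This works because the bound $(d+1-r)/(rd)$ is increasing in $d$ for fixed $r$: a degenerate configuration lies in a lower-dimensional ball, with some vertices possibly repeated.

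\textbf{Core computation.} Assuming $\lambda$ lies in the relative interior of $P_r$, \Cref{lem:optimality_conditions} applies and gives weights $w_\alpha$ satisfying \Cref{eq:lem_optimality_1,eq:lem_optimality_2,eq:lem_optimality_3}. I would take the trace of \Cref{eq:lem_optimality_3}, writing $r_\alpha=\lambda-\alpha$ as in the proof of \Cref{lem:optimality_conditions}. Since every $\alpha\in A^*_{d,r}$ has $\|\alpha\|^2=1/r$, we get $\|r_\alpha\|^2=\|\lambda\|^2-2\langle\lambda,\alpha\rangle+1/r$. The centering condition \Cref{eq:lem_optimality_2} gives $\sum w_\alpha\langle\lambda,\alpha\rangle=\|\lambda\|^2$. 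Therefore
\[
\frac1r-\|\lambda\|^2=\sum_{\alpha\in A_0}w_\alpha\|r_\alpha\|^2=\Phi^*\bigl(1-\|\tau_G\|^2\bigr).
\]
Since $\tau_G$ is nonnegative with coordinates summing to $1$, Cauchy--Schwarz gives $\|\tau_G\|^2\geq 1/(d+1)$; likewise $\|\lambda\|^2\geq 1/(d+1)$. Then
\[
\Phi^*\leq\frac{1/r-1/(d+1)}{1-1/(d+1)}=\frac{d+1-r}{rd},
\]
which is exactly the claimed bound. Equality is consistent with the regular simplex, where $\lambda=\tau_G=e/(d+1)$.

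\textbf{Main obstacle.} The hard step is a maximizer with $\lambda$ on the boundary of $P_r$, meaning some $\lambda_i=1/r$ or $\lambda_i=0$, since \Cref{lem:optimality_conditions} requires an open neighborhood of $\lambda$. I would first try to redo Step 1 of that proof with $u$ restricted to the tangent cone of $P_r$. Gordan's lemma then becomes a Farkas-type alternative with extra nonnegative multipliers on the active facets, so $Gm_1$ lies in $\Span(e)$ plus a cone spanned by $\pm e_i$ for active indices. I would then check that the trace identity still yields the inequality in the correct direction, since the extra terms should carry a favorable sign.

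If that fails, I would use a reduction instead. When $\lambda_i=1/r$, I expect the nearest $\alpha$'s to include $i$, which would reduce the problem to the $d$ remaining vertices with $r-1$ and coordinates renormalized by $r/(r-1)$. When $\lambda_i=0$, I would drop vertex $i$. In both cases I would conclude by induction, using that $(d+1-r)/(rd)$ behaves monotonically under these reductions.
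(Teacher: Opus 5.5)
\textbf{The core inequality is reversed.} Your trace identity is correct. Summing the diagonal of \Cref{eq:lem_optimality_3} and using $\|\alpha\|^2=1/r$ and $\sum_\alpha w_\alpha\alpha=\lambda$ gives $1/r-\|\lambda\|^2=\Phi^*(1-\|\tau_G\|^2)$. The final step, however, goes the wrong way. Cauchy--Schwarz gives $\|\tau_G\|^2\geq 1/(d+1)$, that is, $1-\|\tau_G\|^2\leq d/(d+1)$. This is an \emph{upper} bound on the denominator of $\Phi^*=(1/r-\|\lambda\|^2)/(1-\|\tau_G\|^2)$, so it only yields a \emph{lower} bound on $\Phi^*$. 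To conclude you would need $\|\tau_G\|^2\leq 1/(d+1)$, which fails unless $\tau_G$ is uniform. The identity alone does not stop $\tau_G$ from being concentrated. For instance, with $d=3$, $r=2$, $\lambda$ uniform and a nonnegative $\tau_G$ with $\|\tau_G\|^2=1/2$, it is satisfied with $\Phi^*=1/2>S_{3,2}^2=1/3$. So the interior case, which is the heart of the lemma, remains unproved.

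\textbf{What is missing.} Taking the trace discards the coordinatewise coupling between $\lambda$ and $\tau_G$; the paper keeps the whole matrix identity. The diagonal entries give $\lambda_i(1/r-\lambda_i)=\Phi^*\tau_i(1-\tau_i)$ for each $i$. The off-diagonal entries give $p_{i,j}=\sum_{R\ni i,j}w_R=r^2(\lambda_i\lambda_j-\Phi^*\tau_i\tau_j)\geq 0$. With $t_i=\lambda_i/\tau_i$, this becomes $\tau_i=(t_i/r-\Phi^*)/(t_i^2-\Phi^*)$ and $t_it_j\geq\Phi^*$. Now assume for contradiction $\Phi^*>S_{d,r}^2>1/r^2$; this requires $2\leq r\leq d-1$, so $r\in\{1,d,d+1\}$ must be treated separately. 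Then each $t_i$ lies in $(0,\sqrt{\Phi^*})$ or in $(r\Phi^*,\infty)$, and at most one lies in the first interval.
\begin{itemize}
\item If none does, apply Jensen to the convex function $h(t)=r(t^2-\Phi^*)/(t-r\Phi^*)$ with weights $\tau_i$. Since $h(t_i)=1/\tau_i$, we get $\sum_i\tau_ih(t_i)=d+1\geq h(1)$, which is exactly $\Phi^*\leq S_{d,r}^2$.
\item If one does, a separate algebraic identity forces $d\leq r$, a contradiction.
\end{itemize}

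\textbf{Two lesser issues.}
\begin{itemize}
\item For a maximizer that is not affinely full-dimensional, induction on $d$ does not apply directly: $d+1$ points in a $(d-1)$-flat do not form a $(d-1)$-simplex. The paper instead moves $\lambda$ along the affine dependence. This keeps $y=V\lambda$, and hence $\Phi$, fixed until $\lambda$ reaches the boundary of $P_r$, where your induction reductions apply.
\item The reduction to spherical configurations is not done by pushing vertices outward one at a time. It uses a lift $v_i\mapsto(v_i,\sqrt{1-\|v_i\|^2})\in S^d$ followed by rescaling by $1/\delta\geq 1$ inside the affine hull (\Cref{lem:sphere_to_ball}). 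This rescaling can only increase $\|V\beta\|$ when the coordinates of $\beta$ sum to zero.
\end{itemize}
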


\begin{proof}
	The statement for $\delta\geq0$ follows from the statement in the unit ball by translating to the center of the ball and scaling by $\delta$.
	Therefore we only prove the unit-ball statement.

	We argue by induction on $d$, assuming the result is already known in dimensions less than $d$ and for all values of $r$.
	We distinguish the cases $r\in\{1,d+1\}$, $\{d\}$, and $\{2,\dots,d-1\}$.

	\paragraph{Case 1: extremal cases.}
	The cases $r=1$ and $r=d+1$ are elementary.
	To see this, let $(\lambda_0,\dots,\lambda_d)$ denote the barycentric coordinates of $y\in\conv(\sigma)$.
	If $r=1$, then
	\begin{equation}
		\label{eq:proof_covering_without_repetitions_base_case}
		\sum_{i=0}^d\lambda_i\|y-v_i\|^2
		=
		\sum_{i=0}^d\lambda_i\|v_i\|^2-\|y\|^2
		\leq 1.
	\end{equation}
	Thus a vertex $v_i$ is within distance $1=S_{d,1}$ from $y$. 
	For $r=d+1$, the relations $\lambda_i\leq 1/r$ and $\sum_{i=0}^d\lambda_i=1$ force $y$ to be the barycenter of $\sigma$, which is at distance $0 = S_{d,d+1}$ from itself.
	
	\paragraph{Case 2: $r = d$.}
	In this case, the barycentric coordinates of $y$ satisfy $\lambda_i\leq 1/d$.
	Define
	\[
	\nu_i=1-d\lambda_i.
	\]
	These new coordinates are nonnegative, sum to 1, and satisfy
	\[
	y
	=
	\frac1dS-\frac1dV\nu
	\qquad\text{where}\qquad
	S=\sum_{i=0}^dv_i.
	\]
	Note that $S/(d+1)$ is the barycenter of $\sigma$.
	Let $T_i = \frac{1}{d}(S-v_i)$ denote the barycenter omitting $v_i$.
	By the relation above, one deduces that, for all $i\in\{0,\dots,d\}$,
	\[
	\|y-T_i\|^2
	=
	\frac1{d^2}\|V\nu-v_i\|^2.
	\]
	Now, by the same argument as in \Cref{eq:proof_covering_without_repetitions_base_case}, some $\|V\nu-v_i\|^2$ must be less than or equal to 1.
	This shows that $y$ is within distance $1/d=S_{d,d}$ from a barycenter of $d$ vertices.
	
	\paragraph{Case 3: $2\leq r\leq d-1$.}

	We treat this case by contradiction.
	Let $(V,\lambda)$ be a maximizer of
	\begin{equation*}
		\Phi(V,\lambda)
		=
		\min_{\alpha\in A^*_{d,r}}
		(\lambda-\alpha)^\top G(\lambda-\alpha),
		\qquad
		G=V^\top V,
		\qquad
		V=[v_0\ \cdots\ v_d],
	\end{equation*}
	where the maximum is taken over all $d\times(d+1)$-matrices $V$ representing a tuple of $d+1$ points in the unit ball and barycentric coordinates $\lambda$ satisfying $\lambda_i\leq1/r$ for all $i\in\{0,\dots,d\}$, and where $A^*_{d,r}$ denotes the coordinates of the $r$-fold barycenters of $\sigma$ without repetitions:
	\[
		A^*_{d,r}
		=
		\left\{
		\alpha\in\R^{d+1} \mid
		\alpha_i=\frac{1}{r}\text{ or } 0,\ 
		\sum_{i=0}^d \alpha_i=1
		\right\}.
	\]
	By \Cref{lem:sphere_to_ball}, stated after the proof, we can assume that the maximum is attained at a spherical configuration $V\in(S^{d-1})^{d+1}$.
	Now, suppose for contradiction that
	\begin{equation}
		\label{eq:proof_covering_without_repetitions_contradiction}
		\Phi(V,\lambda) > S_{d,r}^2.
	\end{equation}
	In the remainder of the proof, we denote by $y= V\lambda$ the maximizing point.
	We distinguish three cases: either some $\lambda_i=0$, some $\lambda_i=1/r$, or all coordinates satisfy $0<\lambda_i<1/r$.
	
	\subparagraph{Subcase 3a: $\exists i\in\{0,\dots,d\},\, \lambda_i=0$.}
	
	If $\lambda_i$ is zero, then $y$ lies in the opposite face spanned by the remaining $d$ vertices, which we identify with $\R^{d-1}$.
	These vertices lie in a ball of radius 1.
	By induction, $y$ is within distance $S_{d-1,r}$ from an $r$-element barycenter.
	Hence
	\[
		\Phi(V,\lambda) \leq S_{d-1,r}^2.
	\]
	Since $S_{d-1,r}\leq S_{d,r}$, this contradicts \Cref{eq:proof_covering_without_repetitions_contradiction}.
	
	\subparagraph{Subcase 3b: $\exists i\in\{0,\dots,d\},\, \lambda_i=1/r$.}
	
	Suppose that some $\lambda_i=1/r$.
	We write
	\[
	y = V\lambda
	=
	\frac1rv_i+\frac{r-1}{r}z,
	\]
	where $z$ is in the face spanned by the remaining $d$ vertices.
	The barycentric coordinates of $z$ in that face are at most $1/(r-1)$.
	By induction, $z$ is within distance $S_{d-1,r-1}$ from an $(r-1)$-element barycenter $b$.
	In other words,
	\[
	\|z-b\|^2
	\leq
	S_{d-1,r-1}^2
	=
	\frac{d-r+1}{(r-1)(d-1)}.
	\]
	Consider the $r$-element barycenter
	\[
	\tilde{b} = \frac1rv_i+\frac{r-1}{r}b.
	\]
	It satisfies
	\[
		y-\tilde{b}
		=
		\left(\frac1rv_i+\frac{r-1}{r}z\right)
		-
		\left( \frac1rv_i+\frac{r-1}{r}b\right)
		= \frac{r-1}{r}(z-b).
	\]
	We deduce the bound
	\begin{align*}
		\big\|y-\tilde{b}\big\|^2
		\leq
		\left(\frac{r-1}{r}\right)^2 S_{d-1,r-1}^2
		=
		\frac{(r-1)(d-r+1)}{r^2(d-1)}
		<
		\frac{d-r+1}{rd}
		=
		S_{d,r}^2,
	\end{align*}
	where the strict inequality follows from $r<d$.
	Again, this contradicts \Cref{eq:proof_covering_without_repetitions_contradiction}.
	
	\subparagraph{Subcase 3c: $\forall i\in\{0,\dots,d\},\, 0<\lambda_i<1/r$.}

	We aim to apply \Cref{lem:optimality_conditions}.
	First, we observe that the configuration $V$ is affinely full-dimensional.
	Suppose it is not.
	Then we can find a nonzero vector $\tau\in\R^{d+1}$ such that	
	\[
		V\tau=0,
		\qquad
		\langle e,\tau\rangle=0,
	\]
	where $e=e_1+\dots+e_{d+1}$ is the sum of the standard basis vectors of $\R^{d+1}$. 
	In this subcase, all coordinates of $\lambda$ lie strictly between $0$ and $1/r$. Hence we may move along the line $\lambda+t\tau$ until one coordinate reaches $0$ or $1/r$, while preserving the equality
	\[
		V(\lambda+t\tau)=V\lambda.
	\]
	Let $t^*$ be such a parameter, and let $\lambda^* = \lambda+t^*\tau$.
	Some $\lambda^*_i$ is $0$ or $1/r$.
	Moreover,
	\[
	\Phi(V,\lambda^*)=\Phi(V,\lambda)>S_{d,r}^2.
	\]
	This contradicts Subcase 3a (some coordinate is zero) or Subcase 3b (some coordinate is $1/r$). 
	Therefore $V$ is affinely full-dimensional.
	
	Hence we can apply \Cref{lem:optimality_conditions} with $A=A^*_{d,r}$.
	Let $\tau=\tau_G\in\R^{d+1}$, so that
	\[
		V\tau=0,
		\qquad
		\langle e, \tau\rangle=1.
	\]	
	Let $\A$ denote the set of subsets of $\{0,\dots,d\}$ of cardinality $r$.
	Each $R\in \A$ corresponds to the coordinates $\alpha_R\in A$ of a barycenter.
	By the lemma, there are weights $w_R\geq0$, with $w_R=0$ for inactive $R$, such that
	\begin{align}
		\sum_{R\in\A}w_R&=1,
		\label{eq:proof_covering_without_repetitions_weights0}\\
		\sum_{R\in\A} w_R(\lambda-\alpha_R)&=0, \label{eq:proof_covering_without_repetitions_weights1}\\
		\sum_{R\in\A} w_R(\lambda-\alpha_R)(\lambda-\alpha_R)^\top
		&=
		\Phi^* \bigl(\diag(\tau)-\tau\tau^\top\bigr),
		\label{eq:proof_covering_without_repetitions_weights2}
	\end{align}
	where $\Phi^* = \Phi(V,\lambda)$ is the maximum, and $w_R=0$ if $R$ is not active.
	From these relations, we will deduce the following consequences: for $i,j\in \{0,\dots,d\}$ distinct,
	\begin{align}
		\lambda_i\left(\frac1r-\lambda_i\right)
		&= \Phi^*\tau_i(1-\tau_i),
		\label{eq:proof_covering_without_repetitions_equalities2}
		\\
		\tau_i 
		&=\frac{t_i/r-\Phi^*}{t_i^2-\Phi^*},
		\label{eq:proof_covering_without_repetitions_equalities4}
		\\
		t_it_j
		&\geq \Phi^*
		\label{eq:proof_covering_without_repetitions_equalities5}
	\end{align}
	where we define
	\[
		t_i=\frac{\lambda_i}{\tau_i}.
	\]
	
	We first prove \Cref{eq:proof_covering_without_repetitions_equalities2}.
	Fix $i\in \{0,\dots,d\}$.
	Define the quantity
	\begin{align*}
		\sigma_i=\sum_{R\in\A_i}w_R
		&\quad\text{where}\quad 
		\A_i = \{R\in\A\mid i\in R\}.
	\end{align*}
		For every $i\in\{0,\dots,d\}$, the $(i+1)$st row in \Cref{eq:proof_covering_without_repetitions_weights1} gives
	\[
		0
		=
		\sum_{R\in\A}w_R\bigl(\lambda_i-(\alpha_R)_i\bigr).
	\]
	Since $\sum_{R\in\A}w_R=1$ by \Cref{eq:proof_covering_without_repetitions_weights0} and $(\alpha_R)_i=1/r$ exactly when $i\in R$, we deduce that
	\[
		0 = \lambda_i-\frac1r\sum_{R\in\A_i}w_R.
	\]
	That is to say, 
	\[
		\sigma_i=r\lambda_i.
	\]
	We obtain the relations
	\begin{align}
		\sum_{R\in\A}w_R(\alpha_R)_i
		&= \frac{1}{r}\sum_{R\in\A_i}w_R
		= \frac{1}{r}\sigma_i
		= \lambda_i,
		\label{eq:proof_covering_without_repetitions_equalities1_degree1}
		\\
		\sum_{R\in\A}w_R(\alpha_R)_i^2
		&= \frac{1}{r^2}\sum_{R\in\A_i}w_R
		= \frac{1}{r^2}\sigma_i
		= \frac{\lambda_i}{r}.
		\label{eq:proof_covering_without_repetitions_equalities1_degree2}
	\end{align}
	
	Now, we take the $(i,i)$-diagonal entry of \Cref{eq:proof_covering_without_repetitions_weights2}:
	\begin{equation}
		\label{eq:proof_covering_without_repetitions_weights2_diagonal}
		\sum_{R\in\A}w_R(\lambda_i-(\alpha_R)_i)^2
		=
		\Phi^* \bigl(\tau_i-\tau_i^2).
	\end{equation}
	We expand the left-hand side as
	\[
		\sum_{R\in\A}w_R(\lambda_i-(\alpha_R)_i)^2
		= 
		\sum_{R\in\A}w_R\lambda_i^2
		+\sum_{R\in\A}w_R(\alpha_R)_i^2
		-2\sum_{R\in\A}w_R\lambda_i(\alpha_R)_i.
	\]
	We identify each term:
	\begin{align}
		\sum_{R\in\A}w_R\lambda_i^2
		&= \lambda_i^2\sum_{R\in\A}w_R = \lambda_i^2,
		\tag*{(by \Cref{eq:proof_covering_without_repetitions_weights0})}
		\\
		\sum_{R\in\A}w_R(\alpha_R)_i^2
		&= \frac{\lambda_i}{r},
		\tag*{(by \Cref{eq:proof_covering_without_repetitions_equalities1_degree2})}
		\\
		\sum_{R\in\A}w_R\lambda_i(\alpha_R)_i
		&= \lambda_i\sum_{R\in\A}w_R(\alpha_R)_i
		= \lambda_i^2.
		\tag*{(by \Cref{eq:proof_covering_without_repetitions_equalities1_degree1})}
	\end{align}
	Their sum is
	\[
		\sum_{R\in\A}w_R(\lambda_i-(\alpha_R)_i)^2
		= \lambda_i^2+\frac{\lambda_i}{r}-2\lambda_i^2
		= \lambda_i\left(\frac1r-\lambda_i\right).
	\]
	Substituting this expression in \Cref{eq:proof_covering_without_repetitions_weights2_diagonal} yields \Cref{eq:proof_covering_without_repetitions_equalities2}, as wanted:
	\[
		\lambda_i\left(\frac1r-\lambda_i\right)
		=
		\Phi^*\tau_i(1-\tau_i).
	\]
	
	We continue with \Cref{eq:proof_covering_without_repetitions_equalities4}.
	Since $0<\lambda_i<1/r$, \Cref{eq:proof_covering_without_repetitions_equalities2} implies that $0<\tau_i<1$.
	Recall that we defined $t_i=\lambda_i/\tau_i$.
	Substituting $\lambda_i=\tau_it_i$ into \Cref{eq:proof_covering_without_repetitions_equalities2} gives
	\[
	\tau_it_i\left(\frac1r-\tau_it_i\right)
	=
	\Phi^*\tau_i(1-\tau_i).
	\]
	Solving this equation for $\tau_i$ yields \Cref{eq:proof_covering_without_repetitions_equalities4}:
	\begin{equation*}
		\tau_i=\frac{t_i/r-\Phi^*}{t_i^2-\Phi^*}.
	\end{equation*}	

	Only \Cref{eq:proof_covering_without_repetitions_equalities5} is left to prove.
	For $i,j\in \{0,\dots,d\}$ distinct, define the quantity
	\[
		p_{i,j}=\sum_{R\in\A_{i,j}}w_R
		\quad\text{where}\quad 
		\A_{i,j} = \{R\in\A\mid \{i,j\}\subset R\}.
	\]
	The entry $(i,j)$ of \Cref{eq:proof_covering_without_repetitions_weights2} gives the relation
	\[
		\sum_{R\in\A}w_R(\lambda_i-(\alpha_R)_i)(\lambda_j-(\alpha_R)_j)
		=
		-\Phi^*\tau_i\tau_j.
	\]
	We expand the products on the left-hand side as
	\[
		w_R(\lambda_i-(\alpha_R)_i)(\lambda_j-(\alpha_R)_j)
		=
		w_R\lambda_i\lambda_j
		-
		w_R\lambda_i(\alpha_R)_j
		-
		w_R\lambda_j(\alpha_R)_i
		+
		w_R(\alpha_R)_i(\alpha_R)_j.
	\]
	As before, we identify each sum:
	\begin{align*}
		\sum_{R\in\A}w_R\lambda_i\lambda_j
		&=\lambda_i\lambda_j\sum_{R\in\A}w_R
		= \lambda_i\lambda_j,
		\tag*{(by \Cref{eq:proof_covering_without_repetitions_weights0})}
		\\
		\sum_{R\in\A}w_R\lambda_i(\alpha_R)_j
		&= \lambda_i\sum_{R\in\A}w_R(\alpha_R)_j
		= \lambda_i\lambda_j,
		\tag*{(by \Cref{eq:proof_covering_without_repetitions_equalities1_degree1})}
		\\
		\sum_{R\in\A}w_R\lambda_j(\alpha_R)_i
		&= \lambda_j\sum_{R\in\A}w_R(\alpha_R)_i
		= \lambda_j\lambda_i,
		\tag*{(by \Cref{eq:proof_covering_without_repetitions_equalities1_degree1})}
		\\
		\sum_{R\in\A}w_R(\alpha_R)_i(\alpha_R)_j
		&= \frac{1}{r^2}\sum_{R\in\A_{i,j}}w_R = \frac{1}{r^2}p_{i,j}.
		\tag*{(by definition of $p_{i,j}$)}
	\end{align*}
	Therefore
	\[
		\frac{1}{r^2}p_{i,j}-\lambda_i\lambda_j=-\Phi^*\tau_i\tau_j.
	\]
	Equivalently,
	\[
		p_{i,j}=r^2(\lambda_i\lambda_j-\Phi^*\tau_i\tau_j).
	\]
	Since $p_{i,j}\geq0$, we deduce \Cref{eq:proof_covering_without_repetitions_equalities5}: $t_it_j\geq \Phi^*$.
	
	We have proved \Cref{eq:proof_covering_without_repetitions_equalities2,eq:proof_covering_without_repetitions_equalities4,eq:proof_covering_without_repetitions_equalities5}.
	We deduce a last relation regarding the possible values of $t_i=\lambda_i/\tau_i$.
	On the one hand, observe that 
	\[
		r\Phi^*>\sqrt{\Phi^*}.
	\]
	Indeed, since $2\leq r\leq d-1$ in this Case 3, we have
	\begin{equation}
		\label{eq:proof_covering_without_repetitions_lastcase3}
		S_{d,r}^2-\frac1{r^2}
		=
		\frac{(r-1)(d-r)}{r^2d}
		>
		0,
	\end{equation}
	hence $\Phi^*>S_{d,r}^2>1/r^2$.
	We deduce $r\Phi^*>\sqrt{\Phi^*}$, as announced.
	Together with \Cref{eq:proof_covering_without_repetitions_equalities4} and $\tau_i\geq0$, it follows that each $t_i$ belongs to one
	of the two intervals 
	\[
		(0,\sqrt{\Phi^*})
		\quad\text{or}\quad
		(r\Phi^*,\infty).		
	\]
	On the other hand, by \Cref{eq:proof_covering_without_repetitions_equalities5},
	at most one $t_i$ can be less than $\sqrt{\Phi^*}$.
	We conclude the proof by distinguishing cases according to this condition.
	
	\subparagraph{Subcase 3c(i): $\forall i\in\{0,\dots,d\},\,t_i>\sqrt{\Phi^*}$.}
	In this case, as we have seen above, each $t_i$ belongs to the interval $(r\Phi^*,\infty)$. 
	In particular, $r\Phi^*<1$, since
	\[
	r\Phi^* = r\Phi^*\sum_{i=0}^d\tau_i < \sum_{i=0}^d\tau_it_i=\sum_{i=0}^d\lambda_i=1.
	\]
	For $t>r\Phi^*$, define the function
	\[
	h(t)=\frac{r(t^2-\Phi^*)}{t-r\Phi^*}.
	\]
	Its second derivative is
	\[
	h''(t)
	=
	\frac{2r\Phi^*(r^2\Phi^*-1)}{(t-r\Phi^*)^3},
	\]
	which is positive on $(r\Phi^*,\infty)$.
	That is, $h$ is convex. 

	Moreover, observe that $h(t_i)=1/\tau_i$ by \Cref{eq:proof_covering_without_repetitions_equalities4}, and that
	\[
		\sum_{i=0}^d\tau_i=1,
		\qquad
		\sum_{i=0}^d\tau_it_i=1,
		\qquad
		\sum_{i=0}^d\tau_ih(t_i)=d+1.
	\]
	We apply Jensen's inequality at the points $(t_0,\dots,t_d)$ with weights $(\tau_0,\dots,\tau_d)$:
	\[
		\sum_{i=0}^d\tau_i\, h(t_i)
		\geq
		h\left(\sum_{i=0}^d\tau_it_i\right).
	\]
	The left-hand side is equal to $d+1$, and the right-hand side is $h(1)$.
	We deduce that	
	\[
		d+1
		\geq
		h(1)
		=
		\frac{r(1-\Phi^*)}{1-r\Phi^*}.
	\]
	Since $1-r\Phi^*>0$, solving this inequality for $\Phi^*$ gives
	\[
	\Phi^*\leq \frac{d+1-r}{rd}=S_{d,r}^2,
	\]
	contradicting the assumption $\Phi^*>S_{d,r}^2$ in \Cref{eq:proof_covering_without_repetitions_contradiction}.
	
	\subparagraph{Subcase 3c(ii): $\exists i\in\{0,\dots,d\},\,t_i<\sqrt{\Phi^*}$.}
	We reorder the indices so that $t_0<\sqrt{\Phi^*}$. 
	The inequalities $0<\tau_0<1$, $t_0<\sqrt{\Phi^*}$, and \Cref{eq:proof_covering_without_repetitions_equalities4} applied to $\tau_0$ yield
	\begin{equation}
		\label{eq:proof_covering_without_repetitions_lastcase4}
		0<t_0<\frac1r.
	\end{equation}
	Moreover, for $j\in\{1,\dots,d\}$, \Cref{eq:proof_covering_without_repetitions_equalities5} gives
	\[
		t_j\geq \frac{\Phi^*}{t_0}.
	\]

	Equality cannot occur.
	Indeed, if $t_j=\Phi^*/t_0$, then \Cref{eq:proof_covering_without_repetitions_equalities4} on $\tau_0$ and $\tau_j$ gives
	\[
	\tau_0
	=
	\frac{t_0/r-\Phi^*}{t_0^2-\Phi^*}
	\qquad\text{and}\qquad
	\tau_j
	=
	\frac{(\Phi^*/t_0)/r-\Phi^*}{(\Phi^*/t_0)^2-\Phi^*}
	=
	\frac{t_0/r-t_0^2}{\Phi^*-t_0^2}.
	\]
	Hence
	\[
	\tau_0+\tau_j= \frac{t_0/r-\Phi^*}{t_0^2-\Phi^*} + \frac{t_0/r-t_0^2}{\Phi^*-t_0^2}  
	= 1.
	\]
	This would force the remaining $\tau_i$ to be zero.
	However, \Cref{eq:proof_covering_without_repetitions_equalities2} and the assumption $0<\lambda_i<1/r$ of this Subcase 3c force them to be positive.
	This is a contradiction.
	Therefore,
	\[
	t_j>\frac{\Phi^*}{t_0}.
	\]
	
	To conclude, we consider, for $t\in(\sqrt{\Phi^*},\infty)$, the function
	\[
	\phi(t)=\frac{t/r-\Phi^*}{t^2-\Phi^*}.
	\]
	Note that $\tau_i=\phi(t_i)$ by \Cref{eq:proof_covering_without_repetitions_equalities4}, and $\lambda_i=t_i\tau_i$ by definition of $t_i$.
	We use the identity
	\[
		t\phi(t)
		+
		\frac{t_0-r\Phi^*}{rt_0-1}\phi(t)
		-
		\frac1r
		=
		\frac{(t_0t-\Phi^*)(r^2\Phi^*-1)}
		{r(t^2-\Phi^*)(1-rt_0)}.
	\]
	For $t\geq \Phi^*/t_0$, the right-hand side is nonnegative, since
	\begin{align*}
		t_0&<\sqrt{\Phi^*}, 
		\tag*{(by assumption of this Subcase 3c(ii))}
		\\
		r^2\Phi^*&>1, 
		\tag*{(by \Cref{eq:proof_covering_without_repetitions_contradiction,eq:proof_covering_without_repetitions_lastcase3})}
		\\
		1-rt_0&>0.
		\tag*{(by \Cref{eq:proof_covering_without_repetitions_lastcase4})}
	\end{align*}
	Thus,
	\[
		t\phi(t)
		+
		\frac{t_0-r\Phi^*}{rt_0-1}\phi(t)
		\geq \frac1r.
	\]
	We apply this inequality to $t_1,\dots,t_d$ and sum over $i=1,\dots,d$:
	\begin{equation}
		\sum_{i=1}^d\left(t_i\phi(t_i)
		+
		\frac{t_0-r\Phi^*}{rt_0-1}\phi(t_i)\right)
		=
		\sum_{i=1}^d\lambda_i
		+
		\frac{t_0-r\Phi^*}{rt_0-1}
		\sum_{i=1}^d \tau_i
		\geq
		\frac dr.
		\label{eq:proof_covering_without_repetitions_lastcase1}
	\end{equation}
	
	On the other hand, we have the relation
	\begin{equation}
		1-t_0\tau_0
		+
		\frac{t_0-r\Phi^*}{rt_0-1}(1-\tau_0)
		=
		1.
		\label{eq:proof_covering_without_repetitions_lastcase2}
	\end{equation}	
	Indeed, $\tau_0=(t_0/r-\Phi^*)/(t_0^2-\Phi^*)$ from \Cref{eq:proof_covering_without_repetitions_equalities4} yields
	\[
		1-\tau_0 
		= 1 - \frac{t_0/r-\Phi^*}{t_0^2-\Phi^*}
		= \frac{t_0^2-\Phi^* - t_0/r+\Phi^*}{t_0^2-\Phi^*}
		= \frac{t_0^2- t_0/r}{t_0^2-\Phi^*}
		= \frac{t_0}{r}\cdot\frac{rt_0-1}{t_0^2-\Phi^*}.
	\]	
	Consequently,
	\[
		\frac{t_0-r\Phi^*}{rt_0-1}(1-\tau_0)
		=
		\frac{t_0-r\Phi^*}{rt_0-1}
		\cdot
		\frac{t_0}{r}\cdot\frac{rt_0-1}{t_0^2-\Phi^*}
		=
		\frac{t_0}{r}\cdot\frac{t_0-r\Phi^*}{t_0^2-\Phi^*}
		=
		t_0\frac{t_0/r-\Phi^*}{t_0^2-\Phi^*}
		=
		t_0\tau_0.
	\]
	We deduce \Cref{eq:proof_covering_without_repetitions_lastcase2}:
	\[
	1-t_0\tau_0+\frac{t_0-r\Phi^*}{rt_0-1}(1-\tau_0)
	=
	1-t_0\tau_0+t_0\tau_0
	=
	1.
	\]

	The left-hand sides of \Cref{eq:proof_covering_without_repetitions_lastcase1,eq:proof_covering_without_repetitions_lastcase2} coincide, since
	\[
		\sum_{i=1}^d\lambda_i = 1-\lambda_0 = 1-t_0\tau_0
		\qquad\text{and}\qquad
		\sum_{i=1}^d \tau_i = 1-\tau_0.
	\]
	We deduce that $1\geq d/r$, so $d\leq r$, which contradicts $r\leq d-1$.
	This proves the lemma.
\end{proof}

\begin{remark}
	\label{rem:covering_without_repetitions}
	The constant $S_{d,r} = \sqrt{(d+1-r)/(rd)}$ in \Cref{lem:covering_without_repetitions} is sharp and is attained by a regular simplex with vertices in $S^{d-1}$.
	Indeed, in this case, the vertices satisfy $\langle v_i,v_j\rangle = -1/d$ for all $i\neq j$.
	Hence, for every $r$-subset $R\subset\{0,\dots,d\}$, one has
	\[
	\left\|
	\frac1r\sum_{i\in R}v_i
	\right\|^2
	=
	\frac1{r^2}
	\left(
	r-\frac{r(r-1)}{d}
	\right)
	=
	\frac{d+1-r}{rd}
	=
	S_{d,r}^2.
	\]
	That is to say, the origin is at distance $S_{d,r}$ from the $r$-fold barycenters without repetitions.
\end{remark}

Finally, we state the lemma used in Case 3 of the proof of \Cref{lem:covering_without_repetitions}.

\begin{lemma}
	\label{lem:sphere_to_ball}
	Let $A\subset\Delta_d$ be finite. 
	For $V=[v_0\ \cdots\ v_d]\in(B^d)^{d+1}$ and $\lambda\in\Delta_d$, define
	\[
		\Phi(V,\lambda)
		=
		\min_{\alpha\in A}
		\|V(\lambda-\alpha)\|^2.
	\]
	Then, for every such $V$ and $\lambda$, there exists a spherical configuration $U\in(S^{d-1})^{d+1}$ with
	\[
		\Phi(U,\lambda)\geq \Phi(V,\lambda).
	\]
\end{lemma}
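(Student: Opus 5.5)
The idea is to first enlarge the configuration in a higher-dimensional space, where putting every vertex on a unit sphere is easy, and then return to dimension $d$ by exploiting two invariances of $\Phi$. For every $\alpha\in A$ the vector $\lambda-\alpha$ has coordinates summing to $0$. Hence $V(\lambda-\alpha)$ is unchanged by a common translation $v_i\mapsto v_i+w$ of all vertices. It is also multiplied by $c$ under a homothety $v_i\mapsto c\,v_i$, so $\Phi$ is multiplied by $c^2$. In other words, $\Phi(\cdot,\lambda)$ only depends on the affine shape of the configuration, up to scale.

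\emph{Step 1 (lifting).} Let $f_0,\dots,f_d$ be the standard basis of $\R^{d+1}$. In $\R^{d}\times\R^{d+1}$, define
\[
u_i=\Bigl(v_i,\ \sqrt{1-\|v_i\|^2}\,f_i\Bigr),\qquad i\in\{0,\dots,d\}.
\]
These are unit vectors, since $\|v_i\|\leq1$. For each $\alpha\in A$, with $\beta=\lambda-\alpha$, the two blocks are orthogonal, so
\[
\Bigl\|\sum_i\beta_iu_i\Bigr\|^2=\|V\beta\|^2+\sum_i\beta_i^2\bigl(1-\|v_i\|^2\bigr)\geq\|V\beta\|^2.
\]
Taking the minimum over $\alpha$, the lifted configuration has value at least $\Phi(V,\lambda)$.

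\emph{Step 2 (back to dimension $d$).} The $d+1$ points $u_i$ lie in an affine subspace $F$ of dimension at most $d$. They also lie on the unit sphere of the ambient space, and the intersection of that sphere with $F$ is a sphere of $F$ with some center $c$ and radius $\rho\leq1$. If $\dim F<d$, enlarge $F$ to a $d$-dimensional affine subspace through $c$; the points still lie on the sphere of radius $\rho$ centred at $c$ inside it. Identify $F$ isometrically with $\R^d$, sending $c$ to the origin. This yields points $u_i'\in\R^d$ with $\|u_i'\|=\rho$. By the translation invariance, the value of $\Phi$ is the same as for the lifted configuration. This is because the identification is a rigid motion, so it preserves $\|\sum_i\beta_iu_i\|$ for every $\beta$ with zero coordinate sum.

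\emph{Step 3 (rescaling).} If $\rho>0$, set $U=\rho^{-1}[u_0'\ \cdots\ u_d']\in(S^{d-1})^{d+1}$. Then $\Phi(U,\lambda)=\rho^{-2}\Phi([u'],\lambda)\geq\Phi(V,\lambda)$, because $\rho\leq1$.

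If $\rho=0$, all lifted points coincide. Then every difference $V\beta$ with $\langle e,\beta\rangle=0$ vanishes, so $\Phi(V,\lambda)=0$, and any spherical $U$ works.

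The only step needing care is Step 2. Specifically, one must check that $\Phi$ is invariant under the rigid identification of $F$ with $\R^d$; this holds because only zero-sum combinations of the vertices enter $\Phi$. The rest is direct.
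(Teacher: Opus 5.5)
Your proof is correct and follows essentially the paper's argument: lift the vertices onto a unit sphere in a higher-dimensional space, observe that the lifted points lie on a sphere of radius at most $1$ inside their (at most $d$-dimensional) affine hull, then translate and rescale, using that only zero-sum combinations of vertices enter $\Phi$. The only cosmetic difference is the lift itself: you give each vertex its own orthogonal direction $f_i$ in $\R^{2d+1}$, whereas the paper adds a single extra coordinate in $\R^{d+1}$; both yield $\|U\beta\|\geq\|V\beta\|$ for zero-sum $\beta$.
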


\begin{proof}
	Let $V=[v_0\ \cdots\ v_d]\in(B^d)^{d+1}$. 
	For each $i\in\{0,\dots,d\}$, lift $v_i\in B^d$ to a point
	\[
	\widehat v_i=(v_i,t_i)\in \R^{d+1}
	\quad
	\text{where}
	\quad
	t_i=\sqrt{1-\|v_i\|^2}.
	\]
	Then the lifted points lie on the unit sphere $S^{d}\subset\R^{d+1}$.
	Consider the affine hull
	\[
		\Pi=\aff\{\widehat v_0,\dots,\widehat v_d\},
	\]
	and let $c$ be the orthogonal projection of the origin onto $\Pi$. 
	By orthogonality, we have
	\[
		1
		=
		\|\widehat v_i\|^2
		=
		\|c\|^2+\|\widehat v_i-c\|^2.
	\]
	Thus all the lifted points $\widehat v_i$ lie on the sphere in $\Pi$ centered at $c$, with radius
	\[
		\delta=\sqrt{1-\|c\|^2}\leq 1.
	\]
	
	If $\delta=0$, then all $\widehat v_i$ coincide and the result is trivial.
	Assume now that $\delta>0$. 
	Let
	\[
		u_i=\frac{\widehat v_i-c}{\delta}.
	\]
	Since $\Pi$ has dimension at most $d$, we can identify $\Pi$ with a subspace of $\R^d$. 
	Then each $u_i$ is a unit vector in $\Pi$.
	Thus, the matrix
	\[
		U=[u_0\ \cdots\ u_d]
	\]
	can be seen as an element of $(S^{d-1})^{d+1}$.
	
	To conclude, consider a $\beta\in\R^{d+1}$ whose coordinates sum to 0.
	The center $c$ cancels:
	\[
		U\beta
		=
		\sum_{i=0}^d\beta_i u_i
		=
		\frac1\delta\sum_{i=0}^d\beta_i\widehat v_i.
	\]
	We expand the norm using $\widehat v_i=(v_i,t_i)$:
	\[
		\|U\beta\|^2
		=
		\frac1{\delta^2}
		\left\|
		\sum_{i=0}^d\beta_i\widehat v_i
		\right\|^2
		=
		\frac1{\delta^2}
		\left(
		\left\|\sum_{i=0}^d\beta_i v_i\right\|^2
		+
		\left(\sum_{i=0}^d\beta_i t_i\right)^2
		\right).
	\]
	Since $\delta\leq1$, it follows that
	\[
		\|U\beta\|^2
		\geq
		\left\|\sum_{i=0}^d\beta_i v_i\right\|^2
		=
		\|V\beta\|^2.
	\]
	Applying this inequality to $\beta=\lambda-\alpha$ yields
	\[
		\|U(\lambda-\alpha)\|^2
		\geq
		\|V(\lambda-\alpha)\|^2.
	\]
	Minimizing over $\alpha\in A$ yields the result.
\end{proof}

\subsection{Proof of the covering theorem for edgewise refinement}
\label{subsec:proof_covering_kfold}

We now prove the main result of this section, following the
strategy outlined in \Cref{subsec:statement_sharp_bounds_kfold}.

\begin{proof}[Proof of \Cref{thm:steiner_covering_kfold}]
	\label{proof:steiner_covering_kfold}
	Fix $y\in\conv(\sigma)$ and denote its barycentric coordinates by $(\lambda_0,\dots,\lambda_d)$.
		
	\paragraph*{Step 1: Rounding the barycentric coordinates.}
	
	For $i\in\{0,\dots,d\}$, define the quantities
	\[
		a_i=\lfloor k\lambda_i\rfloor,
		\qquad
		b_i = k\lambda_i - a_i.
	\]
	Note that $a_i$ is an integer and that $b_i$ belongs to $[0,1)$.
	Moreover, one has the relation 
	\[
		\lambda_i = \frac{a_i}{k} + \frac{b_i}{k}.
	\]
	Define $r = \sum_{i=0}^d b_i$.
	It is an integer, since
	\[
		\sum_{i=0}^d b_i = k \sum_{i=0}^d \lambda_i - \sum_{i=0}^d a_i
		= k - \sum_{i=0}^d a_i.
	\]
	Moreover, $r\leq k$ since $\sum_{i=0}^d a_i\geq0$.
	Last, $r\leq d$ since each $b_i$ is less than 1.
	
	If $r=0$, then $\lambda$ belongs to $A_{d,k}$, the coordinates of the $k$-barycenters, defined in \Cref{eq:lambda_d_k}, hence there is nothing to prove.
	Assume now that $r>0$.
	Define
	\[
	\mu_i=\frac{b_i}{r}, \qquad i\in\{0,\dots,d\}.
	\]
	Then the $\mu_i$ sum to 1 and are all smaller than $1/r$.
	Hence we can apply \Cref{lem:covering_without_repetitions}: there exists an $r$-element subset $R\subset\{0,\dots,d\}$ such that
	\begin{equation}
		\left\|
		\frac1r\sum_{i\in R}v_i
		- V\mu
		\right\|
		\leq
		\sqrt{\frac{d+1-r}{rd}},
		\qquad
		V=[v_0\ \cdots\ v_d].
		\label{eq:proof_covering_kfold_rounding}
	\end{equation}
	We update the integers $(a_0,\dots,a_d)$ as follows:
	\[
		\widetilde{a}_i =
			\left\{
				\begin{array}{ll}
					a_i+1 &\text{ if } i\in R,\\
					a_i &\text{ otherwise}.
				\end{array}
			\right.
	\]	
	The $\widetilde{a}_i$ sum to $k$, since
	\[
		\sum_{i=0}^d \widetilde{a}_i
		=
		\sum_{i=0}^d a_i+r
		=
		k.
	\]
	Accordingly, we define new barycentric coordinates $\widetilde{\lambda}_0,\dots,\widetilde{\lambda}_d$ by
	\[
		\widetilde{\lambda}_i = \frac{\widetilde{a}_i}{k}.
	\]	
	The associated point $x = \sum_{i=0}^d \widetilde{\lambda}_iv_i$ is a $k$-fold barycenter, since its coordinates are multiples of $1/k$.
	We show that $x$ is close to $y$.
	
	\paragraph*{Step 2: Distance estimate.}

	Since $\lambda_i = a_i/k + (r/k)\mu_i$ and $\widetilde{\lambda}_i = \widetilde{a}_i/k$, their difference is
	\[
		\lambda_i - \widetilde{\lambda}_i =
			\left\{
			\begin{array}{ll}
				\displaystyle\frac{r}{k}\left(\mu_i-\frac{1}{r}\right) &\text{ if } i\in R,\\[.5cm]
				\displaystyle\frac{r}{k}\mu_i &\text{ otherwise}.
			\end{array}
			\right.
	\]
	Consequently,
	\[
		y - x
		= \sum_{i=0}^d\lambda_iv_i-\sum_{i=0}^d \widetilde{\lambda}_iv_i
		= \frac{r}{k}\left(\sum_{i=0}^d\mu_iv_i - \frac{1}{r}\sum_{i\in R}v_i\right).
	\]
	Combined with \Cref{eq:proof_covering_kfold_rounding}, we get
	\[
		\|x-y\|
		\leq
		\frac{r}{k}
		\left\|
		\sum_{i=0}^d\mu_iv_i - \frac{1}{r}\sum_{i\in R}v_i
		\right\|
		\leq
		\frac{r}{k}\sqrt{\frac{d+1-r}{rd}}
		= 
		\sqrt{\frac{r(d+1-r)}{dk^2}}.
	\]
	
	To conclude, we show the bound 
	\begin{equation}
		r(d+1-r) \leq s(d+1-s),
		\qquad s=\min\left\{k,\left\lfloor\frac{d+1}{2}\right\rfloor\right\}.
		\label{eq:proof_covering_kfold_rounding2}
	\end{equation}
	First, since $r\leq k$ and $r\leq d$, we have
	\[
		\min\{r,d+1-r\} \leq s.
	\]
	On the other hand, the map $x\mapsto x(d+1-x)$ is invariant under the transformation $x\mapsto d+1-x$ and is increasing on $[0,(d+1)/2]$.
	We deduce \Cref{eq:proof_covering_kfold_rounding2} and the result.	
\end{proof}

\subsection{Relation to a conjecture on cylinder coverings}
\label{subsec:link_with_conjecture}

Bárány and Füredi conjectured in \cite[Lemma~2]{BaranyFuredi1988} that
for every simplex $\sigma=\{v_0,\dots,v_d\}\subset B^d$,
\begin{equation}
	\label{eq:barany_furedi}
	\conv(\sigma)
	\subset
	\bigcup_{\substack{\tau\subset \sigma\\ |\tau|=k}}
	\conv(\tau)^{\zeta_{d,k}},
	\qquad
	\zeta_{d,k}=\sqrt{\frac{d-k+1}{dk}},
\end{equation}
where the union is taken over all $k$-element subsets $\tau\subset\sigma$, and where $\conv(\tau)^{\delta}$ denotes the \emph{$\delta$-cylinder} over $\conv(\tau)$.
This is the set of points $y\in\R^d$ whose orthogonal projection onto the affine hull $\aff(\tau)$ is at distance at most $\delta$ from $y$ and lies in the convex hull $\conv(\tau)$.
In this case, we say that $y$ \textit{projects orthogonally} onto $\conv(\tau)$.

\Cref{thm:steiner_covering_kfold} implies this conjecture in the range $1\leq k\leq \lfloor(d+1)/2\rfloor$.
In that range, our constant $R_{d,k}$ coincides with $\zeta_{d,k}$. 
For $k>\lfloor(d+1)/2\rfloor$, our bound has a different form.

\begin{corollary}
	\label{cor:barany_furedi}
	\Cref{eq:barany_furedi} holds for $1\leq k\leq \left\lfloor \frac{d+1}{2}\right\rfloor$.
\end{corollary}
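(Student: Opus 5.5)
The plan is to reduce \Cref{eq:barany_furedi} to \Cref{thm:steiner_covering_kfold} by a nearest-point argument on the $(k-1)$-skeleton of $\sigma$.

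First, I check that the constants match. For $1\leq k\leq\lfloor\frac{d+1}{2}\rfloor$ we have $s=k$ in \Cref{thm:steiner_covering_kfold}, so $R_{d,k}=\sqrt{(d+1-k)/(dk)}=\zeta_{d,k}$. The case $k=1$ is covered directly by the same argument as \Cref{eq:proof_covering_without_repetitions_base_case}.

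Fix $y\in\conv(\sigma)$. Let
\[
K=\bigcup_{|\tau|\leq k}\conv(\tau),
\]
which is a compact set, and let $p\in K$ be a point minimizing $\|y-p\|$ over $K$. By \Cref{thm:steiner_covering_kfold} there is a $k$-fold barycenter $x$ with $\|x-y\|\leq\zeta_{d,k}$. Such an $x$ involves at most $k$ distinct vertices, so $x\in K$ and therefore $\|y-p\|\leq\zeta_{d,k}$.

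Let $\tau''$ be the unique face of $\sigma$ with $p$ in the relative interior of $\conv(\tau'')$, and set $j=|\tau''|\leq k$. Since $p$ minimizes the distance to $y$ over $\conv(\tau'')$ and lies in its relative interior, the vector $y-p$ is orthogonal to $\aff(\tau'')$. This gives
\[
\langle y-p,\,v_i-p\rangle=0\qquad\text{for } v_i\in\tau''.
\]
I then split into the cases $j<k$ and $j=k$.

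\textbf{Case $j<k$.} This is the key step. For every vertex $v\notin\tau''$, the set $\tau''\cup\{v\}$ has at most $k$ elements, so $\conv(\tau''\cup\{v\})\subset K$. By minimality of $p$ on the segment $[p,v]$, we get $\langle y-p,\,v-p\rangle\leq 0$. Now write $y=\sum_i\lambda_iv_i$ with $\lambda_i\geq0$. Then
\[
\|y-p\|^2=\sum_{i}\lambda_i\langle v_i-p,\,y-p\rangle\leq 0 ,
\]
because the terms with $v_i\in\tau''$ vanish and the remaining terms are nonpositive. Hence $y=p$, so $y$ lies in $\conv(\tau'')$. Any $k$-subset $\tau\supset\tau''$ then satisfies $y\in\conv(\tau)$. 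In particular $y$ is its own orthogonal projection onto $\aff(\tau)$, so $y\in\conv(\tau)^{\zeta_{d,k}}$.

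\textbf{Case $j=k$.} Take $\tau=\tau''$. The point $p\in\conv(\tau)$ satisfies $y-p\perp\aff(\tau)$, so $p$ is exactly the orthogonal projection of $y$ onto $\aff(\tau)$. It lies in $\conv(\tau)$, and $\|y-p\|\leq\zeta_{d,k}$, so $y\in\conv(\tau)^{\zeta_{d,k}}$.

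The main obstacle is that \Cref{thm:steiner_covering_kfold} only bounds the distance from $y$ to a point of $\conv(\tau)$, whereas the cylinder condition requires the orthogonal projection of $y$ to land inside $\conv(\tau)$. Passing to the global nearest point in the skeleton $K$, and using the first-order inequality $\langle y-p,\,v-p\rangle\leq 0$ together with $y\in\conv(\sigma)$ in the case $j<k$, is what resolves this. One detail to double-check is that degenerate simplices cause no trouble: the face $\tau''$ is still well defined as the support of $p$ in barycentric coordinates, after choosing a minimal subset.
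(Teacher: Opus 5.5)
Your proof is correct and uses the same mechanism as the paper: $y-p$ is orthogonal to the affine hull of the minimal face containing the nearest point $p$, and the identity $\|y-p\|^2=\sum_i\lambda_i\langle v_i-p,\,y-p\rangle$ shows that a face with fewer than $k$ vertices can be enlarged to strictly decrease the distance, unless it already contains $y$. The only difference is packaging: the paper starts from the face of the $k$-fold barycenter and runs an iterative descent through faces that terminates by finiteness, whereas you take the global minimizer over the union of faces with at most $k$ vertices in a single step, which removes the need for the termination argument.
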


\begin{proof}
	For $k=1$ the proof is immediate.
	Therefore we suppose $k>1$.
	Let $y\in\conv(\sigma)$.
	Since $k\leq \left\lfloor \frac{d+1}{2}\right\rfloor$, the constant $R_{d,k}$ in \Cref{thm:steiner_covering_kfold} is equal to $\zeta_{d,k}$.
	Thus, there exists a $k$-tuple $(v_{i_1},\dots,v_{i_k})$ of vertices of $\sigma$, repetitions allowed, whose barycenter $x=\frac1k\sum_{\ell=1}^k v_{i_\ell}$ satisfies
	\[
		\|x-y\|\leq \zeta_{d,k}.
	\]
	Let $F_0=\conv(\tau)$ be the face of $\sigma$ spanned by the distinct vertices among $v_{i_1},\dots,v_{i_k}$.
	Since $x\in F_0$, we have
	\[
		\dist(y,F_0)\leq \|y-x\|\leq \zeta_{d,k}.
	\]
	To prove the result, we must pass from $F_0$ to another face spanned by exactly $k$ vertices and onto which $y$ projects orthogonally.

	Let $q_1$ be the nearest point of $F_0$ to $y$. 
	It satisfies
	\[
		\|y-q_1\|
		=
		\dist(y,F_0)
		\leq
		\zeta_{d,k}.
	\]
	Let $F_1$ be the minimal face of $F_0$ containing $q_1$.
	Then $q_1$ lies in the relative interior of $F_1$, and $y-q_1$ is orthogonal to the affine space spanned by $F_1$.
	Thus, by definition of the cylinder,
	\[
		y\in (F_1)^{\zeta_{d,k}}.
	\]
	
	If $F_1$ is spanned by $k$ vertices, we are done. 
	Otherwise, we enlarge $F_1$ while preserving the orthogonality condition.  
	Suppose that $F_1$ is spanned by fewer than $k$ vertices, and that $y\notin F_1$.
	Let $\lambda_0,\dots,\lambda_d$ be the barycentric coordinates of $y$ in $\sigma$.
	Define the normal vector 
	\[
		n=y-q_1.
	\]
	For every vertex $v_i$ of $F_1$, we have $\langle n,v_i-q_1\rangle=0$.
	It follows that
	\[
		\|n\|^2
		=
		\langle n,y-q_1\rangle
		=
		\sum_{i=0}^d\lambda_i\langle n,v_i-q_1\rangle
		=
		\sum_{\substack{i=0,\dots,d\\v_i\notin F_1}} \lambda_i\langle n,v_i-q_1\rangle. 
	\]
	Since $\|n\|>0$, there exists a vertex $v\notin F_1$ such that $\langle n,v-q_1\rangle>0$.
	Consider the new face
	\begin{equation}
		\label{eq:proof_barany_furedi_1}
		F_1^+=\conv(F_1\cup\{v\}).
	\end{equation}
	We claim that 
	\begin{equation}
		\label{eq:proof_barany_furedi_2}
		\dist(y,F_1^+)<\dist(y,F_1)\leq \zeta_{d,k}.		
	\end{equation}
	Indeed, for small $\epsilon>0$, the point
	\[
		q_1^\epsilon=(1-\epsilon)q_1+\epsilon v
	\]
	belongs to $F_1^+$, and
	\[
		\|y-q_1^\epsilon\|^2
		=
		\|n-\epsilon(v-q_1)\|^2
		=
		\|n\|^2
		+ \epsilon^2\|v-q_1\|^2
		-2\epsilon\langle n,v-q_1\rangle.
	\]
	Since $\langle n,v-q_1\rangle>0$, the right-hand side is less than $\|n\|^2$ for small $\epsilon$.
	We deduce \Cref{eq:proof_barany_furedi_2}.
	However, $y$ may not project orthogonally onto $F_1^+$, so we modify it.
	
	Let $q_2$ be the nearest point of $F_1^+$ to $y$, and let $F_2$ be the minimal face of $F_1^+$ containing $q_2$.
	Then $q_2$ lies in the relative interior of $F_2$, and $y-q_2$ is orthogonal to the affine space spanned by $F_2$.
	Therefore, $y$ belongs to the cylinder over $F_2$, i.e., $y\in (F_2)^{\zeta_{d,k}}$.
	
	We repeat this procedure.
	It produces a sequence of faces
	\[
		F_1,\;F_2,\;F_3,\;\dots
	\]
	At each step, the distance from $y$ to the current face strictly decreases by \Cref{eq:proof_barany_furedi_2}, so the process cannot visit the same face twice.
	Since $\sigma$ has finitely many faces, the process terminates.  
	It can terminate only when \textbf{(1)} the current face is spanned by exactly $k$ vertices, or \textbf{(2)} $y$ belongs to the current face.
	Otherwise, the construction in \Cref{eq:proof_barany_furedi_1} would allow us to continue with a better face.
	In case \textbf{(1)}, the result follows immediately.
	In case \textbf{(2)}, we enlarge the current face by adding arbitrary vertices until it is spanned by exactly $k$ vertices.
	In all cases, there exists a $k$-element subset $\tau\subset\sigma$ such that $y\in\conv(\tau)^{\zeta_{d,k}}$.
\end{proof}

\section{Numerical experiments on asymptotic mesh size}
\label{sec:experiments}

The contraction factors in \Cref{table:constant_alpha} are worst-case estimates, and they do not by themselves determine which refinement scheme performs best in practice.
Indeed, the refinements insert different numbers of vertices.
In this section, we compare the resulting meshes experimentally, using normalized quantities that account for the number of vertices. 
The goal is to illustrate how recomputing the Delaunay complex can improve mesh size and quality under iteration.

\subsection{Asymptotic reference constants for mesh size}
\label{subsec:normalized_mesh_size}

To compare different methods fairly, we normalize the mesh size by its natural asymptotic scale. 
For background on covering densities and asymptotic covering problems, we refer to \cite[Chapter~2]{handbookgeometry2017} and \cite[Chapter~6]{B_r_czky_Jr_2004}.
The proofs of this section are postponed to \Cref{subsec:additional_proofs_covering_density}.

Given a finite set $X\subset S^d$, let $\covrad(X)$ denote its covering radius, as in \Cref{eq:def_covering_radius}.
It is known that optimal coverings of $S^d$ have covering radius of order $|X|^{-1/d}$ as $|X|\to\infty$, where $|X|$ is the cardinality of $X$.
Therefore we define the \emph{normalized covering radius} by
\[
\overline{\covrad}(X)
=
|X|^{1/d}\,\covrad(X).
\]
Similarly, if $K$ is an embedded triangulation of $S^d$ with vertex set $V(K)$, we define the \emph{normalized maximal circumradius} and \emph{normalized maximal diameter} by
\begin{align*}
	\overline{\maxcirc}(K)
	&=
	|V(K)|^{1/d}\,\maxcirc(K),\\
	\overline{\maxdiam}(K)
	&=
	|V(K)|^{1/d}\,\maxdiam(K).
\end{align*}
These quantities measure the mesh size after correcting for the number of vertices.

As a reference value for the normalized covering radius, define the asymptotic constant
\[
	\overline{R}_d
	=
	\liminf_{n\to\infty}
	\inf_{\substack{X\subset S^d\\ |X|=n}}
	\overline{\covrad}(X),
\]
where the infimum is taken over all subsets of $S^d$ of cardinality $n$.
It is expressed in terms of the \emph{covering density} $\vartheta(B^d)$ of $\R^d$ \cite[Chapter~2.1]{handbookgeometry2017} by the following formula.

\begin{lemma}[proof p.~\pageref{proof:covering_density}]
	\label{lem:covering_density}
	For every $d\geq1$,
	\[
		\overline{R}_d
		=
		\left(
		\frac{\vol(S^d)}{\vol(B^d)}\,\vartheta(B^d)
		\right)^{1/d}.
	\]
\end{lemma}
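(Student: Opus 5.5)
The plan is to compare the covering problem on $S^d$ at small scale with the covering problem of $\R^d$ by unit balls, in both directions. Throughout I use the definition of covering density $\vartheta(B^d)$ as the infimum, over coverings of $\R^d$ by translates of $B^d$, of the limiting ratio of the total ball volume to the volume of large cubes. I also use the standard facts that the geodesic ball of radius $\rho$ on $S^d$ has volume $\vol(B^d)\rho^d(1+O(\rho^2))$, and that the exponential map at any point is $(1+O(\rho^2))$-bi-Lipschitz on balls of radius $\rho$.

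\emph{Lower bound.} Let $X\subset S^d$ with $|X|=n$ and $\covrad(X)=\rho$. Then the geodesic balls of radius $\rho$ around $X$ cover $S^d$, so $n\vol(B^d)\rho^d(1+O(\rho^2))\geq\vol(S^d)\,\vartheta_n$, where $\vartheta_n$ is the density of this covering. The crude estimate $\vartheta_n\ge 1$ only gives the bound without $\vartheta(B^d)$, so more is needed. I would localize. Partition $S^d$ into small caps of radius $\eta$ with $\rho\ll\eta\ll1$. Flatten each cap by the exponential map, which distorts distances by a factor $1+O(\eta^2)$. This turns the restriction of the covering to each cap into a covering of a flat domain of diameter about $\eta/\rho\to\infty$ in units of $\rho$. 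The standard fact that coverings of large convex regions, say cubes, by unit balls have density at least $\vartheta(B^d)-o(1)$ as the region grows gives, after summing over caps with boundary effects of relative size $O(\rho/\eta)$, the inequality $n\rho^d\vol(B^d)\ge\vol(S^d)\vartheta(B^d)(1-o(1))$. Taking $\liminf$ yields $\overline{R}_d\ge(\vol(S^d)\vartheta(B^d)/\vol(B^d))^{1/d}$.

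\emph{Upper bound.} Fix $\varepsilon>0$, and take a periodic covering of $\R^d$ by unit balls with density at most $\vartheta(B^d)+\varepsilon$; this is possible because periodic coverings approach the covering density. Cover $S^d$ by finitely many charts, for instance the faces of a fine cubical decomposition of $S^d$, pulled back from the boundary of a cube by radial projection. Each chart is $(1+O(\eta^2))$-bi-Lipschitz to a flat region after rescaling. In each chart, place a copy of the periodic covering scaled to radius $\rho(1-c\eta^2)$, so that the image balls have geodesic radius at most $\rho$. Add boundary strips of width $O(\rho)$ to handle the seams; they cost a relative $O(\rho/\eta)$ extra points. Counting points gives $n\rho^d\vol(B^d)\le\vol(S^d)(\vartheta(B^d)+\varepsilon)(1+o(1))$. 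Since $\covrad\le\rho$ and $n$ ranges over a sequence tending to infinity, we get $\liminf\le$ the claimed value. Sets of intermediate cardinalities are handled by adding points, which only decreases $\covrad$ while changing $n^{1/d}$ by a factor tending to $1$.

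The main obstacle is the localization in the lower bound. We must justify that a covering of a curved cap, transported to the plane, has density at least $\vartheta(B^d)-o(1)$, despite non-uniform scaling and boundary effects. I would handle this by using the monotonicity of the covering radius under slightly shrinking the metric: enlarge the flat balls to radius $\rho(1+C\eta^2)$ so that they genuinely cover the flattened cap. I would then invoke the standard finite-region characterization of $\vartheta$, and finally let $\eta\to0$ after $\rho\to0$.
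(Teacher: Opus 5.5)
Your proposal is correct and follows essentially the same route as the paper: both localize to nearly flat pieces of $S^d$ and reduce to the Euclidean asymptotics of ball coverings. The paper cites B\"or\"oczky's characterization $\lim_{r\to0}N(J,r)\,r^d=\vol(J)\,\vartheta(B^d)/\vol(B^d)$ for Jordan-measurable $J$ and then squeezes $n\rho_n^d$, whereas you derive its two halves directly (the finite-cube lower bound, and periodic coverings for the upper bound) and work with $n$-point sets throughout. Your transfer step is also more careful than the paper's, which only mentions volume distortion: you control metric distortion through the radius adjustment $\rho(1\pm C\eta^2)$ and explicitly account for balls that straddle neighbouring pieces.
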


Exact values of $\vartheta(B^d)$ are known only in dimensions $1$ and $2$.
In dimension $3$, an upper bound is given by the optimal \textit{lattice covering density} \cite[Table~2.1.3]{handbookgeometry2017}.
The values are
\[
	\vartheta(B^1)=1,
	\qquad
	\vartheta(B^2)=\frac{2\pi}{\sqrt{27}},
	\qquad
	\vartheta(B^3)\leq \frac{5\sqrt{5}\pi}{24}.
\]
Together with \Cref{lem:covering_density}, we get the following estimates for normalized covering radii on $S^d$:
\begin{equation}
	\label{eq:minimal_covering_radius}
	\overline{R}_2
	=
	\sqrt{\frac{8\pi}{\sqrt{27}}}
	\approx 2.199,
	\qquad
	\overline{R}_3
	\leq
	\sqrt[3]{\frac{5\sqrt{5}\pi^2}{16}}
	\approx 1.904.
\end{equation}

We also introduce the analogous asymptotic constants for triangulations:
\begin{align*}
	\overline{C}_d
	&=
	\liminf_{n\to\infty}
	\inf_{\substack{K\text{ triangulation}\\[.15em] |V(K)|=n}}
	\overline{\maxcirc}(K),\\
	\overline{D}_d
	&=
	\liminf_{n\to\infty}
	\inf_{\substack{K\text{ triangulation}\\[.15em] |V(K)|=n}}
	\overline{\maxdiam}(K),
\end{align*}
where the infima are taken over embedded triangulations of $S^d$ with $n$ vertices.
The next lemma shows that the constants $\overline{C}_d$ and $\overline{D}_d$ are related to the optimal covering radius $\overline{R}_d$.

\begin{lemma}[proof p.~\pageref{proof:circumradius_diameter_density}]
	\label{lem:circumradius_diameter_density}
	For every $d\geq1$,
	\[
		\overline{C}_d= \overline{R}_d
		\qquad\text{and}\qquad
		\frac{1}{J_d}\,\overline{R}_d
		\leq
		\overline{D}_d
		\leq
		2\,\overline{R}_d,
		\quad
		J_d=\sqrt{\frac{d}{2(d+1)}}.	
	\]
\end{lemma}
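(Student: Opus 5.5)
We prove the two claims separately, starting with $\overline{C}_d=\overline{R}_d$, which is essentially \Cref{lem:equality_covering_circumradius}. For the lower bound $\overline{C}_d\geq\overline{R}_d$, take an embedded triangulation $K$ of $S^d$ with vertex set $V(K)$. Every point of $S^d$ lies in the realization of some (spherical) simplex of $K$. A spherical simplex is contained in its circumscribing cap, so every point lies within $\maxcirc(K)$ of some vertex. Hence $\covrad(V(K))\leq\maxcirc(K)$, and therefore $\overline{\covrad}(V(K))\leq\overline{\maxcirc}(K)$. Taking infima over $n$-vertex triangulations and then $\liminf$ gives $\overline{R}_d\leq\overline{C}_d$.

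For the upper bound, take near-optimal sets $X$ with $|X|=n$ and covering radius of order $n^{-1/d}$. For large $n$ these sets are admissible, and a generic perturbation of size $o(n^{-1/d})$ makes $\Del{X}$ an embedded triangulation. \Cref{lem:equality_covering_circumradius} then gives $\maxcirc(\Del{X})=\covrad(X)$, hence $\overline{\maxcirc}=\overline{\covrad}$ up to the perturbation error. Letting the perturbation vanish yields $\overline{C}_d\leq\overline{R}_d$.

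For the diameter bounds, the upper bound $\overline{D}_d\leq 2\overline{R}_d$ uses the same Delaunay triangulations. Each simplex lies in its circumcap, so its diameter is at most twice its circumradius, and we conclude from $\overline{C}_d=\overline{R}_d$. For the lower bound we need, for an arbitrary triangulation $K$, an estimate $\covrad(V(K))\leq J_d\,\maxdiam(K)\,(1+o(1))$ as the mesh size tends to $0$. The plan is to apply Jung's theorem on each simplex. In $\R^d$, a set of diameter $D$ lies in a ball of radius $J_dD$ with $J_d=\sqrt{d/(2(d+1))}$. Since a simplex is the convex hull of its vertices, every point of the simplex is then within $J_dD$ of the Jung center $c$. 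We next need that it is also within $J_dD$ of some vertex. This follows from the identity used in \Cref{prop:steiner_covering_minicenter}: write $y$ in barycentric coordinates with respect to the vertices $v_i$, all of which lie in the ball $B(c,J_dD)$. Then $\sum_i\lambda_i\|y-v_i\|^2=\sum_i\lambda_i\|v_i-c\|^2-\|y-c\|^2\leq J_d^2D^2$, so some vertex satisfies $\|y-v_i\|\leq J_dD$.

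On the sphere we must transfer this estimate. The relevant simplices have geodesic diameter $O(n^{-1/d})$, so the Euclidean chords, the spherical distances and the radial projection differ by factors $1+O(\maxdiam^2)$; \Cref{lem:spherical_distortion_distances} provides this. In a general triangulation the diameters are not a priori small, but for the $\liminf$ we may restrict to sequences where $\overline{\maxdiam}$ stays bounded, which forces $\maxdiam\to0$. This gives $\overline{\covrad}(V(K))\leq J_d\,\overline{\maxdiam}(K)(1+o(1))$, hence $\overline{R}_d\leq J_d\overline{D}_d$.

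The main obstacle is making these asymptotic transfers rigorous. On the Delaunay side, genericity must be achieved by perturbation without spoiling the limits. On the Jung side, the spherical distortion must be controlled uniformly. Both are routine continuity arguments, but they need care with the $\liminf$ in the definitions.
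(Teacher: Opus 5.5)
Your overall architecture matches the paper's. Delaunay point sets with \Cref{lem:equality_covering_circumradius} give $\overline{C}_d\leq\overline{R}_d$ and $\overline{D}_d\leq 2\overline{R}_d$. A facet-by-facet covering argument gives $\overline{C}_d\geq\overline{R}_d$, and Jung's theorem applied to each facet gives $\overline{R}_d\leq J_d\overline{D}_d$.

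The step that does not work as written is $\covrad(V(K))\leq\maxcirc(K)$. Containment of a spherical simplex in its circumscribing cap only says each of its points is within $\Theta$ of the circumcenter $w$. It does not place the point within $\Theta$ of a vertex. Indeed, points of the cap lying outside the simplex can be farther than $\Theta$ from every vertex. You must use that the point is a convex combination of the vertices.

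The paper does this as follows. Write $ry=\sum_i\lambda_iv_i$ with $r>0$. Then $\langle w,ry\rangle=\sum_i\lambda_i\langle w,v_i\rangle=\cos\Theta$. By Cauchy--Schwarz, $\sum_i\lambda_i\langle y,v_i\rangle=\langle y,ry\rangle=r\geq\langle w,ry\rangle$. Hence some vertex has $\langle y,v_i\rangle\geq\cos\Theta$, i.e.\ $\d(y,v_i)\leq\Theta$. This is exactly the spherical counterpart of the barycentric identity you use in your Jung paragraph, so the repair is within reach of your own argument. It needs no smallness assumption.

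A smaller issue is the transfer in the Jung step. \Cref{lem:spherical_distortion_distances} bounds the distortion by $1/\cos$ of the spherical circumradius, and that radius is not controlled by the diameter: a nearly flat facet of tiny diameter can have circumradius close to $\pi/2$. So citing the lemma does not yield the factor $1+O(\maxdiam^2)$. Its proof does adapt. All vertices lie within geodesic distance $\Delta=\diam(\sigma)$ of $v_0$, so $\|p\|\geq\langle v_0,p\rangle\geq\cos\Delta$ for every $p\in\conv(\sigma)$. Hence the radial projection is $1/\cos\Delta$-Lipschitz on $\conv(\sigma)$. With these two fixes your proof goes through.

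Compared with the paper, your Jung step stays in the Euclidean chord simplex and uses the barycentric identity explicitly. The paper instead converts Jung's theorem into the spherical bound $\Theta\leq\arcsin(2J_d\sin(\Delta/2))$ on the spherical miniradius, via $\theta=\sin\Theta$. It then asserts without proof that the covering radius is at most the largest spherical miniradius of the facets. That claim needs precisely the convex-combination argument above, with $\langle w,v_i\rangle\geq\cos\Theta$ in place of equality. Your generic-perturbation remark for the Delaunay upper bounds is also a detail the paper passes over.
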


In dimension $2$, the lower bound $\overline{R}_2/J_2\leq \overline{D}_2$ is sharp, and is attained asymptotically by the equilateral triangular lattice. 
However, for $d\geq3$ the exact values of $\overline{D}_d$ do not seem to be known.
We use the upper bound $\overline{D}_3\leq 2\overline{R}_3$. 
With \Cref{eq:minimal_covering_radius}, we obtain
\[
	\overline{D}_2
	=
	\sqrt{\frac{8\pi}{\sqrt{3}}}
	\approx 3.809,
	\qquad
	\overline{D}_3
	\leq
	2\sqrt[3]{\frac{5\sqrt{5}\pi^2}{16}}
	\approx 3.807.
\]
These constants will serve as reference values for the experiments below.

\subsection{Experimental comparison of iterated refinements and subdivisions}

\subsubsection{Description of the experiments}

We compare the Delaunay refinement schemes introduced in \Cref{subsec:global_delaunay_refinements}: minicenter, centroid, barycentric, and $k$-edgewise refinements for $k\in\{2,3,5\}$. 
We run the experiments on $S^2$ and $S^3$, starting either from the regular simplex or from random initial configurations. 
For each method, we iterate the refinement a certain number of times and report the final normalized covering radius, maximal circumradius, and maximal diameter, as defined in \Cref{subsec:normalized_mesh_size}.

We also compare Delaunay refinements with their classical counterparts when these exist, namely barycentric and edgewise subdivision.
For the random experiments on $S^2$ and $S^3$, we generate $100$ random initial samples, with $10$ points per sample on $S^2$ and $30$ points per sample on $S^3$. 
We use the same samples for the Delaunay and classical procedures, and report the mean and standard deviation.
The initial triangulation for the classical subdivisions is the Delaunay triangulation of the same point set.
Non-admissible point sets, in the sense of \Cref{subsec:delaunay_triangulations}, are discarded and resampled.

For reproducibility, we specify how the classical $k$-edgewise subdivision is implemented, since it depends on an ordering of the vertices.
We first fix a global ordering by vertex indices. 
Each facet $\{v_0,\dots,v_d\}$ is written with its vertices in increasing order, and the edgewise subdivision is applied with respect to this order. 
A new vertex $(v_{i_1},\dots,v_{i_k})$ is indexed by the sorted $k$-tuple of old vertices (repetitions allowed) whose barycenter it represents.
The original vertices, corresponding to tuples $(v_i,\dots,v_i)$, keep their original indices.
The new vertices are inserted in lexicographic order of these $k$-tuples. 
This construction makes the subdivision independent of the order in which the facets are processed.

\subsubsection{Results}

The results are shown in \Cref{tab:mesh_size}. 
The four panels correspond respectively to the regular and random configurations on $S^2$ (\Cref{tab:subfloat:2sphere_regular,tab:subfloat:2sphere_random}) and on $S^3$ (\Cref{tab:subfloat:3sphere_regular,tab:subfloat:3sphere_random}).

For regular initial configurations (\Cref{tab:subfloat:2sphere_regular,tab:subfloat:3sphere_regular}), the relative performance of minicenter and centroid refinement depends on the dimension. 
On $S^2$, centroid refinement gives smaller normalized quantities than minicenter refinement, whereas the opposite is true on $S^3$.
This is consistent with the constants in \Cref{thm:steiner_covering}: the centroid constant is $2/3$ in dimension $2$ and $4/5$ in dimension $3$, while the minicenter constant is $1/\sqrt{2}$ in both dimensions. 
However, for random initial configurations (\Cref{tab:subfloat:2sphere_random,tab:subfloat:3sphere_random}), minicenter refinement performs better in both dimensions. 
This suggests that the worst-case bounds in \Cref{thm:steiner_covering} are not representative of the typical random configurations considered here.

The comparison between Delaunay barycentric refinement and classical barycentric subdivision shows a clear effect in all four panels.
The normalized covering radii are similar, but the normalized maximal circumradii and diameters are much smaller for the Delaunay version. 
This reflects the fact that barycentric subdivision may create poorly shaped simplices, while the Delaunay triangulation can change the simplices after each refinement step.

The results on edgewise refinement are more dimension-dependent.
On $S^2$ with the regular initial configuration (\Cref{tab:subfloat:2sphere_regular}), the Delaunay and classical edgewise constructions give the same values.
Indeed, the small triangles produced by edgewise subdivision are already Delaunay, so recomputing the triangulation does not change it.
For random configurations on $S^2$ (\Cref{tab:subfloat:2sphere_random}), the Delaunay version gives slightly better covering radii and circumradii.

The effect is clearer on $S^3$.
For random initial configurations (\Cref{tab:subfloat:3sphere_random}), the Delaunay edgewise refinements consistently improve the normalized maximal circumradius and diameter compared with classical edgewise subdivision.
For the regular initial configuration (\Cref{tab:subfloat:3sphere_regular}), the comparison is mixed: Delaunay $2$-edgewise refinement is clearly better than $2$-edgewise subdivision, the Delaunay $3$-edgewise refinement gives a worse covering radius but better maximal circumradius and diameter, and the $5$-edgewise values coincide.

Overall, the experiments support the idea that Delaunay refinements can substantially improve mesh quality as measured by normalized circumradii and diameters.
Moreover, the contraction constants of \Cref{thm:steiner_covering} capture only worst-case shrinkage; the practical performance depends on the case.
The $3$-edgewise refinement performs best for the regular configuration on $S^2$, minicenter refinement performs best for random configurations on $S^2$, and $2$-edgewise refinement performs best for both regular and random configurations on $S^3$.

\begin{table}[!htbp]
	\centering
	\captionsetup[subfloat]{font=normalsize}
	\setlength{\tabcolsep}{5pt}
	\renewcommand{\arraystretch}{1.25}
	\subfloat[
		Results on $S^2$ starting from the regular simplex.
	]{%
		\label{tab:subfloat:2sphere_regular}
		\begin{tabular}{|c|cccccc|}
			\hline
			Method
			& Minicenter
			& Centroid
			& Barycentric
			& $2$-Edgewise
			& $3$-Edgewise
			& $5$-Edgewise
			\\\hline
			\multicolumn{7}{|c|}{Delaunay refinements}
			\\\hline
			$\overline{\covrad}$ & 4.65 & 3.06 & 5.03 & 4.00 & \textbf{2.68} & 4.00 \\
			$\overline{\maxcirc}$ & 4.65 & 3.06 & 5.03 & 4.00 & \textbf{2.68} & 4.00 \\
			$\overline{\maxdiam}$ & 9.31 & 5.82 & 9.94 & 6.92 & \textbf{4.85} & 6.93 \\
			Iterations & 6 & 6 & 5 & 6 & 5 & 3 \\
			\hline
			\multicolumn{7}{|c|}{Classical subdivisions}
			\\\hline
			$\overline{\covrad}$ & --- & --- & 5.03 & 4.00 & \textbf{2.68} & 4.00 \\
			$\overline{\maxcirc}$ & --- & --- & 51.77 & 4.00 & \textbf{2.68} & 4.00 \\
			$\overline{\maxdiam}$ & --- & --- & 19.24 & 6.92 & \textbf{4.85} & 6.93 \\
			Iterations & --- & --- & 5 & 6 & 5 & 3 \\
			\hline
		\end{tabular}
		}
\end{table}

\begin{table}[!htbp]
	\ContinuedFloat
	\centering
	\captionsetup[subfloat]{font=normalsize}
	\setlength{\tabcolsep}{4pt}
	\renewcommand{\arraystretch}{1.25}
	\subfloat[
			Results on $S^2$ starting from $100$ random samples of $10$ points.
	]{%
		\label{tab:subfloat:2sphere_random}
		\begin{tabular}{|c|cccccc|}
			\hline
			Method
			& Minicenter
			& Centroid
			& Barycentric
			& $2$-Edgewise
			& $3$-Edgewise
			& $5$-Edgewise
			\\\hline
			\multicolumn{7}{|c|}{Delaunay refinements}
			\\\hline
			$\overline{\covrad}$ & $\bm{5.2 \pm 0.7}$ & $7.5 \pm 5.2$ & $13.6 \pm 5.0$ & $8.9 \pm 7.7$ & $11.7 \pm 9.9$ & $11.9 \pm 9.7$ \\
			$\overline{\maxcirc}$ & $\bm{5.2 \pm 0.7}$ & $7.5 \pm 5.2$ & $13.6 \pm 5.0$ & $8.9 \pm 7.7$ & $11.7 \pm 9.9$ & $11.9 \pm 9.7$ \\
			$\overline{\maxdiam}$ & $\bm{10.4 \pm 1.4}$ & $14.1 \pm 9.5$ & $26.7 \pm 9.8$ & $15.9 \pm 13.1$ & $22.4 \pm 19.1$ & $22.6 \pm 18.6$ \\
			Iterations & 6 & 6 & 5 & 6 & 4 & 3 \\
			\hline
			\multicolumn{7}{|c|}{Classical subdivisions}
			\\\hline
			$\overline{\covrad}$ & --- & --- & $14.7 \pm 6.1$ & $10.0 \pm 8.4$ & $11.7 \pm 9.9$ & $11.9 \pm 9.7$ \\
			$\overline{\maxcirc}$ & --- & --- & $337.4 \pm 32.6$ & $11.0 \pm 10.2$ & $11.8 \pm 10.0$ & $12.2 \pm 10.2$ \\
			$\overline{\maxdiam}$ & --- & --- & $61.4 \pm 16.1$ & $17.5 \pm 14.5$ & $22.3 \pm 19.1$ & $22.6 \pm 18.6$ \\
			Iterations & --- & --- & 5 & 6 & 4 & 3 \\
			\hline
		\end{tabular}
	}
\end{table}

\begin{table}[!htbp]
	\ContinuedFloat
	\centering
	\captionsetup[subfloat]{font=normalsize}
	\setlength{\tabcolsep}{5pt}
	\renewcommand{\arraystretch}{1.25}
	\subfloat[
			Results on $S^3$ starting from the regular simplex.
		]{%
		\label{tab:subfloat:3sphere_regular}
		\begin{tabular}{|c|cccccc|}
			\hline
			Method
			& Minicenter
			& Centroid
			& Barycentric
			& $2$-Edgewise
			& $3$-Edgewise
			& $5$-Edgewise
			\\\hline
			\multicolumn{7}{|c|}{Delaunay refinements}
			\\\hline
			$\overline{\covrad}$ & 3.19 & 4.20 & 6.07 & \textbf{2.79} & 4.29 & 4.11 \\
			$\overline{\maxcirc}$ & 3.19 & 4.20 & 6.07 & \textbf{2.79} & 4.29 & 4.11 \\
			$\overline{\maxdiam}$ & 6.30 & 8.20 & 11.37 & \textbf{5.58} & 8.58 & 8.22 \\
			Iterations & 5 & 5 & 4 & 5 & 4 & 2 \\
			\hline
			\multicolumn{7}{|c|}{Classical subdivisions}
			\\\hline
			$\overline{\covrad}$ & --- & --- & 5.90 & 2.85 & 4.20 & 4.11 \\
			$\overline{\maxcirc}$ & --- & --- & 61.86 & 9.72 & 6.61 & 4.11 \\
			$\overline{\maxdiam}$ & --- & --- & 24.43 & 8.36 & 9.19 & 8.22 \\
			Iterations & --- & --- & 4 & 5 & 4 & 2 \\
			\hline
		\end{tabular}
	}
\end{table}

\begin{table}[!htbp]
	\ContinuedFloat
	\centering
	\captionsetup[subfloat]{font=normalsize}
	\setlength{\tabcolsep}{4.25pt}
	\renewcommand{\arraystretch}{1.25}
	\subfloat[
			Results on $S^3$ starting from $100$ random samples of $30$ points.
	]{%
		\label{tab:subfloat:3sphere_random}
		\begin{tabular}{|c|cccccc|}
			\hline
			Method
			& Minicenter
			& Centroid
			& Barycentric
			& $2$-Edgewise
			& $3$-Edgewise
			& $5$-Edgewise
			\\\hline
			\multicolumn{7}{|c|}{Delaunay refinements}
			\\\hline
			$\overline{\covrad}$ & $5.3 \pm 0.4$ & $7.5 \pm 1.0$ & $10.3 \pm 1.2$ & $\bm{5.1 \pm 1.2}$ & $5.5 \pm 1.3$ & $5.6 \pm 1.2$ \\
			$\overline{\maxcirc}$ & $5.3 \pm 0.4$ & $7.5 \pm 1.0$ & $10.3 \pm 1.2$ & $\bm{5.1 \pm 1.2}$ & $5.5 \pm 1.3$ & $5.6 \pm 1.2$ \\
			$\overline{\maxdiam}$ & $10.4 \pm 0.8$ & $14.7 \pm 2.1$ & $19.8 \pm 2.4$ & $\bm{9.7 \pm 2.1}$ & $10.5 \pm 2.5$ & $10.8 \pm 2.4$ \\
			Iterations & 4 & 4 & 3 & 4 & 3 & 2 \\
			\hline
			\multicolumn{7}{|c|}{Classical subdivisions}
			\\\hline
			$\overline{\covrad}$ & --- & --- & $10.4 \pm 1.2$ & $5.2 \pm 1.2$ & $5.5 \pm 1.3$ & $5.6 \pm 1.4$ \\
			$\overline{\maxcirc}$ & --- & --- & $107.2 \pm 2.0$ & $52.8 \pm 9.1$ & $64.4 \pm 20.8$ & $45.1 \pm 18.6$ \\
			$\overline{\maxdiam}$ & --- & --- & $48.4 \pm 5.2$ & $14.4 \pm 2.7$ & $15.4 \pm 3.5$ & $15.2 \pm 4.3$ \\
			Iterations & --- & --- & 3 & 4 & 3 & 2 \\
			\hline
		\end{tabular}
	}
	\caption{
		Normalized mesh quantities after iterated refinements and subdivisions on $S^2$ and $S^3$, starting from regular or random configurations.
		The number of iterations depends on the method.
		We report the normalized covering radius ($\overline{\covrad}$), the normalized maximal circumradius ($\overline{\maxcirc}$), and the normalized maximal diameter ($\overline{\maxdiam}$), as defined in \Cref{subsec:normalized_mesh_size}, after the prescribed number of iterations.
		For comparison, the asymptotic reference bounds for $(\overline{\covrad},\overline{\maxcirc},\overline{\maxdiam})$ are $(2.20,2.20,3.81)$ on $S^2$ and $(1.90,1.90,3.81)$ on $S^3$.
		For each experiment, the best---i.e., lowest---values are indicated in bold.
		For the Delaunay refinements, $\overline{\covrad}$ and $\overline{\maxcirc}$ coincide by \Cref{lem:equality_covering_circumradius}.
		}
	\label{tab:mesh_size}
\end{table}

\FloatBarrier

\appendix

\section{Notation}
\label{sec:notation}

\begingroup
\renewcommand{\arraystretch}{1.15}
\begin{longtable}{
		>{\raggedleft\arraybackslash}p{0.18\linewidth}
		@{\hspace{1em}}
		p{0.76\linewidth}
	}
	\hline
	\textbf{Symbol}  & \textbf{Meaning}	\\
	\hline
	
	$\R$, $\R_{\geq 0}$, $\N$, $\N_{>0}$ & Real numbers, nonnegative part, nonnegative integers, positive integers\\
	
	$\R^d$, $B^d$, $S^{d-1}$ & Euclidean space of dimension $d$, its unit ball, its unit sphere \\
	
	$\|x\|$, $\langle x,y\rangle$ & Euclidean norm, Euclidean inner product \\
	
	$\d(x,y)$ & Geodesic distance on the sphere \\
		
	$\conv(X)$, $\aff(X)$ & Convex and affine hulls of a set $X\subset\R^d$ \\
	
	$K$, $\geomreal{K}$, $V(K)$ & Geometric simplicial complex, its geometric realization, its vertices \\
	
	$\sigma$, $\conv(\sigma)$ & Geometric simplex, its convex hull \\
	
	$\diam(\sigma)$ & Euclidean or spherical diameter of a geometric simplex \\
	
	$\maxdiam(K)$ & Maximal diameter of the simplices of a simplicial complex \\
	
	$\circum(\sigma)$ & Spherical circumradius of a simplex \\
	
	$\maxcirc(K)$ & Maximal circumradius of the facets of a simplicial complex \\
	
	$\sub(K)$ & Subdivision of a complex (barycentric or edgewise subdivision) \\
	
	$\Del{X}$ & Spherical Delaunay complex on $X\subset S^{d}$ \\
	
	$\covrad(X)$ & Covering radius of $X\subset S^{d}$ (see \Cref{eq:def_covering_radius})\\
	
	$\PP$, $\E$  & Probability and expectation \\
	\hline
\end{longtable}
\endgroup

\section{Additional proofs}
\label{sec:additional_proofs}

\subsection{Contraction factor for edgewise subdivision}
\label{subsec:additional_proofs_shrinking_classical_edgewise}

We prove the bound stated in \Cref{eq:shrinking_classical_edgewise} in the introduction.

\begin{proposition}
	Consider a $d$-simplex $\sigma=\{v_0,\dots,v_d\}\subset \R^d$ and let $\sub_k(\sigma)$ denote its $k$-fold edgewise subdivision, $k\geq2$. 
	Then
	\[
		\maxdiam(\sub_k(\sigma)) \leq \min\bigg\{1, \frac{\lfloor\frac{d+1}{2}\rfloor}{k}\bigg\} \diam(\sigma).
	\]
	The constant is optimal over all $d$-simplices.
\end{proposition}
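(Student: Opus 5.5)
Every simplex of $\sub_k(\sigma)$ lies in a single Freudenthal--Kuhn cell, so the task is to bound the distance between two vertices of the same small simplex. The plan is to compute this distance in barycentric terms, bound it, and then exhibit configurations attaining the bound.

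\textbf{Step 1: vertices of a small simplex.} Each vertex of $\sub_k(\sigma)$ is a $k$-fold barycenter $x=\frac1k\sum_i n_i v_i$ with $n\in\N^{d+1}$ and $\sum_i n_i=k$. We use the Coxeter/Kuhn description: pass to the coordinates $m_j=n_j+\dots+n_d$ for $j=1,\dots,d$, so that the vertices become the integer points of $k\Delta$ in these coordinates. Two vertices belong to a common simplex of the Freudenthal--Kuhn triangulation exactly when their $m$-vectors differ by a $0/1$ vector (all differences have the same sign, one vector dominating the other). Translated back, $n-n'=m$-differences differenced, i.e.
\[
(n-n')_i=\varepsilon_i-\varepsilon_{i+1},\qquad \varepsilon\in\{0,1\}^{d},\quad \varepsilon_0=\varepsilon_{d+1}=0 .
\]
Hence $n-n'$ has entries in $\{-1,0,1\}$, its nonzero entries alternate in sign, and it sums to zero.

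\textbf{Step 2: bounding the diameter.} For adjacent vertices $x,x'$, write
\[
x-x'=\frac1k\Bigl(\sum_{i\in P}v_i-\sum_{i\in N}v_i\Bigr),
\]
where $P$ and $N$ are the index sets of the entries $+1$ and $-1$, with $|P|=|N|=p$. Since the signs alternate among $d+1$ positions, $2p\le d+1$, so $p\le\lfloor\frac{d+1}2\rfloor$. Pair up $P$ and $N$ arbitrarily and apply the triangle inequality to obtain $\|x-x'\|\le\frac pk\diam(\sigma)$. To get the cap at $1$, observe that every vertex of $\sub_k(\sigma)$ lies in $\conv(\sigma)$. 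Thus every simplex of the subdivision lies in $\conv(\sigma)$, whose diameter is $\diam(\sigma)$, and the bound becomes $\min\{1,\lfloor\frac{d+1}2\rfloor/k\}\diam(\sigma)$. Since the diameter of a simplex is attained at a pair of vertices, this proves the inequality.

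\textbf{Step 3: optimality.} Set $p=\min\{k,\lfloor\frac{d+1}2\rfloor\}$. First choose $\varepsilon$ so that $n-n'$ alternates $+1,-1$ on positions $0,\dots,2p-1$. Then choose a base vector $n'$ that is nonnegative with sum $k$, has $n'_i\ge1$ on $N$, and stays within the admissible region; this is possible because $p\le k$. For the geometry, collapse the vertices: put $v_i=u$ for $i\in P$ and $v_i=u'$ for $i\in N$ with $\|u-u'\|=1$, and place the remaining vertices anywhere within the segment's extent. The resulting distance is $p/k$ while $\diam(\sigma)=1$, which attains the bound in the limit of nondegenerate perturbations.

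\textbf{Main obstacle.} The delicate step is Step 1: getting the combinatorial description of edges in the Freudenthal--Kuhn triangulation right, in particular the alternating-sign structure of $n-n'$. This depends on the vertex ordering convention, and I would verify it carefully against \cite{edelsbrunner2000edgewise}. A secondary check in Step 3 is that an alternating pattern with exactly $p$ pairs is actually realized by some pair of vertices lying inside $k\Delta$.
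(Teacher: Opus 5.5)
Your proposal is correct and is essentially the paper's proof in Kuhn coordinates. Your $0/1$ vector $\varepsilon=m-m'$ is the indicator of the set $\pi([i+1,j])$ in the paper's shape-vector description $W_i=\frac1k V_{\pi(i)}$. Your pairing of $P$ with $N$ plays the role of the paper's decomposition of that set into at most $\lfloor\frac{d+1}{2}\rfloor$ maximal runs $[a,b]$, each contributing $v_b-v_{a-1}$. Your extremal example is the paper's alternating two-point configuration, with the surplus multiplicity $k-p$ padded onto one index.

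Two cosmetic points. Since $\varepsilon_0=0$, the pattern of $n-n'$ starts with $-1$ rather than $+1$, so you just swap $x$ and $x'$. Your remark that the bound is only approached through nondegenerate perturbations is a useful precision, because the paper's example is degenerate too.
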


\begin{proof}
	Let $\tau=\{w_0,\dots,w_d\}$ be a facet of $\sub_k(\sigma)$ with vertices ordered.
	The inequality $\diam(\tau)\leq\diam(\sigma)$ is obvious.
	We show that $\diam(\tau)\leq\left(\lfloor\frac{d+1}{2}\rfloor/k\right)\diam(\sigma)$.
	Let $(V_i)_{i=1}^d$ denote the shape vectors of $\sigma$, i.e., 
	\[
		V_i = v_i - v_{i-1},\quad i\in\{1,\dots,d\}.
	\]
	Similarly, let $(W_i)_{i=1}^d$ be the shape vectors of $\tau$.
	We use the characterization in the proof of the Independence Lemma in \cite{edelsbrunner2000edgewise}: there exists a permutation $\pi$ of $\{1,\dots,d\}$ such that
	\[
		W_i = \frac{1}{k}V_{\pi(i)},\quad i\in\{1,\dots,d\}.
	\]
	Let $e = w_j - w_i$ be the vector associated with the edge $\{i,j\}$ of $\tau$, with $i<j$.
	We show that its length is at most $\left(\lfloor\frac{d+1}{2}\rfloor/k\right)\diam(\sigma)$.
	It can be written in terms of shape vectors of $\tau$:
	\[
		e = (w_{i+1} - w_i) + (w_{i+2} - w_{i+1}) + \dots + (w_j - w_{j-1}) = \sum_{\ell=i+1}^{j} W_\ell.
	\]
	Expressed in terms of the shape vectors of $\sigma$, it reads
	\[
		e = \frac{1}{k} \sum_{\ell=i+1}^{j} V_{\pi(\ell)}.
	\]
	Given integers $a\leq b$, let us denote the integer interval by $[a,b]=\{a,\dots,b\}$.
	We are interested in the permutation $\pi$ restricted to $[i+1,j]$.
	Its image can be partitioned into disjoint maximal intervals:
	\[
		\left\{ \pi(\ell) \mid \ell \in [i+1,j] \right\}
		= \bigcup_{\ell \in L} [a_\ell, b_\ell],
	\]
	where $L\subset[1,d]$ and
	\begin{align*}
		a_\ell \leq b_\ell \quad&\text{for } \ell \in L,\\
		b_\ell + 1 < a_m \quad&\text{for } \ell, m \in L,~\ell<m.
	\end{align*}
	
	We claim that the cardinality of $L$ is at most $\lfloor(d+1)/2\rfloor$, that is, there are at most $\lfloor(d+1)/2\rfloor$ intervals.
	Indeed, if a subset of $[1,d]$ has $N$ maximal consecutive blocks, then at least $N-1$ integers separate these blocks.
	The number of intervals plus separators is
	\[
		N + (N-1).
	\]
	Since $N + (N-1) \leq d$, we deduce the claim.
	
	Now, each interval $[a_\ell, b_\ell]$ corresponds to an edge of $\sigma$, more precisely, to
	\[
		\sum_{i=a_\ell}^{b_\ell} V_i = v_{b_\ell}-v_{a_\ell-1}.
	\]
	We write the edge $e$ as a combination of these new edges:
	\[
		e = \frac{1}{k}\sum_{\ell=i+1}^{j} V_{\pi(\ell)} = \frac{1}{k} \sum_{\ell\in L} (v_{b_\ell}-v_{a_\ell-1}).
	\]
	Each edge $v_{b_\ell}-v_{a_\ell-1}$ has length at most $\diam(\sigma)$, since it belongs to $\sigma$.
	We deduce that 
	\[
		\|e\| \leq \sum_{\ell\in L} \frac{1}{k} \|v_{b_\ell}-v_{a_\ell-1}\|
		\leq \frac{|L|}{k}\diam(\sigma).
	\]
	Since $|L|\leq \lfloor(d+1)/2\rfloor$, we deduce the result.
	
	To show that the constant is sharp, we build an example.
	Put the points $v_0,\dots,v_d$ on the real line, alternating between 0 and 1.
	Let 
	\[
		\beta = \min\left\{k,\left\lfloor\frac{d+1}{2}\right\rfloor\right\}.
	\]
	For each $1\leq i\leq \beta$, we have
	\[
	v_{2i-1}-v_{2i-2} = 1.
	\]
	We claim that $\sub_k(\sigma)$ contains the edge
	\begin{equation}
		\label{eq:proof_shrinking_classical_edgewise}
		\frac1k
		\sum_{i=1}^\beta
		\left(v_{2i-1}-v_{2i-2}\right).
	\end{equation}
	Indeed, in the color-scheme description \cite{edelsbrunner2000edgewise}, we can take the two columns
	\[
		\big(0,2,4,\dots,2\beta-2,d,\dots,d\big)
		\quad\text{and}\quad
		\big(1,3,5,\dots,2\beta-1,d,\dots,d\big),
	\]
	where we pad with $d$'s if $\beta<k$. 
	These two columns define an edge of $\sub_k(\sigma)$.
	It is precisely the one in \Cref{eq:proof_shrinking_classical_edgewise}.
	Its length is 
	\[
		\left\|
		\frac1k
		\sum_{i=1}^\beta
		\left(v_{2i-1}-v_{2i-2}\right)
		\right\|
		=
		\frac{\beta}{k}.
	\]
	This shows that, in this case,
	\[
		\maxdiam(\sub_k(\sigma)) 
		=
		\min\bigg\{1,\frac{\lfloor\frac{d+1}{2}\rfloor}{k}\bigg\} \diam(\sigma).
		\qedhere
	\]
\end{proof}

\subsection{Normalized mesh size and covering density}
\label{subsec:additional_proofs_covering_density}

We prove \Cref{lem:covering_density,lem:circumradius_diameter_density} stated in \Cref{subsec:normalized_mesh_size}.

\begin{proof}[Proof of \Cref{lem:covering_density}]
	\label{proof:covering_density}
	We use a characterization of the covering density $\vartheta(B^d)$ that follows from \cite[Lemma~2.1]{B_r_czky_2001}.
	For every closed bounded Jordan-measurable subset $J\subset\R^d$ with nonempty interior, if $N(J,r)$ denotes the smallest number of Euclidean balls of radius $r$ needed to cover $J$, then
	\begin{equation}
		\label{eq:asymptotic_euclidean_covering_number}
		\lim_{r\to0}N(J,r)\,r^d
		=
		\frac{\vol(J)}{\vol(B^d)}\vartheta(B^d).
	\end{equation}
	
	Now, let $N(S^d,r)$ denote the smallest number of spherical balls of radius $r$ needed to cover $S^d$.
	To transfer the equality above to the sphere $S^d$, we fix an $\epsilon>0$ and cover it by finitely many Jordan-measurable sets $J_1,\dots,J_n$, such that each $J_i$ is contained in a chart $\phi_i\colon U_i\to\R^d$.
	We can choose the charts to distort the volumes by at most a factor of $1+\epsilon$.
	Applying \Cref{eq:asymptotic_euclidean_covering_number} to each $\phi_i(J_i)$, with increasingly finer coverings $(J_i)_{i=1}^n$, and letting $\epsilon$ tend to zero gives
	\begin{equation}
		\label{eq:asymptotic_spherical_covering_number}
		\lim_{r\to0}N(S^d,r)\,r^d
		=
		\frac{\vol(S^d)}{\vol(B^d)}\vartheta(B^d).
	\end{equation}
	
	We now relate this quantity to the covering radius.
	For every $n\geq1$, let
	\[
		\rho_n = \inf_{\substack{X\subset S^d\\ |X|=n}}
		\covrad(X).
	\]
	The infimum is attained by compactness of $(S^d)^n$.
	Moreover, it satisfies $N(S^d,\rho_n) \leq n$.
	Let
	\[
	A_d=\frac{\vol(S^d)}{\vol(B^d)}\vartheta(B^d).
	\]
	By \Cref{eq:asymptotic_spherical_covering_number}, for every $\epsilon>0$ and all sufficiently small $r$,
	\[
	(A_d-\epsilon)r^{-d}
	\leq
	N(S^d,r)
	\leq
	(A_d+\epsilon)r^{-d}.
	\]
	Since $N(S^d,\rho_n) \leq n$, the first inequality gives
	\begin{equation}
		\label{eq:proof_covering_number1}
		A_d-\epsilon \leq n\rho_n^d.
	\end{equation}
	Conversely, taking
	\[
	r=\left(\frac{A_d+\epsilon}{n}\right)^{1/d}
	\]
	in the second inequality gives $N(S^d,r)\leq n$ for $n$ large enough.
	Hence, by definition of $\rho_n$,
	\begin{equation}
		\label{eq:proof_covering_number2}
		\rho_n\leq
		\left(\frac{A_d+\epsilon}{n}\right)^{1/d}.
	\end{equation}
	\Cref{eq:proof_covering_number1,eq:proof_covering_number2} imply
	\[
	A_d-\epsilon
	\leq
	n\rho_n^d
	\leq
	A_d+\epsilon.
	\]
	Taking the limit as $n\to\infty$ and then as $\epsilon\to0$, we get
	\[
	\lim_{n\to\infty}n^{1/d}\rho_n
	=
	A_d^{1/d}.
	\]
	That is, we have shown that
	\[
	\overline{R}_d
	=
	\liminf_{n\to\infty}
	n^{1/d}\rho_n
	=
	\left(
	\frac{\vol(S^d)}{\vol(B^d)}
	\,\vartheta(B^d)
	\right)^{1/d}.
	\qedhere
	\]
\end{proof}

\begin{proof}[Proof of \Cref{lem:circumradius_diameter_density}]
	\label{proof:circumradius_diameter_density}
	
	We prove the two relations separately.
	
	\paragraph{Equality for the maximal circumradius.}
	
	The first equality follows from the Delaunay construction.
	If $X\subset S^d$ is a sufficiently dense sample of cardinality $n$, then we can build its Delaunay triangulation $\Del{X}$.
	By \Cref{lem:equality_covering_circumradius}, the covering radius of $X$ coincides with the maximal circumradius of $\Del{X}$.
	Taking the limit as $n$ tends to infinity yields $\overline{C}_d\leq \overline{R}_d$.
	
	Conversely, suppose that $K$ is a triangulation of $S^d$.
	We claim that
	\begin{equation}
		\label{eq:circumradius_diameter_density_proof}
		\covrad(V(K))\leq\maxcirc(K).
	\end{equation}
	Indeed, every point $y\in S^d$ lies in a spherical simplex. 
	That is, there exists $\sigma=\{v_0,\dots,v_d\}\subset S^d$ and a scalar $r>0$ such that $ry\in\conv(\sigma)$. 
	Let $(\lambda_0,\dots,\lambda_d)$ be the (Euclidean) barycentric coordinates of $ry$ in $\sigma$.
	Let $w$ and $\Theta$ be the spherical circumcenter and circumradius of $\sigma$.
	They satisfy $\langle v_i,w\rangle=\cos(\Theta)$ for every vertex $v_i$.
	Therefore
	\begin{align*}
		\langle y,ry\rangle &= \sum_{i=0}^d \lambda_i \langle y, v_i\rangle,\\
		\langle w, ry\rangle &= \sum_{i=0}^d \lambda_i \langle w, v_i\rangle = \cos(\Theta).
	\end{align*}
	Moreover, $\langle y,ry\rangle\geq\langle w, ry\rangle$ since both $y$ and $w$ have norm 1.
	This shows that the terms $\langle y, v_i\rangle$ on the right-hand side of the first equation cannot all be less than $\cos(\Theta)$.
	That is, $y$ is within spherical distance $\Theta$ from a vertex $v_i$.
	Since this holds for every $y\in S^d$ with $\Theta\leq \maxcirc(K)$, we deduce \Cref{eq:circumradius_diameter_density_proof}.

	\Cref{eq:circumradius_diameter_density_proof} is valid for all triangulations with $n$ vertices.
	When $n$ goes to infinity, we obtain $\overline{C}_d\geq \overline{R}_d$.
	Combined with $\overline{C}_d\leq \overline{R}_d$, we deduce $\overline{C}_d=\overline{R}_d$.
	
	\paragraph{Inequality for the maximal diameter.}
	
	The first inequality $\overline{R}_d/J_d\leq\overline{D}_d$ comes from Jung's theorem: a Euclidean $d$-simplex of diameter $\delta$ has Euclidean miniradius at most $J_d\,\delta$, where by miniradius we mean the radius of the minimum enclosing ball.
	But we need a spherical version of this result.
	Let $\sigma=\{v_0,\dots,v_d\}\subset S^d$ be a sufficiently small spherical simplex, and let $\Delta$ and $\Theta$ denote its spherical diameter and miniradius.
	Its Euclidean diameter is
	\[
		\delta=2\sin\left(\frac{\Delta}{2}\right).
	\]
	Let $\theta$ be the radius of the Euclidean minimum enclosing ball of $\sigma$. 
	Jung's theorem gives
	\[
		\theta
		\leq J_d\delta
		= 2J_d\sin\left(\frac{\Delta}{2}\right).
	\]
	On the other hand, the Euclidean and spherical miniradii are related by
	\[
	\theta=\sin(\Theta).
	\]
	The spherical Jung's theorem follows:
	\[
		\Theta
		\leq
		\arcsin\left(
		2J_d\,\sin\left(\frac{\Delta}{2}\right)
		\right).
	\]
	For small simplices, we obtain
	\[
		\Theta \leq J_d\,\Delta+o(\Delta).
	\]
	
	Now, for a fine triangulation $K$ of $S^d$, we apply the inequality facet by facet.
	It gives
	\[
		\covrad(V(K)) \leq J_d\,\maxdiam(K) + o(\maxdiam(K)).
	\]
	Indeed, the covering radius is at most the maximal (spherical) miniradius of the facets.
	By taking the limit as the number of vertices goes to infinity, we obtain
	\[
	\overline{R}_d\leq J_d\,\overline{D}_d.
	\]
	
	For the second inequality, we use Delaunay complexes.
	Given a subset $X\subset S^d$ with covering radius $\covrad(X)$, we have seen in \Cref{lem:equality_covering_circumradius} that $\maxcirc(\Del{X})=\covrad(X)$.
	Thus its simplices have diameter at most
	\[
		\maxdiam(\Del{X}) \leq 2\maxcirc(\Del{X}) = 2\covrad(X).
	\]
	We deduce that $\overline{D}_d\leq2\,\overline{R}_d$.
\end{proof}

\bibliographystyle{plainurl}
\bibliography{delaunay_refinement}

\end{document}